\newtheorem{question}{Question}
\newcommand{\local}{{\cal LOCAL}}
\newcommand{\congest}{\mathcal{CONGEST}}
\newcommand{\Label}{\mbox{\tt Label}}
\newcommand{\id}{\mbox{\tt ID}}
\newcommand{\NULL}{\textsc{null}}
\def\denseformat{
\setlength{\textheight}{9.5in}
\setlength{\textwidth}{6.9in}
\setlength{\evensidemargin}{-0.3in}
\setlength{\oddsidemargin}{-0.3in}
\setlength{\headsep}{10pt}
\setlength{\topmargin}{-0.44in}
\setlength{\columnsep}{0.375in}
\setlength{\itemsep}{0pt}
}
\newtheorem{theorem}{Theorem}[section]
\newtheorem{claim}[theorem]{Claim}
\newtheorem{lemma}[theorem]{Lemma}
\newtheorem{corollary}[theorem]{Corollary}
\newtheorem{observation}[theorem]{Observation}
\def\boldhead#1:{\par\vskip 7pt\noindent{\bf #1:}\hskip 10pt}
\def\ithead#1:{\par\vskip 7pt\noindent{\it #1:}\hskip 10pt}
\def\floor#1{\lfloor #1\rfloor}
\def\inline#1:{\par\vskip 7pt\noindent{\bf #1:}\hskip 10pt}
\def\midinline#1:{\par\noindent{\bf #1:}\hskip 10pt}
\def\dnsinline#1:{\par\vskip -7pt\noindent{\bf #1:}\hskip 10pt}
\def\ddnsinline#1:{\newline{\bf #1:}\hskip 10pt}
\def\largeinline#1:{\par\vskip 7pt\noindent{\large\bf #1:}\hskip 10pt}
\long\def\commhide #1\commhideend{}
\long\def\commtim #1\commendt{#1}
\long\def\commb #1\commbend{}
\long\def\commedit #1\commeditend{} % Editing comments, marked also by $>>>$
\long\def\commB #1\commBend{}       % Omit in 1996 (both TR and Siena)
\long\def\commex #1\commexend{}     % LN home exercise (hide solutions)
\long\def\commsiena #1\commsienaend{}  % omit in Siena, show in TR
\long\def\commBI #1\commBIend{}  % omit in Bar-Ilan
\long\def\CProof #1\CQED{}
\def\blackslug{\hbox{\hskip 1pt \vrule width 4pt height 8pt
    depth 1.5pt \hskip 1pt}}
\def\QED{\quad\blackslug\lower 8.5pt\null\par}
\long\def\PPP#1{\noindent{\bf Proof:}{ #1}{\quad\blackslug\lower 8.5pt\null}}
\long\def\denspar #1\densend
\newif\ifnotesw\noteswtrue% T to show box & marginal notes; F supresses.
\ifnotesw\marginpar[\hfill\(\top\)]{\(\top\)}\fi}%
\ifnotesw\marginpar[\hfill\(\bot\)]{\(\bot\)}\fi}
\newcommand{\mnote}[1]%
    {\ifnotesw\marginpar%
        [{\scriptsize\it\begin{minipage}[t]{\marginparwidth}
        \raggedleft#1%
                        \end{minipage}}]%
        {\scriptsize\it\begin{minipage}[t]{\marginparwidth}
        \raggedright#1%
                        \end{minipage}}%
    \fi}
\def\MathF{\hbox{\rm I\kern-2pt F}}
\def\MathP{\hbox{\rm I\kern-2pt P}}
\def\MathR{\hbox{\rm I\kern-2pt R}}
\def\MathZ{\hbox{\sf Z\kern-4pt Z}}
\def\MathN{\hbox{\rm I\kern-2pt I\kern-3.1pt N}}
\def\MathC{\hbox{\rm \kern0.7pt\raise0.8pt\hbox{\footnotesize I}
\kern-4.2pt C}}
\def\MathQ{\hbox{\rm I\kern-6pt Q}}
\newsavebox{\ttop}\newsavebox{\bbot}
\def\eps{\epsilon}
\newcommand {\ignore} [1] {}
\begin{document}

\title{Dynamic Representations of Sparse Distributed Networks: \\A Locality-Sensitive Approach}
\author{
Haim Kaplan \thanks{School of Computer Science, Tel Aviv University, Tel Aviv 69978, Israel.
E-mail: {\tt haimk@post.tau.ac.il}.}
\and Shay Solomon \thanks{IBM Research.
%Computer Science Department, Stanford University. Partially supported by the Rothschild Fellowship and the Fulbright Fellowship.
E-mail: {\tt solo.shay@gmail.com}.}}

\date{\empty}

\begin{titlepage}
\def\thepage{}
\maketitle

\begin{abstract}
In 1999, Brodal and Fagerberg (BF) gave an algorithm for maintaining a low outdegree orientation of a  dynamic uniformly sparse graph.
Specifically, for a dynamic graph on $n$-vertices, with arboricity bounded by $\alpha$ at all times,
the BF algorithm supports edge updates in $O(\log n)$ amortized update time, while keeping the maximum outdegree in the graph bounded by $O(\alpha)$.
Such an orientation
provides a basic data structure for uniformly sparse graphs, which found applications to several dynamic graph algorithms,
including adjacency queries and labeling schemes, maximal and approximate matching, approximate vertex cover, forest decomposition, and distance oracles.

A significant weakness of the BF algorithm is the possible \emph{temporary} blowup of the maximum outdegree, following edge insertions.
Although BF eventually reduces all outdegrees to $O(\alpha)$, some vertices may reach an outdegree of $\Omega(n)$ during the process,
hence local memory usage at the vertices, which is an important quality measure in distributed systems, cannot be bounded.
We show how to modify the BF algorithm to guarantee that the outdegrees of all vertices are bounded by $O(\alpha)$ at all times, without hurting any
of its other properties, and present an efficient distributed implementation of the modified algorithm.
This provides the \emph{first} representation of distributed networks in which the local memory usage at all vertices is bounded by the arboricity
(which is essentially the average degree of the densest subgraph) rather than the maximum degree.

For settings where there is no strict limitation on the local memory,
one may take the temporary outdegree blowup to the extreme and allow a permanent outdegree blowup.
This allows us to address the second significant weakness of the BF algorithm -- its inherently \emph{global} nature:
An insertion of an edge $(u,v)$ may trigger changes in the orientations of edges that are arbitrarily far away from $u$ and $v$.
Such a non-local scheme may be prohibitively expensive in various practical applications.
We suggest an alternative \emph{local} scheme, which does not guarantee any outdegree bound on the vertices,
yet is just as efficient as the BF scheme for some of the aforementioned applications.
For example, we obtain a local dynamic algorithm for maintaining a maximal matching with sub-logarithmic update time in uniformly sparse networks,
providing an exponential improvement over the state-of-the-art in this context.
We also present a distributed implementation of this scheme and some of its applications.
\end{abstract}
\end{titlepage}

\pagenumbering {arabic}

\section{Introduction} \label{intro}

\subsection{Quality measures in distributed computing} \label{sec11}
%{\bf 1.1~ Quality measures in distributed computing.~}
The $\local$ and the $\congest$ models are perhaps the two most fundamental communication models in distributed computing (cf.~\cite{Peleg00}),
the former is the standard model capturing the essence of spatial locality, and the latter also takes into account congestion limitations.
In these models it is assumed that initially all the processors wake up simultaneously,
and that computation proceeds in fault-free synchronous rounds during which every processor exchanges messages with its direct neighbors in the network.
%a single message along each of its adjacent edges, in both directions.
%two messages along each of its adjacent edges, one in each direction.
In the $\local$ model these messages are of unbounded size, whereas in the $\congest$ model each message contains $O(\log n)$ bits.
%Note that after $r$ communication rounds, each processor may gather information only regarding processors that are within distance $r$ from it, even in the $\local$ model.
An efficient distributed algorithm allows the nodes to communicate with their direct neighbors for a small number of rounds,
after which they need to produce their outputs, which are required to form a valid global solution.
A task is called \emph{local} if the number of rounds needed for solving it is constant.
The locality of many distributed tasks have been studied in the past two decades, with the emerging conclusion that truly local tasks are rather scarce.
%More generally, the number of rounds needed for solving a task captures how local it is.
%hence a local task requires us to reach a global solution of a possibly huge network by providing each processor with a very limited information on the network.

Another important locality measure is the local \emph{memory usage} at a processor.
The standard premise is that each processor may communicate with \emph{all} its neighbors by sending and receiving messages.
To this end, the local memory usage at a processor should be proportional to (and at least linear in) its degree.
Reducing the local memory at processors to be independent of their degree could be of fundamental importance for many real-life applications.
In fact, the processors in a distributed network are in many cases \emph{identical},
%thus their local memory should be proportional to the \emph{maximum degree} in the network rather than their actual degree.
%In such applications
thus the local memory at low degree processors is not proportional to their degree but rather to the \emph{maximum degree} in the network.
Moreover, in sparse networks (such as planar networks), the maximum degree may be $n-1$ while the average degree is constant,
so the global memory (over all processors) will be blown up by a factor of $n$ if all the processors are identical.
(In dynamic networks, on which we focus here, this factor $n$ blow-up may occur even if the processors are not identical.)
%and it may exceed the network size by a factor of $n$.
Low-degree spanners have been used to reduce local memory usage at processors,
which was proved useful for a plethora of applications, such as
efficient broadcast protocols, data gathering and dissemination tasks in overlay networks, compact routing schemes, network synchronization, computing global functions
\cite{Awerbuch85,PU89,ABP90,ABP91,Peleg00}.
%            For example, the degree of the spanner is what determines local memory constraints when using
%            spanners to construct network synchronizers  and efficient broadcast protocols.  % \cite{ABP90,ABP91}.
  %          In compact routing schemes, the use of low degree spanners enables the routing tables to be of small size.
However, for the vast majority of distributed tasks, such as maximum independent set  and coloring, the global solution must consider
\emph{all edges} of the network and not just the spanner edges.

The total number of messages needed for solving a distributed task is another fundamental quality measure in distributed computing,
which we will also consider in the sequel.
%Although it is not a measure of locality, we will consider it in the sequel.
%For example, a legal coloring for the spanner is not necessarily valid for the entire network.
%\vspace{7pt}
%\\
%\noindent
%One of the objectives of low-degree spanners is to reduce the local memory usage at processors, but such spanners are not useful for most applications.
\subsection{The dynamic distributed setting} \label{sec12}
The dynamic distributed model is defined as follows. Starting with the empty graph $G_0=(V,E_0=\{\emptyset\})$, in every
round $t>0$, the adversary chooses a vertex or an edge to be either inserted to or deleted from $G_{t-1}$, resulting in $G_t$.
(As a result of a vertex deletion, all its incident edges are deleted. A vertex is inserted without incident edges.)
%insertion does not entail any insertions of edges incident on it.)
Upon the insertion or deletion of a vertex $v$ or an edge $e=(u,v)$, an update procedure is invoked, which should restore the validity of the solution being maintained.
For example, if we maintain a maximal matching, then following the deletion of a matched edge the matching is no longer maximal, and the update procedure should restore maximality. We shall consider the most natural model in this setting, hereafter the \emph{local wakeup model} (cf.\ \cite{PS16,PPS16,CHK16,AOSS18}),
%In the local wakeup model,
where only the affected vertices wake up (following an update to a vertex $v$, only $v$ wakes up; following an edge update $(u,v)$, both $u$ and $v$ wake up).
The update procedure proceeds in fault-free synchronous rounds during which every processor exchanges messages with its neighbors, just as in the static setting, until finishing its execution.
%Although both of these models are of practical importance,
%We will focus here on the local wakeup model, as it is prohibitively expensive to
%implement the global wakeup assumption in real-life situations.

In the distributed dynamic setting, the \emph{amortized update time} and \emph{amortized message complexity} bound the \emph{average} number of communication rounds
and messages sent, respectively, needed to update the solution per update operation, over a \emph{worst-case} sequence of updates.
The \emph{worst-case update time} and \emph{worst-case message complexity} is the maximum number of communication rounds and messages sent,
again over a worst-case sequence of updates.

We assume that the topological changes occur serially and are sufficiently spaced so that the protocol has enough time to complete its operation before the occurrence of the next change.
Since all our algorithms can be strengthened to achieve a worst-case update time of $O(\log n)$ (and in some cases even $O(1)$),
this assumption should be acceptable in many practical scenarios.
Moreover, the same assumption has been made also in previous works; see, e.g., \cite{PS16,PPS16,CHK16,AOSS18},  and the references therein.
We   remark that our focus is on optimizing \emph{amortized} rather than worst-case bounds, which may provide another justification for making this assumption.
%Also, it can be removed at the cost of increasing our time and message bounds by a factor of $k$, where $k$ is a bound on the number of concurrent updates.
%\vspace{7pt}
%\\
%\noindent
\subsection{Representations of sparse networks via dynamic edge orientations} \label{sec13}
%\vspace{7pt}
%\\
%\noindent
\subsubsection{Centralized networks} \label{sec131}
%Representations of sparse networks via dynamic edge orientations.~}\\
A graph $G=(V,E)$ has \emph{arboricity} $\alpha$ if $\alpha=\max_{U\subseteq V}\left\lceil\frac{|E(U)|}{|U|-1}\right\rceil$,
where $E(U)=\left\{(u,v)\in E\mid u,v\in U\right\}$. Thus the arboricity is close to the maximum
\emph{density} ${|E(U)|}/{|U|}$ over all induced subgraphs
%$G[U] = (U,E(U))$
of $G$.
While a graph of bounded arboricity is \emph{uniformly sparse}, a graph of bounded density (i.e., a sparse graph) may contain a dense subgraph
(e.g., on $\sqrt{m}$ of the vertices), and therefore may have large arboricity.
%In a distributed system, the density bound should hold \emph{locally} (in all subgraphs), as captured by the arboricity, rather than globally (in the entire graph), because the %algorithm's bottleneck will result from the densest subgraph.
%bounds the density of the graph \emph{locally} the graph's inherent density. [[H: previous sentence is redundant, I'd remove] % is a more natural baseline for [[S: ...]]
%in other words, the arboricity parameter is much more informative than the density.//
The family of bounded arboricity graphs
%consists of all \emph{uniformly sparse} graphs. It
contains planar and bounded genus graphs, bounded tree-width graphs, and in general all graphs excluding fixed minors.

One of the most fundamental questions in data structures is to devise efficient representations of graphs supporting \emph{adjacency queries}:
Given two vertices $u$ and $v$, is there an edge between them in the $n$-vertex graph $G = (V,E)$?
Using an adjacency matrix (of size $\Theta(n^2)$) one can support such queries in $O(1)$ time. In sparse graphs, however, a quadratic-space data structure
seems very wasteful.
If one uses adjacency lists instead, the space is reduced to $O(|E|)$, but then adjacency queries may require $\Theta(n)$ time.
By maintaining these adjacency lists sorted, the worst-case query time can be reduced to $O(\log n)$, but no further than that, even in sparse graphs.
Another approach is to use hashing, which guarantees linear space and constant query time, but alas it requires randomization,
otherwise the construction time is super-linear.
While some of these data structures have linear (in the graph size) space usage, none of them can bound the \emph{local} space usage (per vertex).

In a pioneering paper from 1999, Brodal and Fagerberg (BF) \cite{BF99} devised a data structure for adjacency queries in uniformly sparse graphs that is based on \emph{edge orientations}.
Specifically, an \emph{arboricity $\alpha$ preserving sequence}
is a sequence of edge insertions and deletions starting from an empty graph, in which the arboricity of the dynamic graph is bounded by $\alpha$ at all times.
For any arboricity $\alpha$ preserving sequence, the BF algorithm has an amortized time update time of $O(\log n)$,
while keeping the maximum outdegree in the graph bounded by $\Delta = O(\alpha)$.
(The BF algorithm can, in fact, handle vertex updates within the same asymptotic bounds, where $n$ stands for the current number of vertices.)
Such an edge orientation, which is called a \emph{$\Delta$-orientation},
allows to support adjacency queries in $O(\alpha)$ worst-case time,
thus providing a significant improvement over the known data structures in graphs of sufficiently low arboricity.
%BF provided a general tradoeff between the outdegree and

BF also showed that the amortized time of their algorithm is asymptotically optimal.
Specifically, let $\alpha,\delta$ and $\Delta$ be three arbitrary integers satisfying $\alpha \ge 1, \delta = \Omega(\alpha), \Delta = \Omega(\delta)$,
and suppose one can maintain a $\delta$-orientation for some sequence of $t$ edge updates while doing $f$ edge flips, starting with the empty graph.
(We omit the constants hidden in the $\Omega$ notation above and the $O$ notation to follow.)
Then the BF algorithm on this update sequence with an outdegree parameter $\Delta$ maintains a $\Delta$-orientation with a total runtime (and thus number of edge flips) of $O(t+f)$.

Recently, there has been a growing interest in the edge orientation problem, due to its applications to additional dynamic graph problems.
See App.\ \ref{discuss} for additional results on this problem and some of its applications.
%of \cite{BF99}
%most notably, to maintaining maximal matching.
%\vspace{8pt}
%\\
\subsubsection{Distributed networks} \label{sec132}
There is a close connection between low outdegree orientations and the forest decomposition problem,
where one aims to decompose the edges of a graph $G$ into a small number of (rooted) forests.
Obviously, a decomposition of a graph into $\ell$ forests immediately yields an $\ell$-orientation.
The other direction is also true \cite{PPS16}: An $\ell$-orientation yields a decomposition into at most $2\ell$ forests.
Also, a dynamic maintenance of the former can be translated into a dynamic maintenance of the latter with a constant overhead in the update time,
in both centralized and distributed settings \cite{PPS16}.

\cite{BE10} studied the forest-decomposition problem in the distributed static setting.
They showed that for a network $G$ with arboricity $a(G)$ and any $q>2$, there exists a distributed algorithm that computes a decomposition of $G$ into at most $((2+q) \cdot a(G))$ forests (and hence also a $((2+q) \cdot a(G))$-orientation) in
$O(\frac{\log n}{\log q})$ rounds. (This result was refined recently by \cite{GS17}.)
\cite{BE10} also showed that given such a forest decomposition (or an edge orientation), one can compute an $O(q \cdot a(G)^2)$-vertex coloring for $G$ in $O(\log^* n)$ more rounds. Using this coloring an MIS can be computed in $O(q \cdot a(G)^2)$ rounds.
More generally, low outdegree orientations lead to sublinear-time algorithms for vertex and edge coloring, MIS, and maximal matching in distributed networks of bounded arboricity.
(See Chapters 4 and 11.3 in  \cite{BE13} for more details.)

For the dynamic distributed model, \cite{PPS16} devised a distributed algorithm for maintaining
$O(\alpha + \log^* n)$-orientation in $O(\log^* n)$ amortized update time. They then used this orientation to maintain
within the same time a decomposition into $O(\alpha + \log^* n)$ forests and also an adjacency labeling scheme with label size $O(\alpha + \log^* n)$.
They used the same approach to get distributed algorithms for maintaining $O(\alpha \cdot \log^* n)$-coloring and other related structures with the same $O(\log^* n)$
update time. Although the distributed algorithm of \cite{PPS16} has a low amortized update time,
it incurs a polynomial (in the network size) bound on three important parameters: (1) the amortized message complexity, (2) the local memory usage at processors, and (3) the messages size. In particular, the algorithm of \cite{PPS16} cannot be implemented in the $\congest$ model.
%(1) The amortized message complexity may be polynomial in the network size.
%(2) The local memory usage at processors is polynomial in the network size.
%(3)
%There is a strong separation between networks of arboricity 1 and higher arboricity networks.
%More specifically, for graphs with arboricity larger than 1, the algorithm of \cite{PPS16} upcasts the required information concerning a possibly large subgraph
%towards some root vertex, let that root solve \emph{locally} the edge orientation problem for that subgraph, and then broadcast the solution to the entire subgraph.
%Besides blowing up the local memory at vertices, this solution uses large messages, hence it cannot be implemented in the $\congest$ model.

While the distributed algorithms of \cite{BE10} can be implemented in the $\congest$ model,
they are static, and as such, their message complexity must be at least linear in the size of the network.
Moreover, unless there is some underlying representation of the network,
for an algorithm to solve any nontrivial task from scratch, any processor must communicate with each of its neighbors at least once.
%Moreover, each processor communicates with all its neighbors in the network as part of the computation,
Hence the local memory usage at processors, which should be at least linear in the maximum degree for some processors,
may be larger than the arboricity bound $\alpha$ by a factor of $n / \alpha$.
%In this work we provide the first
\vspace{6pt}
\\
{\bf A fundamental question.~} Can one use $O(\alpha)$-orientations to obtain a representation of a dynamic distributed network with a local memory usage of $O(\alpha)$?
%Our representation is derived from a distributed algorithm for maintaining a low outdegree orientation.
We first argue that a distributed implementation of the BF algorithm cannot achieve this.
%sthe data structure of BF for low outdegree orientation in a nontrivial way.
%However,
%A natural strategy would be to provide a distributed implementation of the centralized BF algorithm.
%A fundamental result of Linial \cite{Linial92}
%states that any $O(\log_d n)$-round algorithm for coloring $d$-regular $n$-vertex \emph{trees} uses $\Omega(\sqrt{d})$ colors.
%This lower bound was generalized  in \cite{BE10} for graphs with higher arboricity:
%Any distributed algorithm for computing
%$O(q \cdot a(G)^2)$-coloring requires $\Omega(\frac{\log n}{\log q + \log a(G)})$ rounds,
%where $a(G) = O(n^{1/4})$ and $q = O(n^{1/4}/a(G))$.
%Moreover,
%the number of rounds required to compute
%a decomposition into at most $O(q \cdot a(G))$ forests (or an $O(q \cdot a(G))$-orientation) is $\Omega(\frac{\log n}{\log q + \log a(G)}) - O(\log^* n)$.
%Moreover,
%In particular, none of these problems can be solved statically in sub-logarithmic time, even in trees!
%\vspace{8pt}
%\\
%Moreover, if the update time $T$ for maintaining the $\Delta$-orientation holds in the worst-case,
%then the update time $O(T + \Delta)$ for maintaining a maximal matching will also hold in the worst-case.
%including maximal matching, approximate maximum matching and vertex cover, and distance oracles.
%{\bf 1.2~ Outdegree blowup.~}
Indeed, a significant weakness of the BF algorithm is the possible \emph{temporary} blowup of the maximum outdegree, following edge insertions.
More specifically, following an insertion of edge $(u,v)$ that is oriented from $u$ to $v$, the outdegree of $u$ may exceed the  threshold $\Delta$.
To restore a valid $\Delta$-orientation, the BF algorithm \emph{resets} $u$, thereby flipping all its outgoing edges.
As a result, the former  \emph{out-neighbors (outgoing neighbors)} of $u$ increase their outdegree.
All such neighbors whose outdegree now exceeds $\Delta$ are then handled in the same way, one after the other,
and this process is repeated until all vertex outdegrees are $\le \Delta$.
BF used an elegant potential function argument to show that this process not only terminates,
but also leads to an asymptotically optimal algorithm (as mentioned before).
%The total number of edge flips performed by BF while serving an update sequence $\sigma$ is within
%a constant factor  of the total number of edge flips performed by any offline algorithm that maintains
%a $\delta$-orientation for $\delta \le \Delta/2$ throughout $\sigma$. [[S: we already mentioned optimality]]
Although BF eventually reduces all outdegrees to $\le \Delta$, some of these outdegrees may blow up throughout the reset cascade all the way to $\Omega(n)$.
%take \emph{linear time} in the worst-case for this to happen.
%What happens until the outdegrees reduce to $\le \Delta$?

To implement the BF algorithm with local memory usage of $O(\alpha)$,
%Since the \emph{cascade} of resets following an edge insertion may take a long time in the worst-case,
%one may want to be able to perform adjacency queries while the cascade is being carried out.
%However, for these queries to be efficient
the orientation should remain a $\Delta$ (or close to $\Delta$)-orientation throughout the reset cascade.
We show that this is not the case unless the graph is of arboricity $1$.
Specifically, we show that
for dynamic forests ($\alpha = 1$), the BF algorithm never increases the outdegree of a vertex beyond $\Delta+1$,
but there exist graphs of arboricity $2$ for which the BF algorithm blows up the outdegree of some vertices to $\Omega(n)$!
Hence, a distributed implementation of the BF algorithm requires a huge local memory usage.
The algorithms of \cite{KKPS14,HTZ14}, with a worst-case update time, never increase the outdegree of a vertex beyond the specified threshold.
%(If these algorithms were to blow up the outdegree of a vertex,
% reducing it back to $\Delta$
%then it would contradict the worst-case time guarantee.)
However, the tradeoffs between the outdegree and update time provided by these algorithms are significantly inferior to the BF tradeoff.
%these results only qualify as partial solutions to Question \ref{q3}.
In particular, for graph of constant arboricity, the outdegree should remain constant at all times,
and the algorithms of \cite{KKPS14,HTZ14} cannot provide outdegree better than $\Omega(\log n/ \log\log n)$. (See App.\ \ref{discuss} for more details.)
%This outdegree blowup during the reset cascade does not
%allow efficient adjacency queries throughout the cascade and does not
% allow to bound the local memory usage at the vertices, which is an important quality measure in various applications.
%A natural question is whether the BF algorithm controls the outdegrees of vertices during all this time.
%A negative answer to this question would imply that adjacency queries cannot be supported efficiently until the insertion process terminates.
%Furthermore, local memory usage at vertices, which is an important quality measure in various applications, could not be bounded.
%%and if not, then what is the upper bound on the maximum outdegree during the entire execution of the BF algorithm.
%%W, which was disregarded by BF and by the other papers analyzing the BF algorithm \cite{BF99,Kowalik07,HTZ14},
%We show a strong separation in this context between graphs of arboricity 1 and graphs of higher arboricity:
%For dynamic forests ($\alpha = 1$), the BF algorithm never increases the outdegree a vertex beyond $\Delta+1$,
%but there exist graphs with arboricity 2 for which the BF algorithm blows up the outdegree of some vertices to $\Omega(n)$.
%The following basic question arises.

We remark that the reset cascade of the BF algorithm is inherently sequential,
and it is unclear if it can be distributed efficiently
%it admits an efficient distributed implementation
even regardless of local memory constraints.
A similar issue arises with the worst-case update time algorithms of \cite{KKPS14,HTZ14}.  %, with a worst-case update time.
%In fact, no efficient distributed implementation of the BF algorithm is known, even regardless of the local memory constraint.
\vspace{4pt}
%\begin{mdframed}[hidealllines=true,backgroundcolor=gray!25]
\vspace{-4pt}
\begin{question} \label{q3}
Is there an algorithm with the same optimal (up to constants) tradeoff of BF between the outdegree $\Delta$ and the amortized cost,
which guarantees that the outdegree of all vertices is always $O(\Delta)$?
Furthermore, can this algorithm be distributed efficiently with a local memory usage of $O(\Delta)$?
\end{question}
%\vspace{-4pt}
%\end{mdframed}
\vspace{6pt}
{\bf Our contribution.~}
Our first attempt towards answering Question \ref{q3} is by making a natural modification to the BF algorithm:
Instead of resetting vertices of outdegree larger than $\Delta$ at an arbitrary order, we always choose
to reset next, the vertex of largest outdegree among all vertices of outdegree larger than $\Delta$.
We show that with this modification the algorithm of BF keeps the outdegrees
 $O(\Delta \log(n/\Delta))$ at all times.
%Interestingly, this upper bound on the outdegree matches (up to a constant) the bound in \cite{HTZ14}
%(though the amortized update time of the algorithm is much better than the worst-case bound in \cite{HTZ14}).
We also complement this upper bound with a matching lower bound,
showing that
the BF algorithm together with this modification can indeed generate vertices of outdegree $\Omega(\Delta \log(n/\Delta))$
during the reset cascade, and this can happen even in  graphs of arboricity 2.
 % and even after making another natural modification in which we always orient a newly inserted edge
%rom its incident vertex of lower outdegree to the one of larger outdegree.
%Paying this extra logarithmic factor here may only qualify as a partial solution to Question \ref{q3},
%and it should not be viewed as a minor overhead. (Indeed, the BF algorithm closes Question \ref{q1} up to a logarithmic factor
%either in the outdegree or in the amortized cost. Similarly, The algorithm of \cite{KKPS14} closes Question \ref{q2}
%up to a factor of $O((\log n /\log\log n)^2)$ in both the outdegree and the amortized cost.) [[S: rewrite]] [[H: I suggest to remove]]
This modification  does not resolve Question \ref{q3}, as the outdegree may blow-up by a logarithmic factor during the cascade,
and more importantly, %although
it seems unlikely that the algorithm with this modification can be distributed efficiently.
%in a distributed setting.
%such that the update time does not increase  by more than a constant factor.
%it is unclear how to achieve an efficient distributed algorithm.

To resolve Question \ref{q3} we first give a new \emph{centralized} algorithm, which is inherently different than the BF algorithm, and keeps the outdegree bounded by
 $O(\Delta)$ at all times.
In contrast to the BF algorithm, our algorithm does not apply a cascade of reset operations on vertices whose outdegree exceeds $\Delta$ following an insertion.
Note that any reset operation on some vertex ``helps'' that particular vertex but ``hurts'' its out-neighbors.
%which as discussed above may increase the outdegree of a vertex beyond $\Delta$.
Instead, our algorithm first collects a set of vertices of relative high outdegree that would ``benefit'' from being reset.
Then it works on the graph $G^*$ induced by the outgoing edges of these vertices in a somewhat opposite manner to the BF algorithm.
More specifically, it applies a cascade of ``anti-reset'' operations on vertices of outdegree significantly smaller than $\Delta$,
where an anti-reset on a vertex flips all its incoming edges to be outgoing of it.
%So instead of reducing the outdegree of vertices to zero, we reduce their incoming outdegree to zero, and
In other words, vertices in our algorithm are being helpful to their neighbors rather than hurtful as before.
The cascade of ``anti-reset'' operations leads to a low outdegree orientation within the subgraph $G^*$,
% induced by
%In this way we compute a new
%a $2\alpha$-orientation of the graph induced by the outgoing edges of these vertices,
%and finally it flips these edges
%in a particular order to guarantee that
but it also makes sure that the outdegree of all vertices would never exceed $\Delta + 1$ in the entire graph throughout the process.  % at any time during the operation.
%For any flipping order of the edges,  the outdegree would not increase above $\Delta + 2\alpha + 1$ at any time during the operation.
%Using a more careful flipping order, we make sure that the outdegree never exceeds $\Delta + 1$.
We show that our algorithm has the same (up to a constant factor) tradeoff of BF between the outdegree and amortized cost.
This is nontrivial, since the potential function argument of BF relies heavily on the gain of \emph{any} reset operation to the potential value.
Roughly speaking, that argument compares the current orientation to an optimal orientation, where all edges but $\approx \alpha$ must be incoming to any vertex, and so the potential must be reduced after resetting a vertex of outdegree much larger than $\alpha$. This argument, alas, does not carry over to anti-resets.
The argument that we provide is based on a global consideration (of the total potential gain of all anti-resets) rather than on a local consideration (of each reset).
We also demonstrate that this approach of replacing resets with anti-resets facilitates efficient distributed implementation,
as we can perform all the anti-resets in parallel, without worrying about the neighbors' outdegrees.
%In contrast to
%may require some effort
%to implement in a distributed environment.

In this way we resolve Question \ref{q3} in the affirmative, providing a distributed algorithm for maintaining $\Delta$-orientation with the optimal (up to constants, w.r.t.\ $\Delta$)
%$O((\alpha \log n)/ \Delta)$
amortized cost,
with a local memory usage of $O(\Delta)$, for any $\Delta = \Omega(\alpha)$.
Moreover, the amortized cost bounds not just the amortized update time of our algorithm but also its amortized message complexity.
Our algorithm uses short messages, and can thus be implemented in the $\local$ model.
As immediate consequences, we can maintain forest decomposition and adjacency labeling schemes with the same bounds as above,
thereby significantly improving \cite{PPS16}. (Recall that the algorithm of \cite{PPS16} incurs polynomial bounds on the amortized message complexity, local memory usage at processors, and messages size.)
%We can also use this $\Delta$-orientation to obtain efficient algorithms for coloring and MIS in bounded arboricity graphs, again with a local memory usage of $O(\alpha)$,
%but these results hold only under the global wakeup assumption, which is outside the scope of this work.  %not the focus of this work (hence details are omitted).

A low outdegree orientation does not provide information on the incoming neighbors of a vertex.
Hence, although it finds applications as discussed above, it cannot be viewed as a complete representation of the network.
To obtain a complete representation of the network, we distribute the information on the incoming neighbors of any vertex $v$ within the local memory of these neighbors.
In this way we can guarantee that the local memory usage remains $O(\Delta)$, yet each vertex can scan its incoming neighbors upon need.
On the negative side, this scan of incoming neighbors will be carried out sequentially rather than in parallel.
Nevertheless, in some applications, we only need to scan a few incoming neighbors.
As a first application of our network representation, we obtain a distributed algorithm for maintaining a maximal matching with $O(\log n)$ amortized update time and message complexities,
with $O(\alpha)$ local memory usage. (A maximal matching can be maintained via a trivial distributed algorithm with $O(1)$ worst-case update time, even in general networks,
but its amortized message complexity and local memory usage will be $\Omega(n)$, even in forests.)
To enhance the applicability of our network representation, we demonstrate that the \emph{bounded degree sparsifiers} of \cite{Sol18} can be maintained dynamically in a distributed network using low local memory usage.
Using these sparsifiers, we obtain efficient distributed algorithms for maintaining approximate matching and vertex cover
with low amortized update time and message complexities and with low local memory usage (see Section \ref{sec2} for details).

%More generally, we get a tradeoff of $O(\Delta + \log n /\Delta)$ versus $O(\log n / \delta)$ between the amortized cost and the local memory usage.
%In particular, to maintain a maximal matching, we maintain a list of free incoming neighbors (distributing this list to the relevant neighbors),
%and there is no need to scan this list, we just need to get an arbitrary vertex of it upon need.
%(which should be maintained distributively among the incoming neighbors).
% need to scan these neighbors, the first one will do.

This result provides the first efficient representation of uniformly sparse distributed networks with low local memory usage.
%Such a property can be useful in applications where local memory constraints should be minimized,
%which seems particularly important in distributed systems and wireless networks.
Besides the aforementioned applications, such a representation may be used more broadly in applications currently suitable only for low degree networks,
where local memory is very limited.
%such as network synchronization, efficient broadcast protocols and compact routing schemes \cite{ABP90,ABP91,TZ01,ABP90,ABP91,Dinitz07}.
% [[H: don't understand this sentence]]
%Note that by using hashing, while the global space usage is linear in the graph size, the local memory usage may be as large as $\Omega(n)$ even in forests!
%We remark that the BF algorithm can be implemented in distributed networks in a straightforward way.
%Our algorithm can also be implemented in such networks, but this implementation relies on a distributed algorithm for maintaining low outdegree orientation \cite{BE10}.
%A significantly better distributed algorithm for maintaining low outdegree orientations was given recently in \cite{PPS16},
%which can maintain $O(\log^* n)$-orientation with an amortized round complexity of $O(\log^* n)$ in distributed networks of low arboricity.
%On the other hand, while the amortized update time of the BF algorithm and ours also bounds the \emph{amortized message complexity}
%(i.e., the average number of messages sent following an edge update, over all update operations), which is an important quality measure in distributed computing,
%the amortized message complexity of the algorithm of \cite{PPS16} may be as high as $\Omega(n)$.
%whereas.
%[[S: is there a reason to repeat here the applications to adj queries and max matching?
%That is, we can use our data structure for supporting adj queries or maintaining max matching in settings where local memory constraints are limited,
%bla bla]]
%\vspace{8pt}
%\\
\subsection{The algorithm of BF is global} \label{sec14}
When dealing with networks of huge scale, it is often important to devise algorithms that are intrinsically {\em local}.
\emph{Local algorithms} have been extensively studied, from various perspectives.
(See e.g.~\cite{LPP08,ARVX12,Suomela13,RTVX11,MV13,EMR14,EMR15} and the references therein.)
A local algorithm in a dynamic network performs an operation at a vertex $v$ while affecting only $v$ and its immediate neighbors (or more generally vertices in a small ball around $v$).
Local algorithms are motivated by environments, both centralized and distributed, in which it is
undesirable, and sometimes even impossible, for a change at a particular vertex of the network to affect
remote locations unrelated to the change. In the context of I/O efficiency, local algorithms may have better cache performance.

%In many real-life applications it is unrealistic to allow a global update procedure over a huge-sized network due to a single local update;
%indeed, even if such a procedure triggers only a few small changes in a far-away part of the network,
%implementing it is often prohibitively expensive, or even  impossible, due to physical constraints.
%In such cases the update procedure is allowed to perform changes only within the constant radius balls around the affected vertices.

The second drawback of the  BF algorithm that we address is the fact that it is not local.
%The BF algorithm cannot be made local as it is used for solving the edge orientation problem,
%%where we are required to maintain a low outdegree bound for all the vertices
%%in the network. This requirement, however,
%which is inherently non-local:
% [[H: I do not understand this strong statement...if the following sentence is what you mean by "inherently non-local" better to stick with it alone]]
A single edge insertion $e = (u,v)$ that increases the outdegree of a vertex  beyond
 $\Delta$ may trigger edge flips that are at distance $\Theta(\log_{\Delta} n)$ from $u$ and $v$,
as shown in Figure \ref{f:farflips} for $\Delta = 2$.
In fact, for the example of Figure \ref{f:farflips},
\emph{any} algorithm that maintains a $\Delta$-orientation
must flip edges that are at distance $\Theta(\log_{\Delta} n)$ from $u$ and $v$.
(There are degenerate examples showing that the BF algorithm sometimes flips edges at distance $\Theta(n)$ from $u$ and $v$.)
Consequently, to achieve locality, we must relax the outdegree condition inherent to the edge orientation problem.
\begin{figure}[h!]
\begin{center}
\includegraphics[scale=0.7]{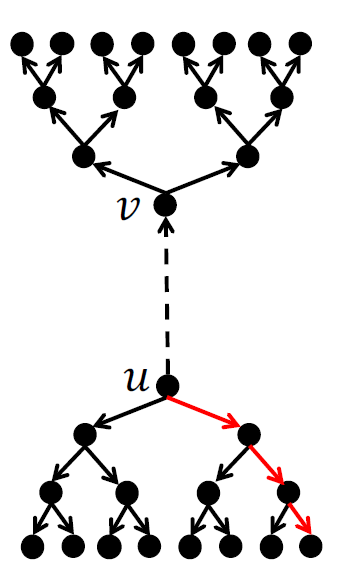}
\end{center}
\caption{An illustration for $2$-orientation. Upon the insertion of edge $(u,v)$, at least $\Omega(\log_2 n)$ edges must be flipped to restore a 2-orientation,
some of which must be at distance $\Omega(\log_2 n)$ from $u$ and $v$.
For example, flipping the $\Theta(\log n)$ edges along the red path should restore a $2$-orientation.\label{f:farflips}}
\end{figure}

%necessity to flip edges that are at distance $\Theta(\log_{\Delta} n)$ from $u$ and $v$ is
%Note that the BF data structure is mostly used for inherently local problems, such as dynamic maximal matching and adjacency queries.
%Therefore, devising a local counterpart wi

%\smallskip
\vspace{6pt}

\noindent
{\bf Our contribution.~}
We propose an alternative \emph{local} scheme that performs a sequence of edge insertions, deletions, and adjacency queries in total time that is
asymptotically no worse than that of BF. The scheme is natural and works as follows.
Upon a query and/or an update at a vertex $v$  we reset $v$. That is we make $v$'s outgoing edges incoming. (We suggest two versions, one aggressive that always
flips  $v$'s outgoing edges,  and another that flips these edges only if the outdegree of $v$ is larger than $\Delta$.)
More specifically, whenever the application of interest has to traverse $v$'s outgoing neighbors it also flips them (thereby intuitively paying for the traversal). Thus, we
get locality at the cost of giving away the worst case upper bound on the outdegrees of the vertices.
We call this scheme the \emph{flipping game}.
%We now demonstrate exactly how we use this scheme in the context of its main applications.
%\smallskip
We use the flipping game to get  local algorithms for adjacency queries and dynamic maximal matching.
These two application can, in fact, be casted as special cases of a generic paradigm, described in detail in Section \ref{appgen}.

%We remark that the problems of supporting adjacency queries and maintaining maximal matching are inherently local.
%(For concreteness, we restrict our attention to these two problems in the following discussion,
%but the same claim can be applied to other problems, such as coordinate queries, see \cite{KKPS14,HTZ14}.)
%Since the aforementioned algorithms for these problems employ the BF algorithm, they cannot be local.
%The key observation is that these problems can be viewed as special instances of the aforementioned meta-problem.
%(This observation is not immediate, as each specific problem needs to be cast appropriately to fit the paradigm of the meta-problem.)
%By combining this observation with the optimality of the flipping game, we   obtain local algorithms (based on the flipping game) for these problems
%with the same amortized cost as for the global algorithms of \cite{BF99,Kowalik07,NS13,HTZ14} (based on the BF algorithm).

The only known local algorithm for maintaining maximal matching
has update time of $O(\sqrt{m})$ where $m$ is the number of edges in the graph~\cite{NS13}, and
this guarantee does not improve for graphs with bounded arboricity.
(Even in dynamic forests, the fastest known local algorithm has amortized update time $O(\sqrt{n})$.)
Using the flipping game
%in conjunction with the tradeoff of \cite{HTZ14}, [[H: confusing, this is just the method of analysis]]
we get a local algorithm with amortized update time of $O(\sqrt{\log n})$ for low arboricity graphs.
%which is an exponential improvement over the state-of-the-art. [[H: not sure the last part of the sentence is needed]]
%(and constant update time). [[H: don't understand the constant update time, remove ?]]

The fastest local deterministic data structure for supporting adjacency queries requires a logarithmic query time, again even for dynamic forests.
Using the flipping game
%in conjunction with the results of \cite{BF99,Kowalik07},
we   get a deterministic local data structure for  adjacency queries supporting  queries and updates in $O(\log \log n)$
amortized time in low arboricity graphs, providing an exponential improvement over the
state-of-the-art. %The downside is that our query time is amortized rather than worst-case.

%This data structure can also be useful in a
%distributed setting where  each processor holds a dynamically changing value.
%The ``adjacency query'' is to compute some function of the values associated with an adjacent pair of processors $u$ and $v$.
%For example, the values of processors may designate how prone they are to network failures,
%in which case the answer to the query might be the probability of the underlying communication link  to crash in the near future.
%Our data structure translates into a local distributed algorithm for this task with
%$O(\log \log n)$ message complexity in
%a dynamic network of low arboricity.

To prove these bounds, we upper bound the number of flips made by the flipping game in terms of the number of flips
made by the algorithm of BF for maintaining a $\Delta$-orientation.
We remark that the flipping game can be easily and efficiently distributed. % achieving an optimal round complexity.
%of this game is optimal,
%and its message complexity equals the time complexity in the centralized setting, and is in check too.
This gives rise to a local distributed algorithm for
%maintaining maximal matching with an amortized message complexity of $O(\sqrt{\log n})$.
%The implication of this result for distributed computation is a local distributed algorithm for
maintaining a maximal matching in a distributed network of low arboricity,
with amortized update time and message complexities of $O(\sqrt{\log n})$.
(Applying the distributed algorithm of \cite{PPS16} instead of the flipping game yields a global algorithm with  amortized message complexity  $\Omega(n)$.)

\section{Efficient Representations for Sparse Networks} \label{sec2}
\subsection{Low outdegree orientations with low local memory usage} \label{sec21}
%\subsection{Low outdegree orientations using low local memory usage}
%\vspace{-5pt}
Let $\Delta$ denote the outdegree threshold in the BF algorithm.
We present here a new algorithm for maintaining a $\Delta$-orientation in dynamic graphs of bounded arboricity $\alpha$.
Our algorithm achieves the same (up to a constant factor) parameters as the BF algorithm, yet it guarantees that the outdegree of all vertices
is bounded by the required threshold (i.e., $\Delta+1$) at all times.
We first (Section \ref{sec211}) describe the algorithm in a centralized setting, and then (Section \ref{sec212}) present a
distributed implementation.
Finally, we complement these results (Section \ref{app:blow}) by showing that the BF algorithm indeed blows up the outdegree of vertices, even after applying to it several natural adjustments.
\subsubsection{A new centralized algorithm that controls the outdegrees} \label{sec211}
%\label{newalg}
%The BF algorithm can be described as follows. Edge deletions are handled in the obvious way.
%Newly inserted edges $(u,v)$ are oriented arbitrarily, until the outdegree of some vertex $u$ exceeds $\Delta$.
%Then a ``reset cascade'' starts. Specifically, a reset on a vertex $u$ flips all its outgoing edges to be incoming to it.
%This reduces the outdegree of $u$ to 0, at the expense of increasing the outdegrees of its out-neighbors by 1.
%Each of its out-neighbors whose outdegree now exceeds $\Delta$ is then handled in the same way as $u$ was handled.
%This process is repeated until the outdegrees of all vertices is bounded by $\Delta$. [[H: We have already said this]]
Our algorithm handles edge deletions and insertions in the same way as the BF algorithm, until the outdegree of some vertex $u$ exceeds $\Delta$.
At this stage our algorithm does not apply a reset cascade, but rather aims at finding all the vertices that would ``benefit'' from
flipping their edges (in terms of reducing the value of a global potential function, details follow),
and then applies a cascade of \emph{anti-resets}, where vertices of sufficiently low outdegree
flip their incoming edges to be outgoing of them (rather than the other way around. as in the BF algorithm).  %, so they reduce their indegree  to 0 rather than their outdegree.
% \emph{partial-reset} procedure on all of them \emph{together}.
%So there are two main differences with respect to the BF algorithm: (1) vertices do not increase the outdegree of their neighbors but rather decrease it.
%In particular, we can perform all anti-resets in parallel.
%(2) vertices are not handled greedily, one after another; instead, we first identify a subgraph of vertices that would benefit from flipping their edges,
%then identify a subset of vertices there whose outdegree is below some threshold, and then work on these vertices only, possibly in parallel.
%[[S: most of this paragraph now appears in intro, remove]]
%(which may cause flipping the same
% edge multiple times, ultimately leading to the outdegree blowup),

Specifically, the algorithm starts by exploring the directed neighborhood ${N_u}$ outgoing of $u$, stopping at vertices of outdegree at most $\Delta' = \Delta - 2\alpha$.
That is, for each vertex of outdegree greater than $\Delta'$ that we reach starting from $u$, hereafter an \emph{internal vertex},
we explore all its out-neighbors. For each vertex of outdegree at most $\Delta'$ that we reach, hereafter a \emph{boundary vertex}, we do not do anything.
%We distinguish between \emph{internal vertices} and \emph{boundary vertices} of ${N_u}$.
(Thus internal vertices of $N_u$ have outdegree greater than $\Delta'$ and all their out-neighbors belong to ${N_u}$,
whereas boundary vertices have outdegree at most $\Delta'$ and their out-neighbors may belong to ${N_u}$ due to other internal vertices,
but not due to the boundary vertices themselves.)
Denote by $I_u$ and $B_u$ the sets of internal and boundary vertices of ${N_u}$, respectively.
The algorithm constructs the digraph $\overrightarrow{G_u} = ({N_u},\overrightarrow{E_u})$, where $\overrightarrow{E_u}$
consists of all the outgoing edges of the internal vertices of ${N_u}$. This can be carried out in time linear in the size of $\overrightarrow{G_u}$.
Having constructed the digraph $\overrightarrow{G_u} = ({N_u},\overrightarrow{E_u})$, the algorithm proceeds by
computing a new orientation of $\overrightarrow{G_u}$ in which the outdegree of all vertices is bounded by
$2\alpha$ as follows. Initially we color (i.e., mark) all edges of $\overrightarrow{G_u}$.
Observe that at least one vertex of $\overrightarrow{G_u}$ is adjacent to at most $2\alpha$ colored edges;
we maintain a list $L_{2\alpha}$ of all vertices adjacent to at most $2\alpha$ colored edges.
We pick an arbitrary vertex in $L_{2\alpha}$, perform an anti-reset on it (flipping all its incoming edges to be outgoing of it), and then uncolor all its at most $2\alpha$ adjacent colored edges and update $L_{2\alpha}$ accordingly.
This process is repeated until no edge of $\overrightarrow{G_u}$ is colored, at which stage we have a valid $2\alpha$-orientation for $\overrightarrow{G_u}$.
Note that until a vertex performs an anti-reset, its outdegree may only decrease. Whenever a vertex performs an anti-reset,
its outdegree may increase, but to at most $2\alpha$, which means that a vertex never increases its outdegree beyond the maximum between $2\alpha$ and its initial out-degree.

Since each boundary vertex had at most $\Delta'$ out-neighbors in the entire graph, its new outdegree will be at most $\Delta' + 2\alpha = \Delta$,
and this also bounds its outdegree at any time during the process.
Moreover, since all outgoing edges of each internal vertex of $N_u$ are taken to $\overrightarrow{G_u}$,
the outdegree of each internal vertex never exceeds $\Delta+1$.
%unless it was smaller than $
This process of computing a valid $2\alpha$-orientation while never blowing up the outdegree,
hereafter the \emph{anti-reset cascade procedure},  is inspired by the the static algorithm of \cite{AMZ97},
with the inherent difference that it works on a carefully chosen (possibly small) subgraph $\overrightarrow{G_u}$, whereas the reset cascade procedure
underlying the BF algorithm does not work on a precomputed subgraph, but rather on a subgraph that grows ``on the fly'' with the resets.
While it is easy to see that our procedure runs in linear time on any chosen subgraph (as with the BF algorithm),
the challenge is to show that the total cost of these procedures over all chosen subgraphs throuhgout the execution of our algorithm
is aymptotically the same as that of the BF algorithm.
%. Such a $2\alpha$-orientation can be computed in time linear in the size of $\overrightarrow{G_u}$ using the algorithm of  \cite{AMZ97}.
%It then flips the orientation of all the edges in $\overrightarrow{G_u}$ in an arbitrary order.
%Since the outdegree of each internal vertex of $N_u$ in the original orientation is at most $\Delta + 1$ and since it has at most $2\alpha$
%out-neighbors in the final orientation, its out-degree will be bounded by at most $\Delta + 2\alpha + 1$ at any time during the flipping sequence.
%Moreover, if we are being more careful in the order in which edges are flipped (as in the algorithm of \cite{AMZ97}), the out-degree of vertices will never  exceed $\Delta+1$.
%(That is, we first flip edges incident on vertices of degree $\le 2\alpha$, then consider the graph obtained after removing the already handled vertices, and repeat.)

%Next, we analyze the total runtime of our algorithm.
\begin{lemma} \label{timeviaflips}
The total runtime of our algorithm is linear in the total number of edge flips made, assuming $\Delta \ge 5\alpha$.
\end{lemma}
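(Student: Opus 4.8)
The plan is to amortize the runtime of the algorithm against the number of edge flips. The only operations the algorithm performs besides the BF-style edge insertions/deletions are: (i) exploring the digraph $\overrightarrow{G_u}$ starting from $u$; (ii) initializing the coloring and the list $L_{2\alpha}$; and (iii) running the anti-reset cascade on $\overrightarrow{G_u}$. For a plain edge insertion or deletion that does not trigger an anti-reset cascade, the work is $O(1)$, and this is charged to the $O(1)$ flips (or lack thereof) in the trivial way, so the interesting case is an insertion at a vertex $u$ whose outdegree exceeds $\Delta$. The goal is to show that the cost of the entire anti-reset cascade procedure on $\overrightarrow{G_u}$ is $O(|\overrightarrow{E_u}|)$ and that $|\overrightarrow{E_u}| = O(\text{number of flips performed during this cascade})$.

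The first bound is routine: exploring $\overrightarrow{G_u}$ is linear in its size by the description in the text, the coloring/list initialization is linear, and each step of the cascade — picking a vertex from $L_{2\alpha}$, performing its anti-reset, uncoloring its $\le 2\alpha$ adjacent colored edges, and updating $L_{2\alpha}$ — costs $O(\alpha)$ and decolors at least one edge; since every edge of $\overrightarrow{G_u}$ starts colored and gets uncolored at most once, the total cascade cost is $O(\alpha \cdot |\overrightarrow{E_u}|)$, hence $O(|\overrightarrow{E_u}|)$ after absorbing $\alpha$ into constants — or, more carefully, $O(|\overrightarrow{E_u}|)$ by charging the uncoloring of each edge directly. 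So the whole procedure is linear in $|\overrightarrow{E_u}| = |\overrightarrow{G_u}|$.

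**The main obstacle** is the second bound: showing the number of edge flips during this single invocation is $\Omega(|\overrightarrow{E_u}|)$, i.e., $\Omega$ of the number of edges of $\overrightarrow{G_u}$. Here is where the assumption $\Delta \ge 5\alpha$ (hence $\Delta' = \Delta - 2\alpha \ge 3\alpha$) is used. The edges of $\overrightarrow{G_u}$ are exactly the outgoing edges of the internal vertices $I_u$, so $|\overrightarrow{E_u}| = \sum_{w \in I_u} \deg^+(w) > \Delta' \cdot |I_u| \ge 3\alpha\,|I_u|$. On the other hand, in the resulting $2\alpha$-orientation of $\overrightarrow{G_u}$ every vertex of $\overrightarrow{G_u}$ — in particular every internal vertex — has outdegree at most $2\alpha$ \emph{within} $\overrightarrow{G_u}$; since all of an internal vertex's outgoing edges lie in $\overrightarrow{G_u}$, its global outdegree drops from more than $\Delta' \ge 3\alpha$ to at most $2\alpha$, a decrease of at least $\alpha$. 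Each such decrease is witnessed by at least one flip incident to that vertex (an outgoing edge of $w$ in the old orientation that is incoming in the new one). Counting flips by the internal endpoint that "benefited", we get at least $\alpha\,|I_u| = \Omega(|\overrightarrow{E_u}|)$ flips, as required. One technical point to nail down is that a single flip should not be double-counted across two internal vertices that both benefit from it; but a flip $w \!\to\! x$ becoming $x \!\to\! w$ decreases $w$'s outdegree and \emph{increases} $x$'s, so it is credited to $w$ alone, and the accounting is clean.

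Putting the pieces together: the total runtime is the sum over all insertions/deletions of the local $O(1)$ work plus, for each triggered cascade, $O(|\overrightarrow{E_u}|) = O(\text{flips in that cascade})$; summing over the whole execution, the total runtime is $O(T + F)$ where $T$ is the number of updates and $F$ the total number of flips, and since each update that does no flips is already accounted for by its $O(1)$ cost we may fold $T$ into the flip count up to a constant (or simply state the bound as linear in the number of flips, treating the trivial $O(1)$ per-update overhead as part of "the total number of edge flips made" in the spirit of the BF analysis). Hence the total runtime is linear in the total number of edge flips, assuming $\Delta \ge 5\alpha$. \qed
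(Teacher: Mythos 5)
Your proof has the right shape — amortize the work against the size of each $\overrightarrow{G_u}$, then show the flips in that cascade are a constant fraction of $|\overrightarrow{E_u}|$ — but the final step has a real gap. You establish only a \emph{lower} bound $|\overrightarrow{E_u}| > \Delta'\,|I_u| \ge 3\alpha\,|I_u|$, then show the cascade performs at least $\alpha\,|I_u|$ flips, and jump to ``$\alpha\,|I_u| = \Omega(|\overrightarrow{E_u}|)$.'' But a lower bound on $|\overrightarrow{E_u}|$ in terms of $|I_u|$ gives you $\alpha\,|I_u| \le |\overrightarrow{E_u}|/3$, i.e.\ only that the flip count is $O(|\overrightarrow{E_u}|)$ — you never get the $\Omega$ you need. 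What is missing is an \emph{upper} bound on $|\overrightarrow{E_u}|$ in terms of $|I_u|$, and that upper bound is exactly the point of the invariant the paper establishes for the anti-reset cascade: the outdegree of every vertex stays at most $\Delta+1$ throughout, so in particular each internal vertex contributes at most $\Delta+1$ edges to $\overrightarrow{E_u}$, giving $|\overrightarrow{E_u}| \le (\Delta+1)\,|I_u|$.

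Even after adding that upper bound, your per-vertex flip count is too weak. You lower-bound the outdegree drop at each internal vertex by $\alpha$, so you would get flips $\ge \alpha\,|I_u| \ge \frac{\alpha}{\Delta+1}\,|\overrightarrow{E_u}|$, and the ratio $\alpha/(\Delta+1)$ is not bounded below by a constant when $\Delta$ is much larger than $\alpha$; the lemma allows any $\Delta \ge 5\alpha$, not just $\Delta = \Theta(\alpha)$. The paper instead uses the full drop: each internal vertex starts at outdegree $> \Delta' = \Delta-2\alpha$ and ends at $\le 2\alpha$, so at least $\Delta + 1 - 4\alpha$ of its originally-outgoing edges are flipped, out of at most $\Delta+1$ of them — a fraction of at least $(\Delta+1-4\alpha)/(\Delta+1) \ge 1/5$ when $\Delta \ge 5\alpha$. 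That is the calculation you need to recover; your ``decrease of at least $\alpha$'' discards the factor that makes the argument scale-free in $\Delta$.
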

\begin{proof}
Edge insertions and deletions are handled in constant time, until the outdegree of some vertex $u$ exceeds $\Delta$.
At this stage a digraph $\overrightarrow{G_u}$ as described above is constructed, along with the aforementioned list $L_{2\alpha}$, within time linear in the size of $\overrightarrow{G_u}$.
Then edges of $\overrightarrow{G_u}$ are flipped by the anti-reset cascade procedure, so that
%The partial-reset procedure
each edge is flipped at most once.
By maintaining the list $L_{2\alpha}$ throughout the anti-reset cascade procedure, we can easily implement this procedure in time linear in the size of $\overrightarrow{G_u}$.
Note also that the size of $\overrightarrow{G_u}$ is given by the sum of outdegrees over the internal vertices of $N_u$.
%Moreover, the new edge orientation is the only place where edge flips are made by our algorithm.
To complete the proof, we argue that a constant fraction of the outgoing edges of each internal vertex of $N_u$ are flipped
%the sum of these outdegrees is linear in the number of edge flips made
during the anti-reset cascade procedure.
%(i.e., as a result of running the algorithm of \cite{AMZ97}). %is linear in the size of $\overrightarrow{G_u}$.
To see this, note that the outdegree of each internal vertex of $N_u$ reduces during this procedure from more than $\Delta' = \Delta - 2\alpha$
to at most $2\alpha$. Recalling that the outdegrees of vertices are bounded by $\Delta+1$ at all times,
at least $\Delta + 1-4\alpha$ out of at most $\Delta + 1$ outgoing edges (which is at least a $\frac{1}{5}$-fraction assuming  $\Delta \ge 5\alpha$)
of each internal vertex must have been flipped during the procedure.
\QED
\end{proof}

Although our algorithm and the BF algorithm are inherently different,
we use a potential function argument similar to the one in \cite{BF99} to bound the number of flips made by our algorithm,
which by Lemma \ref{timeviaflips} also bounds its total runtime (up to a constant factor).
The key insight is that we can apply a potential function argument \emph{globally}, i.e., for all the anti-resets together, rather than to each one of them separately as
was done for resets by \cite{BF99}.
%The intuitive reason as to why such a potential function argument should work is that any sufficiently large sequence of anti-resets can be viewed as a sequence of  resets;
% more accurately, those resets are not ``complete'' but rather ``as we formally prove next.
%(In fact, since the outdegrees of vertices are always bounded by $\Delta + 1$ in our algorithm, our   argument is somewhat simpler than \cite{BF99}.
%This simplification is   minor, though, since the argument of \cite{BF99} is already simple enough.)
%This argument will bound the total number of edge flips performed by our algorithm. Since each

Suppose one can maintain a $\delta$-orientation for some sequence of $t$ edge updates while doing $f$ edge flips, starting with the empty graph.
As in \cite{BF99}, we define an edge to be  \emph{good} if its orientation in our algorithm is the same as in the $\delta$-orientation and \emph{bad} otherwise.
We define the potential $\Psi$ to be  the number of bad edges in the current graph.
Initially $\Psi = 0$.
Each insertion or a flip performed by the $\delta$-orientation increases $\Psi$ by at most one, while edge deletions may only decrease $\Psi$.
All edge flips made by our algorithm are due to the anti-reset cascade procedures.
Consider some digraph $\overrightarrow{G_u}$ on which an anti-reset cascade procedure is applied throughout the execution of our algorithm,
and note that all the edges of $\overrightarrow{G_u}$ are outgoing of internal vertices of $N_u$ before the procedure starts.
Let $v$ be an arbitrary internal vertex of $N_u$, and note that its outdegree before the procedure starts is greater than $\Delta'$.
Moreover, by the definition of a $\delta$-orientation, at most $\delta$ of $v$'s outgoing edges at that moment are good.
As a result of the procedure,  these $\delta$ edges may become bad.
However, since $v$'s outdegree reduces to at most $2\alpha$ at the end,
at least $\Delta' + 1 - 2\alpha -  \delta$ edges were bad and become good.
It follows that $\Psi$ is decreased by at least $\Delta' + 1 - 2\alpha - 2\delta$ per each internal vertex.
%that participates in any of the anti-reset cascade procedures as an internal vertex decreases
Consequently, the total number of vertices that serve as internal vertices of some digraph $\overrightarrow{G_u}$ throughout the execution of our algorithm
%whose edges got flipped throughout the execution of our algorithm
is at most $(t + f) / (\Delta' + 1 - 2\alpha - 2\delta)$.
Since the outdegree of all vertices is bounded by $\Delta + 1$ at all times,
%[[S: this is our way of bypassing part of the argument of BF, though we lose a small
%constant for this reason; I can switch to the longer argument if you want]] [[H: no need, I also do not think we should mention this diff at all, it minor, i commented the previous mentioning of it]]
the total number of edge flips made by our algorithm is bounded by  $(t + f)(\Delta + 1) / (\Delta' + 1 - 2\alpha -  2\delta)$.
Assuming $\Delta \ge 6\alpha + 3\delta$,
%and noting that $\delta \ge \max\{\alpha -1,1\}$,
it follows that $(t + f)(\Delta + 1) / (\Delta' + 1 - 2\alpha - 2\delta) \le 3(t+f)$.
%[[H: You have already set $\Delta = 10\alpha$ !?]]
\subsubsection{A distributed implementation with low local memory usage} \label{sec212}
Consider a vertex $u$ whose outdegree exceeds $\Delta$.
The centralized algorithm starts by exploring the directed neighborhood ${N_u}$ and coloring all edges of the digraph $\overrightarrow{G_u} = ({N_u},\overrightarrow{E_u})$ as described above.
We can distribute this step using broadcast and convergecast in a straightforward way.
However, we also need to make sure that the local memory usage at processors is bounded by $O(\Delta)$.
To this end, every internal processor (with outdegree larger than $\Delta'$) will be responsible for coloring its outgoing edges.
Throughout this broadcast we also compute the directed BFS tree $T_u$ on $N_u$, so that each processor will hold information about its parent in $T_u$,
using which we can easily carry out the subsequent convergecast.
The number of rounds will be linear in the depth $h$ of $T_u$, whereas the number of messages will be linear in the size of $\overrightarrow{G_u}$.

The centralized algorithm continues by running the anti-reset cascade procedure.
This procedure is inspired by the static algorithm of \cite{AMZ97}, for which an efficient distributed implementation was given in \cite{BE10}.
We cannot use the distributed algorithm of \cite{BE10}, however, since it lets processors communicate with all their neighbors,
hence the local memory usage will depend on the maximum degree in the network, which can be significantly larger than $O(\Delta)$.
(Recall that here $\Delta$ stands for the out-degree threshold, which is linear in the arboricity $\alpha$, and may be $n/\alpha$ times smaller than the maximum degree.)

The distributed algorithm that we propose is a variant of \cite{BE10}, and works as follows.
First, we  change the threshold $\Delta'$ of the centralized algorithm from $\Delta - 2\alpha$ to $\Delta - 5\alpha$.
%(As described next, we shall perform the anti-resets on processors that are adjacent to at most $5\alpha$ colored edges rather than $2\alpha$ as before.)
To compensate for the decrease in the value of $\Delta'$, we increase $\Delta$ by a constant factor.
(By letting $\Delta$ increase by a constant factor, the above potential function argument will carry over smoothly.)
%Second, the aforementioned convergecast is being timed so as to wake up all processors of $N_u$ simultaneously upon its termination, at which stage we refer to those processors as %\emph{colored}.
%Let $n_u$ be the number of processors in $\overrightarrow{G_u}$.
In each round $i = 1,2,\ldots,\log |N_u|$, all the colored processors send messages on each of their colored outgoing edges.
Every colored processor that receives at least one message checks if the number of its colored outgoing edges plus the number of messages it received is bounded by $5\alpha$.
If so, it flips all the edges along which it received messages to be outgoing of it, and then uncolors itself and all its outgoing edges.
%Otherwise, it does not do anything.

This distributed anti-reset cascade procedure implicitly assumes that all processors of
$\overrightarrow{G_u}$ wake up simultaneously, and the entire subgraph $\overrightarrow{G_u}$ (both edges and processors) is colored at this moment.
%Before we explain how to distribute this procedure, we make sure to wake up all the processors of $N_u$ in exactly the same time.
To justify this assumption, before initiating this procedure, we perform a broadcast along $T_u$, in which each processor at directed distance $i$ from the root receives message $h-i$.
A processor receiving message $h-i$ will wake up in exactly $h-i$ rounds from the time it received the message to
color itself and its outgoing edges, and then participate in the distributed anti-reset cascade procedure.

We next analyze this procedure.
In each round $i = 1,2,\ldots,\log |N_u|$, at least 3/5 of the colored processors are adjacent to at most $5\alpha$ colored edges,
since the subgraph  induced by the colored edges has arboricity at most $\alpha$.
This means that the number of colored vertices reduces by a factor of $5/2 > 2$ in each round, hence after the last round all edges have been uncolored,
and we obtain a $5\alpha$-orientation for $\overrightarrow{G_u}$.
Moreover, we argue that the number of edges being uncolored in each round is no smaller than the number of edges that remain colored.
To see this, fix an arbitrary round $i$, consider the graph $G_i$ induced by the colored edges at the beginning of the round,
and denote by $V'_i$ and $V''_i$ the set of vertices that get uncolored and remain colored at the end of round $i$, respectively.
Since no vertex in $V''_i$ get uncolored in round $i$, the degree of each vertex of $V''_i$ is at least $5\alpha$ in $G_i$.
However, the subgraph $G''_i$ of $G_i$ induced by the vertex set $V''_i$ has arboricity at most $\alpha$, hence at least half of the vertices of $V''_i$
have at most $4\alpha$ neighbors in $V''_i$, which means their remaining $\ge \alpha$ neighbors are in $V'_i$.
The assertion now follows since the number of edges in $G''_i$, or the number of edges that remain colored, is at most $\alpha \cdot |V''_i|$,
whereas the number of edges that got uncolored is at least $\alpha \cdot |V''_i|$.
Consequently, the number of messages sent in each round decays geometrically, hence the total number of messages sent is linear in the size of $\overrightarrow{G_u}$.
Note also that this procedure terminates within $\log |N_u|$ rounds, which does not exceed the number of messages sent.

%We proved the following theorem.
\begin{theorem} \label{bas}
For any $\alpha \ge 1$ and $\Delta = \Omega(\alpha)$ and any arboricity $\alpha$ preserving sequence of edge and vertex updates starting from   empty graph,
%dynamic graph with arboricity bounded by $\alpha$,
there is a distributed algorithm for maintaining a $\Delta$-orientation (in the $\congest$ model) with an optimal (up to a constant) amortized message complexity,
and the same (or better) amortized update time.  % is at least as good as the amortized message complexity.
%However, in contrast to the BF algorithm,
%Our algorithm never increases the outdegrees of vertices beyond $\Delta+1$,
The local memory usage at all vertices is $O(\Delta)$ at all times, which is also optimal.
For $\Delta = O(\alpha)$, we obtain $O(\alpha)$-orientation with $O(\log n)$ amortized update time and message complexities, with $O(\alpha)$ local memory usage.
%\QED
\end{theorem}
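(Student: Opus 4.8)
The plan is to assemble the theorem from the pieces already developed in Sections \ref{sec211} and \ref{sec212}, checking that the four claimed guarantees — correctness of the maintained $\Delta$-orientation, amortized message complexity, amortized update time, and $O(\Delta)$ local memory — all hold simultaneously, and that the message size stays within $O(\log n)$ so the algorithm runs in $\congest$. First I would fix the constants: set $\Delta' = \Delta - 5\alpha$ as in the distributed implementation, and take $\Delta$ large enough (a constant multiple of $\alpha$, say $\Delta \ge 6\alpha + 3\delta$ for the optimal comparison, or simply $\Delta = c\alpha$ for the $O(\alpha)$ corollary) so that both Lemma \ref{timeviaflips} and the potential-function bound from Section \ref{sec211} apply. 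Correctness is then immediate from the analysis already given: whenever a vertex $u$ exceeds outdegree $\Delta$, the digraph $\overrightarrow{G_u}$ is built, and the distributed anti-reset cascade produces a $5\alpha$-orientation of $\overrightarrow{G_u}$; since boundary vertices had at most $\Delta'$ out-neighbors in the whole graph and gain at most $5\alpha$, their outdegree ends at most $\Delta$, while internal vertices never exceed $\Delta+1$ at any point of the cascade — so a valid $\Delta$-orientation (with the $+1$ slack) is restored after every update.

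For the amortized cost, I would combine Lemma \ref{timeviaflips} with the global potential argument: over any arboricity-$\alpha$ preserving sequence of $t$ updates, if a $\delta$-orientation exists with $f$ flips, the number of vertices ever serving as internal is $O((t+f)/(\Delta' + 1 - 2\alpha - 2\delta))$, and since outdegrees are always $\le \Delta+1$, the total flip count — hence by Lemma \ref{timeviaflips} the total centralized runtime — is $O(t+f)$ for $\Delta = \Omega(\alpha + \delta)$; by the BF optimality statement recalled in Section \ref{sec131}, this is asymptotically optimal. The crux is translating this into the distributed message and round bounds: per invocation on $u$, the broadcast/convergecast to build and color $\overrightarrow{G_u}$ costs $O(|\overrightarrow{E_u}|)$ messages and $O(h)$ rounds where $h = \mathrm{depth}(T_u)$, and the distributed anti-reset cascade (as analyzed at the end of Section \ref{sec212}) sends a geometrically decaying number of messages totalling $O(|\overrightarrow{E_u}|)$ and runs in $\log|N_u| = O(\log n)$ rounds. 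Summing $|\overrightarrow{E_u}|$ over all invocations is exactly the flip count (each edge of $\overrightarrow{G_u}$ has a constant fraction of it flipped, by the argument in Lemma \ref{timeviaflips}), so the total message complexity over the sequence is $O(t+f)$, i.e.\ $O(1)$ amortized for $\Delta = O(\alpha)$ and asymptotically optimal in general; dividing the rounds the same way gives amortized update time $O(\log n)$ (for $\Delta=O(\alpha)$), which matches or beats BF.

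Next I would verify the two remaining claims. For local memory: each internal processor stores only its own outgoing edges (at most $\Delta+1$ of them) plus a parent pointer in $T_u$ and $O(1)$ bookkeeping, and a boundary processor stores $O(\Delta')$ — so $O(\Delta)$ throughout, and this is optimal since a processor must at minimum record its own outgoing edges to answer adjacency queries. For the $\congest$ constraint, every message is either a single edge-identifier/flip notification or a small counter (the "$h-i$" wake-up offsets, the colored-edge counts bounded by $5\alpha = O(\log n)$ when $\alpha = O(\log n)$, or in general $O(\alpha)$ which fits in $O(\alpha \log n)$ bits — here I would note the standard convention that a processor may send one such message per incident edge per round, which is the model's natural unit), so all messages are $O(\log n)$ bits. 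I expect the main obstacle to be bookkeeping rather than mathematical: making the distributed anti-reset cascade's synchronization assumption rigorous — the staggered wake-up via the "$h-i$" broadcast along $T_u$ — and confirming that while one cascade on $\overrightarrow{G_u}$ is running, the "affected" region is exactly $N_u$ so no two cascades interfere (which is guaranteed since updates are spaced out, per the assumption in Section \ref{sec12}). The finalization of the $O(\alpha)$ corollary is then just instantiating $\Delta = \Theta(\alpha)$ in the bounds above.
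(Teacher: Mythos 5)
Your proposal is correct and follows essentially the same route as the paper: the theorem is a summary of what Sections~\ref{sec211} and~\ref{sec212} establish, and you reassemble the same ingredients (the anti-reset cascade, Lemma~\ref{timeviaflips}, the global potential argument, the broadcast/convergecast along $T_u$, and the geometric decay of messages in the distributed cascade) in the expected way. Two small clarifications would tighten it: the observation that amortized rounds are bounded by amortized messages deserves to be stated explicitly --- each round of both the broadcast/convergecast and the cascade sends at least one message, so total rounds $\le$ total messages over the whole sequence, which is exactly what the paper uses to claim ``the same (or better) amortized update time''; and your $\congest$ justification for counters bounded by $5\alpha$ garbles bits vs.\ values --- such a counter needs only $O(\log\alpha)=O(\log n)$ bits, not ``$O(\alpha\log n)$ bits'', so the correct conclusion (all messages fit in $O(\log n)$ bits) stands but the stated reason does not.
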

The worst-case update time of the above algorithm %provided by Theorem \ref{bas}
may be high. The bottleneck is the time needed to explore the directed neighborhood ${N_u}$ and compute the tree $T_u$ on which the broadcast and convergecast are carried out,
which is linear in the depth of $T_u$.
To remedy this, we show that the aforementioned potential function argument will continue to work if we truncate
the tree at a carefully chosen depth parameter $O(\log n)$, thereby reducing the worst-case update time to $O(\log n)$.
This truncation, however, is nontrivial. In particular, we do not truncate $T_u$ at depth $\Theta(\log n)$, but rather at the minimal depth $i$ for which the number of vertices is smaller than $O(\Delta)^i$, where the constant hiding in the $O$-notation should be chosen with care.
We omit these details, since our focus in this work is on amortized rather than worst-case bounds.

\subsubsection{Outdegree blowup in the BF algorithm} \label{app:blow}
%\vspace{-5pt}
%\label{sec:blowup}
%In Appendix \ref{app:simpleblowup}, we
\begin{lemma}
 For graphs with arboricity 1 (i.e., for forests), the original BF algorithm does not increase the outdegree of a vertex beyond $\Delta + 1$ during
a reset cascade that follows an edge insertion.
 \end{lemma}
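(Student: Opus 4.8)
The plan is to track, throughout a reset cascade in a forest, the invariant that every vertex which is about to be reset has outdegree exactly $\Delta+1$ (never more), and that a reset strictly decreases the ``debt'' in a way that cannot cascade back to an already-reset vertex. The starting observation is structural: in a forest of arboricity $1$, the BF orientation keeps outdegrees at most $\Delta$, so the only vertex of outdegree $\Delta+1$ right after inserting $(u,v)$ is the single endpoint toward which the new edge was oriented — call it $w_0$. First I would reset $w_0$; this flips its $\Delta+1$ outgoing edges, so each of its out-neighbors gains exactly one to its outdegree. Since each such neighbor had outdegree at most $\Delta$ before, it now has outdegree at most $\Delta+1$, and the ones that actually violate the bound have outdegree \emph{exactly} $\Delta+1$. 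So the invariant ``every queued vertex has outdegree exactly $\Delta+1$'' is preserved after one step, and the key point to establish is that it continues to hold for the whole cascade.

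The heart of the argument is a acyclicity / ``no double-hit'' claim specific to forests: once a vertex $x$ has been reset, no subsequent reset in the cascade can increase $x$'s outdegree again. I would prove this by exhibiting that the sequence of reset vertices $w_0, w_1, w_2, \dots$ forms a simple path (or more precisely, that the ``is an out-neighbor of the previously reset vertex'' relation, restricted to the vertices that actually get reset, has no cycle and never revisits a vertex). The mechanism: when $w_i$ is reset, the edge between $w_i$ and $w_{i+1}$ is now oriented \emph{from} $w_i$ \emph{to}... wait — it is flipped to be incoming to $w_i$, i.e. oriented $w_{i+1} \to w_i$. For $w_{i+1}$'s outdegree to be affected later, some later reset $w_j$ ($j > i+1$) would have to be an in-neighbor of... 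I would instead argue contrapositively using the underlying forest: the edges flipped during the cascade, together with the new edge, would have to contain a cycle for a vertex to be hit twice, contradicting that the graph is a forest. Concretely, if $w_k = w_i$ for some $k > i$, then there is a walk $w_i \to w_{i+1} \to \cdots \to w_k = w_i$ through edges that were outgoing at the respective moments, and after removing repetitions this yields a cycle in the (undirected) forest — impossible. Hence each vertex is reset at most once, its outdegree rises by exactly $1$ (from some value $\le \Delta$, and in the queued case from exactly $\Delta+1$ down to $0$ then back up by at most... ) — more carefully: a not-yet-reset vertex $x$ can be hit by a reset only from the unique neighbor of $x$ lying on the reset path closer to $w_0$, so $x$'s outdegree increases by at most $1$ over the entire cascade, giving the bound $\Delta+1$.

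The step I expect to be the main obstacle is making the ``outdegree increases by at most one'' bookkeeping fully rigorous in the presence of the BF tie-breaking and the fact that a vertex could, a priori, receive flipped edges from two different reset vertices. In a general graph that is exactly what happens and is what drives the $\Omega(n)$ blowup; the whole content of the lemma is that in a forest it cannot. So I would isolate a clean sublemma: \emph{at every moment during the cascade, the set of vertices with outdegree $> \Delta$ induces, together with the reset edges, an acyclic structure, and each such vertex has a unique ``parent'' reset vertex.} This follows from the forest hypothesis because the union of all edges touched during the cascade is a subforest of $G$ plus possibly one extra edge (the inserted one), and a subforest-plus-one-edge has at most one cycle — and one then checks the cascade never actually traverses that cycle twice. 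Once this sublemma is in hand, the outdegree bound $\Delta+1$ is immediate, and the proof is finished by noting the cascade terminates (each step moves strictly ``outward'' along the finite reset path) with all outdegrees back to $\le \Delta$. \QED
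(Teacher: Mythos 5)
Your instinct --- that the forest structure forbids a vertex's outdegree from being bumped up twice, because that would yield a cycle --- is the right one, and it is in fact the heart of the paper's proof. But the execution has a genuine gap, and you anticipate it yourself when you flag ``the main obstacle.'' You conflate two distinct claims: (a) the reset \emph{sequence} never revisits a vertex, and (b) no vertex's outdegree is increased by more than one reset over the whole cascade. Your cycle argument (``if $w_k = w_i$ then $w_i \to w_{i+1} \to \cdots \to w_k = w_i$ is a walk, hence a cycle'') addresses (a), but the lemma needs (b), which concerns vertices that may never be reset at all; indeed, the paper's $\Omega(n)$-blowup example for arboricity $2$ (Lemma \ref{basiclb}) is driven precisely by a never-reset vertex $v^*$ being hit over and over. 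Your sentence ``a not-yet-reset vertex $x$ can be hit by a reset only from the unique neighbor of $x$ lying on the reset path closer to $w_0$'' is exactly the needed claim, but as stated it presupposes that the reset vertices lie on a path and that $x$ has a unique such neighbor --- neither of which you establish. The ``subforest plus one edge has at most one cycle'' sublemma is also never pinned down precisely enough to derive the outdegree bound.

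The paper closes this gap with a clean inductive observation phrased in terms of the \emph{initial} orientation $\overrightarrow{T}$ (before the cascade started): if the cascade ever resets $v$, then $\overrightarrow{T}$ contains a directed path from the first reset vertex $r$ to $v$. From this the bound is immediate by contradiction: if some $v \neq r$ reached outdegree $\Delta+2$, it would have two out-neighbors $w_1, w_2$ that were \emph{not} out-neighbors of $v$ in $\overrightarrow{T}$; each $w_i$ must therefore have been reset, so $\overrightarrow{T}$ contains a directed path from $r$ to $w_i$, and appending the arc $(w_i,v)$ gives two distinct directed paths from $r$ to $v$ in the tree --- impossible. (The case $v=r$ is handled the same way, producing a directed cycle.) This works uniformly for reset and non-reset vertices and completely sidesteps the bookkeeping about how often a vertex is hit. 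If you want to repair your write-up, replacing the ``no double-hit / reset path'' sublemma with this directed-path observation on $\overrightarrow{T}$, and proving it by induction on the position in the reset sequence, is the missing ingredient.
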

 %\vspace{-5pt}
\begin{proof}
Note that the graph is a forest, not necessarily a tree. However, as the reset cascade does not reset vertices outside the subtree containing $r$,
we may henceforth restrict our attention to that subtree, denoted $T$.
Let $\overrightarrow{T}$ be the oriented tree \emph{before the cascade started} and let $r$ be the vertex that we reset first in the cascade.
(So in $\overrightarrow{T}$, the outdegree of $r$ is $\Delta + 1$ and the outdegrees of all other vertices is $\le \Delta$.)

%We use the following observation.
%\vspace{-5pt}
\begin{observation} \label{dipath}
If the cascade resets $v$ then there is a directed  path from $r$ to $v$ in $\overrightarrow{T}$.
\end{observation}
\vspace{-5pt}
We prove this observation by induction on the position of the reset in the reset sequence of the cascade.
For the basis $v=r$, and the statement holds vacuously. For the induction step, consider a reset of an arbitrary vertex $v\not= r$,
and suppose that the statement holds for any preceding reset in the reset sequence of the cascade.
Note that $v$'s outdegree at the time of the reset is larger than $\Delta$. So when the reset occurs $v$ must have an outneighbor, say $w$,  that was not an outneighbor of $v$ in $\overrightarrow{T}$.
Since the orientation of edge $(v,w)$ flips only due to a reset, there must have been at least one reset on $w$ preceding the reset on
$v$ in the reset sequence. By induction there is a directed  path from $r$ to $w$ in $\overrightarrow{T}$. Furthermore, the edge $(v,w)$ was oriented from
$w$ to $v$ in $\overrightarrow{T}$. Hence there is a directed path from $r$ to $v$ in $\overrightarrow{T}$, as required.

Now we prove the lemma by contradiction. Consider the time during the reset cascade in which the outdegree of
a vertex $v\not=r$ becomes $\Delta + 2$. Then at this time vertex $v$ must have two outneighbors $w_1$ and $w_2$ which were not
outneighbors of $v$ in $\overrightarrow{T}$. It follows that there must have been a reset on $w_1$ and on $w_2$.
By the observation above there are directed paths in $\overrightarrow{T}$ from $r$ to $w_1$ and from $r$ to $w_2$.
This means that there are two directed paths in $\overrightarrow{T}$ from $r$ to $v$, one ending with the arc $(w_1,v)$ and another ending with the arc
$(w_2,v)$, contradicting the fact that the arboricity is 1.

If the outdegree of $r$ becomes $\Delta + 2$, then $r$ has an outneighbor $w$ that was not an outneighbor of $r$ at
$\overrightarrow{T}$. As before, there must have been a reset on $w$, so by Observation \ref{dipath} there is a directed path from $r$ to
$w$ in $\overrightarrow{T}$. This path together with the arc $(w,r)$ closes a direccted cycle in $\overrightarrow{T}$, a contradiction.
\QED
\end{proof}

The following lemma shows that when the arboricity is larger than $1$ we may get vertices with very large outdegree during the reset cascade process.

\begin{lemma} \label{basiclb}
There exists a graph with arboricity 2, for which the original BF algorithm may increase the outdegree of a vertex to $\Omega(n / \Delta)$.
\end{lemma}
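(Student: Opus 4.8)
The plan is to build a family of graphs of arboricity $2$ together with an insertion that triggers a reset cascade in which the outdegrees grow geometrically along a path, reaching $\Omega(n/\Delta)$. The basic mechanism I would exploit is the one already hinted at in the introduction (Figure~\ref{f:farflips}): in the BF algorithm, resetting a vertex $v$ flips \emph{all} of $v$'s outgoing edges, so a single reset of a high-outdegree vertex can dump many new outgoing edges onto a single out-neighbor $w$, pushing $w$'s outdegree above $\Delta$, so that $w$ in turn must be reset, dumping yet more edges onto the next vertex, and so on. With arboricity $2$ (as opposed to $1$, where the previous lemma shows no such blow-up is possible) we have enough edges to engineer this: the arboricity-$1$ argument fails precisely because with two edge-disjoint forests we can route two directed paths into the same vertex, which is exactly what lets outdegrees accumulate.

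Concretely, I would construct a sequence of ``levels'' $\ell = 0, 1, 2, \dots, k$ where level $\ell$ contains a distinguished vertex $v_\ell$, and arrange the pre-cascade orientation $\overrightarrow{G}$ so that: (i) $v_0$ is the vertex where the triggering edge is inserted, bringing its outdegree to $\Delta+1$; and (ii) each $v_\ell$ has an outgoing edge to $v_{\ell+1}$, and in addition $v_{\ell+1}$ has roughly $\ell\cdot\Theta(\Delta)$ ``spare'' in-edges from auxiliary gadget vertices that were oriented \emph{into} $v_{\ell+1}$ and that get flipped to be outgoing of $v_{\ell+1}$ as the cascade passes through the gadgets. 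The gadgets at each level are small bounded-degree structures (a constant number of edges per auxiliary vertex) whose only purpose is to feed a controlled number of freshly-flipped edges into $v_{\ell+1}$ when $v_\ell$ and its associated gadget get reset. By choosing the gadget at level $\ell$ to contribute about $\Theta(\Delta)$ new out-edges, I would get the recurrence $\deg^+(v_{\ell+1}) \ge \deg^+(v_\ell) + \Theta(\Delta) - \Delta = \deg^+(v_\ell) + \Theta(\Delta)$ is too weak --- I actually want multiplicative growth, so instead each reset of $v_\ell$ (at outdegree $d_\ell$) should flip $d_\ell$ edges into $v_{\ell+1}$, giving $d_{\ell+1} \approx d_\ell + d_\ell = 2 d_\ell$ before $v_{\ell+1}$ is itself reset --- wait, but BF resets $v_{\ell+1}$ as soon as it exceeds $\Delta$, so to sustain growth I must ensure $v_{\ell+1}$ is reset only \emph{after} all the edges from $v_\ell$'s subtree have been flipped into it. This is the delicate scheduling point. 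The right construction is a binary-tree-like layout: $v_{\ell+1}$ receives flipped edges from \emph{two} sources whose cascades have already completed, so that after processing level $\ell$ the ``active'' outdegree doubles, and after $k = \Theta(\log(n/\Delta))$ levels --- no: the lemma claims $\Omega(n/\Delta)$, which is \emph{linear}, not logarithmic, so the growth should be \emph{additive} with $\Theta(n/\Delta)$ increments of size $\Theta(\Delta)$, using $\Theta(n/\Delta)$ gadget vertices each of constant size to fill out the $n$ vertex budget. So: a single long path $v_0 \to v_1 \to \cdots \to v_k$ with $k = \Theta(n/\Delta)$, where hanging off each $v_\ell$ is a constant-size gadget supplying $\Theta(1)$ extra edges that get flipped into $v_{\ell+1}$; accumulating these along the path, $v_k$ reaches outdegree $\Theta(k) = \Theta(n/\Delta)$.

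The verification splits into three parts, which I would carry out in order. First, a \emph{static} check that the constructed graph (with its $n$ vertices, consisting of the path plus $\Theta(n/\Delta)$ constant-size gadgets) has arboricity exactly $2$: this amounts to decomposing its edge set into two forests, which the gadget design makes explicit. Second, an \emph{initialization} argument that the orientation $\overrightarrow{G}$ I wrote down is a legitimate state of the BF algorithm --- i.e., it is reachable by some arboricity-$2$ preserving update sequence with $\Delta$-bounded outdegrees throughout; the cleanest way is to present $\overrightarrow{G}$ as the output of inserting the edges in a specific order and then exhibiting the one final insertion that triggers the cascade. Third, and this is the heart of the proof, an \emph{invariant by induction on $\ell$}: when the cascade reaches $v_\ell$, its outdegree is at least $c\ell$ for a suitable constant $c$, because (a) it already had a ``base'' of $\approx \Delta$ outgoing edges inherited from being reset, (b) the reset of $v_{\ell-1}$ flipped the edge $(v_{\ell-1}, v_\ell)$ together with the gadget edges at level $\ell-1$ into being outgoing of $v_\ell$ --- here I must be careful about orientation bookkeeping: it is $v_\ell$'s \emph{incoming} edges (from $v_{\ell-1}$ and from the level-$(\ell-1)$ gadget) that become, after those vertices are reset, still incoming, so actually the accumulation has to work the other way. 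Let me restate the mechanism correctly: when BF resets a vertex $w$ of outdegree $> \Delta$, every edge \emph{out of} $w$ becomes \emph{into} $w$; so the out-neighbor $x$ of $w$ \emph{loses} an out-edge if that edge was oriented $x\to w$... no. If $(w,x)$ was oriented $w\to x$ (outgoing of $w$), after resetting $w$ it becomes $x \to w$, so $x$ \emph{gains} an outgoing edge. Good --- so $v_{\ell+1}$, being an out-neighbor of $v_\ell$ and of all the level-$\ell$ gadget vertices (which are also reset during the cascade), gains one outgoing edge per such neighbor, and by induction there are $\Theta(\ell)$ of them already incident, giving $\deg^+(v_{\ell+1}) \ge \deg^+(v_\ell) - O(1) + \Theta(1)$... this still only gives additive growth of $\Theta(1)$ per level, hence $\Theta(n/\Delta)$ total after $\Theta(n/\Delta)$ levels provided $\deg^+$ never drops --- which is ensured because BF only resets a vertex that is \emph{above} $\Delta$, so along the cascade each $v_\ell$ stays at $\ge \Delta$ until the moment it is reset, at which instant its accumulated value $\ge \Delta + \Theta(\ell)$ has already been realized. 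So the invariant to prove by induction is: \emph{at the moment $v_\ell$ is reset, its outdegree is at least $\Delta + \Theta(\ell)$}, and the last vertex $v_k$ with $k = \Theta(n/\Delta)$ achieves outdegree $\Omega(n/\Delta)$.

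The main obstacle I anticipate is the \emph{scheduling/ordering} of the reset cascade: BF resets vertices in some (possibly adversarial or arbitrary) order among those currently exceeding $\Delta$, and I need the gadgets' contributions at level $\ell$ to be ``cashed in'' to $v_{\ell+1}$ before $v_{\ell+1}$ itself is reset --- otherwise $v_{\ell+1}$ could be reset early (while still just over $\Delta$), dumping its edges forward before the accumulation happens, and the chain breaks. Since the lemma only asserts that BF \emph{may} reach such an outdegree, it suffices to exhibit \emph{one} valid cascade order, so I would design the gadgets so that at each stage exactly one vertex is eligible for reset (the others being at outdegree $\le \Delta$), forcing the desired order. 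Making every intermediate configuration have a \emph{unique} over-threshold vertex, while keeping arboricity $2$ and using only $O(1)$ extra vertices per level, is the fiddly part of the construction and is where I would spend most of the effort.
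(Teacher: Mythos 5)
Your proposal does not reach a correct construction, and the gap is not a matter of ``fiddly'' details but of a basic counting obstruction that your path-plus-constant-gadgets design cannot overcome. A vertex's outdegree at any moment is bounded by its degree, so any vertex that reaches outdegree $\Omega(n/\Delta)$ must have $\Omega(n/\Delta)$ neighbors. In your construction every vertex has degree $\Delta+O(1)$ (each $v_\ell$ sees only $v_{\ell-1}$, $v_{\ell+1}$, and two constant-size gadgets), so your stated invariant ``at the moment $v_\ell$ is reset, its outdegree is at least $\Delta+\Theta(\ell)$'' is impossible for $\ell\gg 1$, and the recurrence $\deg^+(v_{\ell+1})\ge\deg^+(v_\ell)-O(1)+\Theta(1)$ that you write down has no valid derivation: a reset of $v_\ell$ drops $\deg^+(v_\ell)$ to $0$ and hands $v_{\ell+1}$ exactly \emph{one} new outgoing edge (from flipping $(v_\ell,v_{\ell+1})$), so accumulated outdegree is not forwarded along the path at all. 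The quantity that \emph{is} additive is the number of already-reset in-neighbors of a fixed vertex, which is why the accumulation must happen \emph{at} a single high-degree vertex rather than \emph{along} a path.

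The paper's construction exploits exactly this: take an almost-perfect $\Delta$-ary tree oriented toward the leaves, and let every parent of a leaf have $\Delta-1$ leaf-children plus one outgoing edge to a common sink $v^*$. There are $\Theta(n/\Delta)$ parents-of-leaves, so $v^*$ has $\Theta(n/\Delta)$ incoming edges, and the tree edges plus the star into $v^*$ decompose into two forests, giving arboricity $2$. A single insertion at the root pushes it to $\Delta+1$ and the cascade sweeps top-down; when it reaches the parents-of-leaves, each is at outdegree $\Delta+1$ and, if the cascade resets them before touching $v^*$ (allowed, since the lemma only asserts the cascade \emph{may} do this), each reset flips one edge onto $v^*$, raising its outdegree from $0$ to $\Theta(n/\Delta)$. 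Your proposal correctly identifies the ``one valid cascade order suffices'' point and the need to use arboricity $2$ (two edge-disjoint forests funneling into one vertex), but the concrete construction you converge to lacks the indispensable ingredient: a single $\Theta(n/\Delta)$-degree sink. Replacing your path with a tree and collapsing your $\Theta(n/\Delta)$ gadget targets into one shared vertex $v^*$ is essentially the fix, and is the paper's proof.
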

\begin{proof}
Consider an ``almost perfect'' $\Delta$-ary tree oriented towards the leaves.
Specifically, the only difference from a perfect $\Delta$-ary tree is that each of the parents of the leaves has $\Delta - 1$ children
rather than $\Delta$, but it also has an outgoing edge to some vertex $v^*$. So the arboricity of the graph is 2.

Suppose that the outdegree of the root increases to $\Delta + 1$ due to some edge insertion, thus starting a reset cascade.
When the parents of the leaves are reached, they will have outdegree of $\Delta + 1$.
Hence they will be reset one after another,
which gradually increases the outdegree of $v^*$ from 0 to $\Omega(n / \Delta)$.
\QED
\end{proof}
{\bf Remark.}
The lower bound $\Omega(n /\Delta)$ on the maximum outdegree provided by Lemma \ref{basiclb} is tight.
To see this, note that only vertices with degree greater than $\Delta$ may perform resets.
In a graph of arboricity $\alpha$, there are at most $2\alpha (n/\Delta)$ such vertices,
implying that the outdegree of a vertex will not increase by more than $2\alpha (n/\Delta)$ during the reset cascade.
\vspace{8pt}
\\
{\bf Largest outdegree first.~}
There is a natural adjustment to the reset cascade one can make in order to control the outdegree blowup during the cascade,
specifically, to reset vertices of larger outdegree first. This is easily achieved with $O(1)$ overhead on each operation of the cascade, by keeping
the vertices whose outdegree  is larger than $\Delta$ in a heap $H$, using the outdegree of a vertex as its key.
We need to be able to extract the maximum element in $H$ when we decide on the next vertex to reset, and to increase the key of a vertex by $1$ when we flip an edge.
It is straightforward to implement such an heap so that each operation takes $O(1)$ time.
The following lemma shows that this adjustment suffices to control the outdegree from blowing up by more than a logarithmic factor.
We remark that the proof of this lemma is similar to the proofs of Lemma 6 and 7 of \cite{HTZ14}.
% [[S: no need to cite thesis, these lemmas are in their paper]].
\begin{lemma} \label{lem:heu1} %[Proof in App.\ \ref{app:1}]
If we always reset a vertex of largest outdegree first, then the outdegree
of a vertex never exceeds $4\alpha \lceil \log(n / \alpha) \rceil + \Delta$.
\end{lemma}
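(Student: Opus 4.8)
The plan is to track, for a vertex $v$ that performs a reset during the cascade, how large its outdegree could have grown by the time it is chosen. The key structural fact we will exploit is the ``largest outdegree first'' rule: when $v$ is reset, \emph{every} vertex currently in the heap $H$ has outdegree at most that of $v$ at the moment $v$ is extracted. So if $v$'s outdegree is large, then a substantial ``cascade subtree'' of resets feeding into $v$ must have already occurred, each contributing edge flips that raise $v$'s key, and — crucially — each of those resets was itself triggered by a predecessor whose outdegree was at least as large when it acted. This monotonicity lets us set up a layered/amortized bound exactly as in Lemmas 6 and 7 of \cite{HTZ14}.

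Concretely, first I would fix the moment a vertex $v$ reaches its maximum outdegree during the cascade, say outdegree $D$, and let $r$ be the vertex that started the cascade (with outdegree $\Delta+1$ initially). Each unit of $v$'s outdegree above its pre-cascade value $\le\Delta$ corresponds to an in-edge of $v$ that got flipped to point out of $v$, i.e., to a reset of a former out-neighbor of $v$. Define a directed ``reset-ancestry'' forest: charge each such reset (of some $w$) to the reset of \emph{its} triggering out-neighbor, and so on back to $r$. The point of the largest-first rule is that along any such ancestry chain the outdegrees-at-reset-time are nondecreasing going \emph{towards} $v$ (a vertex $w$ is only reset once its outdegree exceeds $\Delta$, and if it is reset before $v$ while $v$ is already in $H$ with outdegree $\ge \Delta'$-ish, the rule forces $w$'s outdegree $\ge v$'s current outdegree). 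Partitioning vertices into ``levels'' by rounding their outdegree-at-reset to powers of two, one shows each level can contribute only a bounded number of flips to $v$ before $v$'s outdegree doubles and $v$ itself would be extracted; this is where the $\log(n/\alpha)$ factor enters — there are only $O(\log(n/\alpha))$ relevant levels because outdegrees range between $\Theta(\alpha)$ and $n$, and at most $O(\alpha n/\Delta)$... (more carefully, an arboricity-$\alpha$ bound caps the number of high-outdegree vertices, and caps the number of in-edges of $v$ that can be simultaneously ``active''). Summing the per-level contributions $4\alpha$ over $\lceil\log(n/\alpha)\rceil$ levels, plus the baseline $\Delta$ from $v$'s own pre-cascade outdegree, gives the claimed bound $4\alpha\lceil\log(n/\alpha)\rceil+\Delta$.

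The main obstacle, and the step I expect to be most delicate, is making the ``monotonicity along reset-ancestry chains'' argument fully rigorous in the presence of the heap dynamics: a vertex's key in $H$ increases by $1$ on each incident flip, and the extract-max only guarantees a comparison at the instant of extraction, so one must argue carefully about the relative timing of when $v$ enters $H$, when its ancestors are extracted, and how many flips can accumulate on $v$ in the interim. I would handle this by a potential/credit argument: assign to each level $j$ a budget, show that a vertex at level $j$ in the ancestry forest can be ``charged'' by $v$ only while $v$'s outdegree lies in the level-$j$ band, and use arboricity to bound the number of distinct ancestors at each level (each is a vertex of outdegree $>\Delta$, and in an arboricity-$\alpha$ graph there are $O(\alpha n/\Delta)$ of those, but more to the point the in-neighbors of $v$ itself feeding a given band number $O(\alpha)$ by a local arboricity count). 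Once the per-band charge is pinned to $4\alpha$, the sum over $\lceil\log(n/\alpha)\rceil$ bands is immediate, and adding $v$'s initial outdegree $\le\Delta$ closes the proof. I would also double-check the boundary cases: the root $r$ (whose chain has length zero) and vertices whose pre-cascade outdegree is already close to $\Delta$, to confirm the additive $\Delta$ is not exceeded.
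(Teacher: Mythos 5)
Your starting point is sound and matches the paper's first ingredient: the largest-outdegree-first rule implies that each unit of $v$'s outdegree above $\Delta$ is caused by a reset of a distinct neighbor that, at the moment of its reset, had outdegree at least as large as $v$'s at that moment (this is precisely the paper's Claim \ref{claim:1}, giving $v$ distinct neighbors $v_{\Delta+1},\dots,v_{\Delta^*}$ with outdegree $\ge i-1$). However, the second half of your plan diverges from the paper in a way that contains a genuine gap. You want a per-band charging argument in which ``the in-neighbors of $v$ itself feeding a given band number $O(\alpha)$ by a local arboricity count.'' That step fails: arboricity bounds the density of \emph{induced subgraphs}, not the degree of an individual vertex, so $v$ may have arbitrarily many in-neighbors whose resets feed a single band; nothing local to $v$ caps the per-band contribution at $O(\alpha)$. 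Relatedly, your bands are ``powers of two'' in outdegree, but the bound to be proved, $4\alpha\lceil\log(n/\alpha)\rceil + \Delta$, is \emph{additive} in $\alpha\log(n/\alpha)$: the relevant thresholds are linearly spaced by $4\alpha$, not geometrically spaced, so the level structure in your proposal does not line up with the statement.

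What the paper actually does after Claim \ref{claim:1} is a ball-doubling argument (Claim \ref{claim:2}): it shows inductively that the set of vertices within distance $t$ of $v$ whose outdegree during the cascade reaches $\ge 4\alpha(k-t)+\Delta$ has size at least $2^t\alpha$. The induction step applies Claim \ref{claim:1} to \emph{each} vertex in the current set, collects the resulting $4\alpha$ high-outdegree neighbors per vertex, and then applies the arboricity bound to the \emph{induced subgraph on the union} (which has $\ge |V_t|\cdot 4\alpha/2$ edges, hence $\ge 2^{t+1}\alpha$ vertices). So arboricity is used globally on an exponentially growing induced subgraph, not locally at $v$; the $\log(n/\alpha)$ arises because after $k=\lceil\log(n/\alpha)\rceil$ doublings the set would exceed $n$ vertices, a contradiction. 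Your ``reset-ancestry backward in time'' framing is a plausible alternative lens, but as written it does not supply the exponential growth mechanism that the lemma's bound depends on, and the one place you invoke arboricity it is invoked on the wrong object (a single vertex's neighborhood rather than an induced subgraph). To salvage your approach you would need to replace the local count at $v$ with an argument about the whole set of ancestors at a given level, at which point you would in effect be reproving Claim \ref{claim:2}.
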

\begin{proof}
%\begin{proof}
To prove Lemma \ref{lem:heu1}, we employ the following two claims.
%[[S: give some credit to \cite{HTZ14}?]] [[H: I added the sentence above, please fill the ref]]
\begin{claim} \label{claim:1}
A vertex $v$ that has outdegree $\Delta^* > \Delta$ during the cascade has distinct neighbors
$v_{\Delta+1},\ldots,v_{\Delta^*}$ where the outdegree of $v_i$ during the cascade is at least $i-1$.
\end{claim}
\begin{proof}
Focus on an arbitrary vertex $v$ and consider a maximal subsequence of the reset cascade in which the outdegree of
$v$ does not decrease. At the beginning of this subsequence, vertex $v$ has  outdegree $\le \Delta$ ($v$ has  outdegree $0$ if it was reset
just before the subsequence starts, and outdegree $\le \Delta$ if the subsequence starts with the first reset of the cascade).
By the largest-reset adjustment, the
 outdegree of $v$ increases from $i$ to $i+1$ due to a reset on a neighbor $v_i$ of outdegree $\ge i$.
Clearly $v_i\not= v_j$, so the claim follows.
\QED
\end{proof}

\begin{claim} \label{claim:2}
Let $v$ be a vertex of outdegree $\Delta^* \ge 4\alpha k + \Delta$ during the cascade.
Then for every $t$, $1\le t\le k$, there are $\ge 2^t\alpha$ vertices at distance $\le t$ from $v$
whose outdegree during the cascade is $\ge 4\alpha(k-t) + \Delta$.
\end{claim}
\begin{proof}
The proof is by induction on $t$. The basis $t=1$ follows   from Claim \ref{claim:1}.
For the induction step, we assume the statement holds for some $t < k$, and prove it for $t+1$.
Let $V_t$ the set of vertices at distance  $\le t$ from $v$
whose outdegree during the cascade is $\ge 4\alpha(k-t) + \Delta$.
By induction $|V_t|\ge 2^t\alpha$.
By Claim \ref{claim:1}, each $v\in V_t$ has $4\alpha$ neighbors whose outdegree (and degree)
during the cascade is $\ge 4\alpha(k-t-1) + \Delta$. %In particular, the degree of all these vertices is at least $4\alpha(k-t-1) + \Delta$.
Let $V'_{t}$ be the set of all these neighbors, and note that all vertices in $V_{t+1}=V_t\cup V_t'$
are at distance $\le t+1$ from $v$ and their outdegree during the cascade is $\ge 4\alpha(k-t-1) + \Delta$.
Moreover, the number of edges in the graph induced by $V_{t+1}$ is $\ge |V_t|\frac{4\alpha}{2}\ge 2^{t+1}\alpha^2$.
Since the arboricity of the graph induced by $V_{t+1}$ is at most $\alpha$, it follows that this graph must have $\ge 2^{t+1}\alpha$ vertices, which completes the induction step.
\QED
\end{proof}

We conclude that the outdegree of a vertex $v$ cannot exceed $4\alpha \lceil \log(n / \alpha) \rceil + \Delta$,
as otherwise there would be more than $n$ vertices in the graph by Claim \ref{claim:2}.
This completes the proof of Lemma \ref{lem:heu1}. %\QED
\end{proof}

%\medskip

We next show that the upper bound  of Lemma \ref{lem:heu1} is tight for the BF algorithm with the above adjustment.
Our lower bound holds even if we make another natural adjustment to the   algorithm, %(that might help in controlling the outdegree blowup during the cascade
where we orient a newly inserted edge from the vertex with lower outdegree to the vertex with higher outdegree.

For every $i\ge 2$, we define a directed graph $G_i$ on $2^i$ vertices, in which each vertex has outdegree $2$, except for two special
vertices that have
ourdegree $0$.
The graphs $G_2$ and $G_3$ are shown in Figure \ref{fig1}. The graph $G_2$ consists of two vertices, denoted by $a$ and $b$, and a cycle of length $2$ which we denote by
$C_1$.
\begin{figure}[h!]
\begin{center}
\hspace*{-3cm}
\includegraphics[scale=0.5]{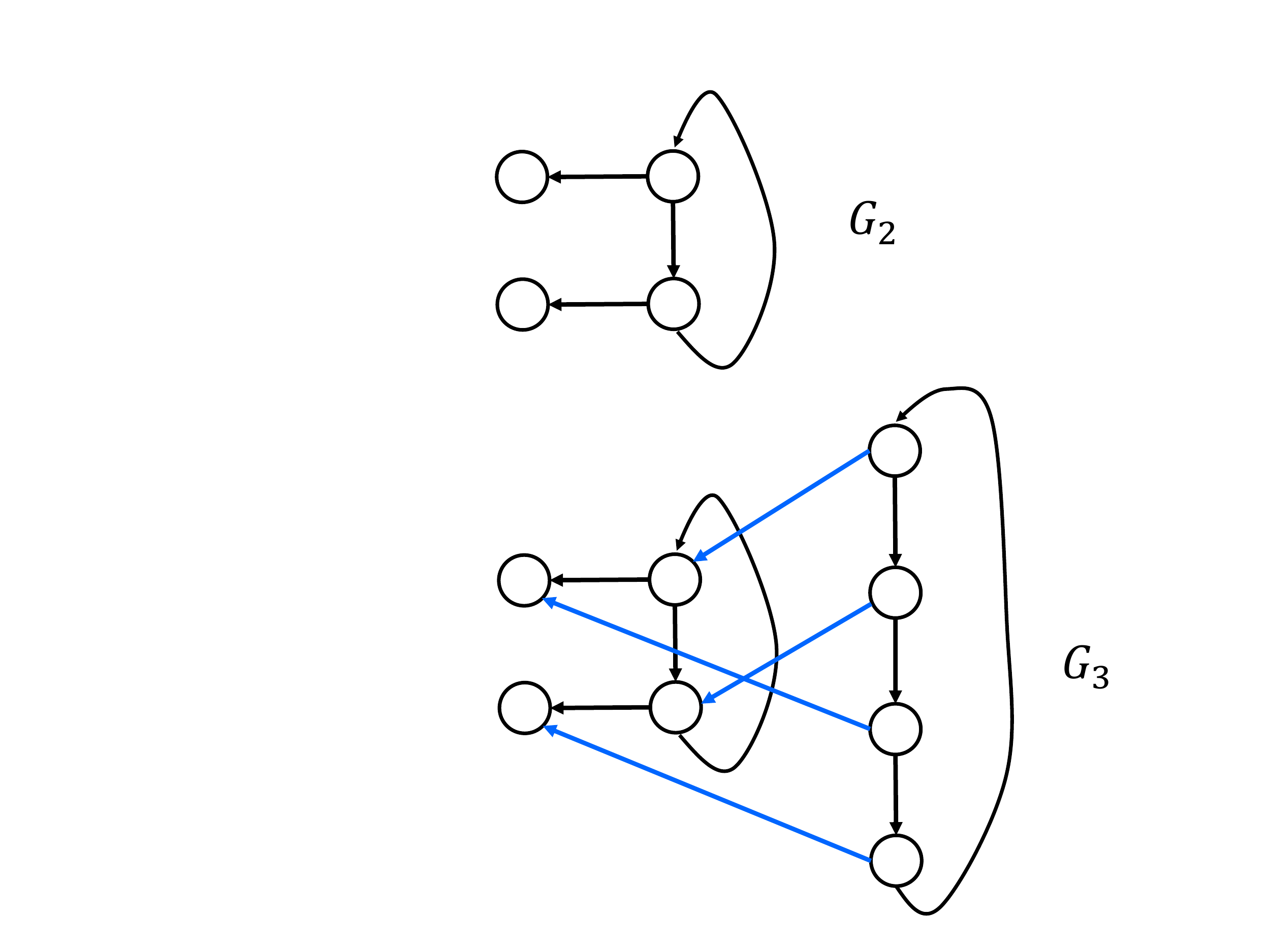}
\end{center}
\caption{The graphs $G_2$ and $G_3$ \label{fig1}}
\end{figure}

In general we obtain $G_{i+1}$ from $G_i$ by adding to $G_i$ a cycle $C_i$ on $2^{i}$ vertices and an outgoing edge from
each vertex of $C_i$ to a unique (but arbitrary) vertex of $G_i$, such that each vertex of $G_i$ is connected  in $G_{i+1}$ to a single vertex of
$C_i$. The proofs of the following observation and lemmas are immediate.

\begin{observation}
For any $i\ge 2$ the
 graph $G_i$ has $2^i$ vertices. Each vertex of $G_i$ has outdegree $2$ except for the vertices $a$ and $b$ of $G_2$ that have outdegree
$0$.
\end{observation}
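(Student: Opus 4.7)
The statement is a purely structural claim about the family $\{G_i\}$, so the natural approach is induction on $i$. The base case $i=2$ is checked by direct inspection of the construction in Figure \ref{fig1}: the two designated vertices $a,b$ together with the two vertices of the cycle $C_1$ give $4=2^2$ vertices, $a$ and $b$ have no outgoing edges, and each of the two cycle vertices of $C_1$ contributes exactly one cycle edge plus exactly one edge to $\{a,b\}$, yielding outdegree $2$.

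For the inductive step, I would assume the statement for $G_i$ and verify it for $G_{i+1}$. The vertex count is immediate: by construction $G_{i+1}$ adds the $2^i$ vertices of the cycle $C_i$ to the $2^i$ vertices of $G_i$, so $G_{i+1}$ has $2^i+2^i = 2^{i+1}$ vertices. For the outdegree claim, I would split the vertex set of $G_{i+1}$ into the two groups and argue them separately. The vertices of $G_i$ only gain \emph{incoming} edges in the passage to $G_{i+1}$ (the edges out of $C_i$ are directed into $G_i$), so their outdegrees are preserved; by the induction hypothesis, these are $2$ everywhere except at $a$ and $b$, which remain at $0$. For a vertex $v$ of the newly added cycle $C_i$, the construction attaches to $v$ exactly one cycle edge out of $v$ along $C_i$ and exactly one edge from $v$ into a vertex of $G_i$ (the one-to-one pairing between $C_i$ and $V(G_i)$ ensures that there is exactly one such edge, not several), for a total outdegree of $2$.

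There is no real obstacle here: the claim is a bookkeeping consequence of the recursive definition, and the only point that deserves an explicit sentence is that the edges going between $C_i$ and $G_i$ are oriented \emph{from} $C_i$ \emph{to} $G_i$, so they affect only the outdegrees of $C_i$-vertices and never modify the outdegrees inherited from $G_i$. Once this orientation is noted, the induction closes in a single line.
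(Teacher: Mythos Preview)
Your proposal is correct and is exactly the natural argument the paper has in mind; in fact the paper does not even write out a proof, merely stating that the observation is immediate. Your induction spells out precisely the bookkeeping that makes it so.
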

%\begin{proof}
%By induction on $i$.
%$G_2$ has two vertices of outdegree $0$ and two vertices of outdegree $2$.
%When we construct $G_{i+1}$ from $G_i$ each vertex of $C_i$  has two outgoing edges, one to the next vertex on $C_i$ and another to a vertex in $G_i$. The outdegree of a vertex in $G_i$ does not change when we construct $G_{i+1}$. [[S: the proof is redundant, can state this lemma as an observation]]
%\QED
%\end{proof}

\begin{lemma}  \label{forapp2}
The arboricity of $G_i$ is $2$.
\end{lemma}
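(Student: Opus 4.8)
The plan is to prove that the arboricity of $G_i$ equals $2$ by exhibiting an explicit decomposition of its edge set into two forests, and then checking that no single forest suffices. Recall the construction: $G_2$ is the $2$-cycle $C_1$ together with two isolated-outdegree vertices $a,b$ (so $G_2$ has $4$ vertices and every vertex except $a,b$ has outdegree $2$), and $G_{i+1}$ is obtained from $G_i$ by adding a fresh cycle $C_i$ on $2^i$ vertices, plus a perfect matching $M_i$ between $C_i$ and $V(G_i)$ oriented from $C_i$ into $G_i$ (each vertex of $C_i$ gets one outgoing edge to a distinct vertex of $G_i$). Thus $E(G_{i+1}) = E(G_i) \,\dot\cup\, E(C_i) \,\dot\cup\, M_i$, and $|E(G_i)| = 2^i - 2$ so that every induced subgraph has density at most $2$ once we verify the upper bound.

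First I would establish the lower bound $a(G_i)\ge 2$: for $i\ge 2$ the graph $G_i$ contains a cycle (e.g.\ $C_1$), and a graph with a cycle cannot be decomposed into a single forest, so $a(G_i)\ge 2$. (Equivalently, by the Nash-Williams formula, $a(G_i)\ge \lceil |E(C_1)|/(|V(C_1)|-1)\rceil = \lceil 2/1\rceil = 2$.)

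For the upper bound I would proceed by induction on $i$, constructing a decomposition of $E(G_i)$ into two forests $F_i^{(1)}, F_i^{(2)}$. For the base case $i=2$: $G_2$ has edge set $E(C_1)$, which is two parallel-direction edges forming a $2$-cycle on vertices $\{x,y\}$ say; put one of these edges in $F_2^{(1)}$ and the other in $F_2^{(2)}$, so each $F_2^{(j)}$ is a single edge, hence a forest. For the inductive step, assume $E(G_i) = F_i^{(1)} \,\dot\cup\, F_i^{(2)}$ with each $F_i^{(j)}$ a forest. The cycle $C_i$ on $2^i$ vertices has $2^i$ edges; split them around the cycle alternately — but since $|C_i|=2^i$ is even, we can two-color the edges of the cycle so that each color class is a set of $2^{i-1}$ disjoint edges (a matching along the cycle), hence each color class is a forest. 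Assign one color class of $C_i$ to $F_{i+1}^{(1)}$ and the other to $F_{i+1}^{(2)}$. Finally distribute the matching edges $M_i$: since $M_i$ is a perfect matching between $V(C_i)$ and $V(G_i)$, every vertex of $G_{i+1}$ is incident to at most one edge of $M_i$, so $M_i$ is itself a forest (indeed a matching); split $M_i$ arbitrarily between the two forests — in fact we can just add all of $M_i$ to $F_{i+1}^{(1)}$. We must check that the union stays acyclic: $F_{i+1}^{(1)} = F_i^{(1)} \cup (\text{half of }C_i) \cup M_i$. The three pieces live on pairwise "almost disjoint" vertex sets — $F_i^{(1)}$ is inside $V(G_i)$, the $C_i$-half is inside $V(C_i)$, and $M_i$ only connects $V(C_i)$ to $V(G_i)$ using each vertex at most once — so any cycle in $F_{i+1}^{(1)}$ would have to alternate between the $C_i$-side and the $G_i$-side crossing via $M_i$, but $M_i$ being a perfect matching means each vertex has $M_i$-degree exactly $1$, which forbids a cycle that enters and leaves the $C_i$-side more than "once per vertex"; a careful case check (or a direct degree/parity argument) shows $F_{i+1}^{(1)}$ is a forest, and $F_{i+1}^{(2)} = F_i^{(2)} \cup (\text{other half of }C_i)$ is a forest as a disjoint union of two forests on disjoint vertex sets. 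Hence $a(G_{i+1})\le 2$, completing the induction.

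The main obstacle I anticipate is the acyclicity check in the inductive step when bundling $M_i$ into one of the forests: one has to argue carefully that adding a perfect matching between two vertex-disjoint forests cannot create a cycle unless the matching itself, together with paths in the two forests, closes up — and since each matched vertex is used exactly once by $M_i$, a cycle through $M_i$ would need at least two matching edges sharing the "pattern" in a way that forces repeated vertices, contradiction. If this direct argument gets delicate, the clean fallback is to invoke the Nash-Williams arboricity theorem directly: show that for every $U \subseteq V(G_i)$ we have $|E(U)| \le 2(|U|-1)$. This follows by another induction on $i$ together with the structural fact that $G_{i+1}$ restricted to $U$ decomposes along the cut between $U\cap V(C_i)$ and $U\cap V(G_i)$, using $|E(G_i)| = 2^i-2 = 2(|V(G_i)|-1)$ and that $C_i$ contributes at most $|U\cap V(C_i)|$ edges and $M_i$ at most $|U\cap V(C_i)|$ edges; bounding these three contributions and summing gives $|E(U)|\le 2(|U|-1)$, which is exactly $a(G_{i+1})\le 2$. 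Either route is short; I would present the explicit two-forest decomposition as the primary proof since it is the most transparent, and mention the Nash-Williams count as the robust alternative.
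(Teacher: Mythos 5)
There is a genuine gap in the inductive step, and it is precisely at the spot you flag as the ``main obstacle.'' The decomposition you propose assigns one matching-half of the cycle edges of $C_i$ to $F_{i+1}^{(1)}$ and then dumps \emph{all} of $M_i$ (or ``split $M_i$ arbitrarily'') into $F_{i+1}^{(1)}$. That set can contain a cycle even before you union it with $F_i^{(1)}$. Concretely, take $G_3$: $V(C_2)=\{e,f,g,h\}$ in cyclic order, $M_2=\{\{e,a\},\{f,b\},\{g,c\},\{h,d\}\}$, and let $F_2^{(1)}$ contain the edge $\{c,d\}$ of $C_1$. Choosing the matching-half $\{\{e,f\},\{g,h\}\}$ and all of $M_2$ gives $F_3^{(1)}\supseteq\{\{c,d\},\{g,h\},\{g,c\},\{h,d\}\}$, which is the $4$-cycle $c\!-\!d\!-\!h\!-\!g\!-\!c$. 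The informal degree/parity argument (``$M_i$ has degree $1$ at every vertex, so no cycle'') does not rule this out: a cycle can use two $M_i$-edges at two different $C_i$-vertices, connected by one $C_i$-edge on one side and a path in $F_i^{(1)}$ on the other. The paper avoids this by correlating the split with the \emph{source} vertex, not the edge type: index the vertices of $C_i$ from $1$ to $2^i$ and put \emph{both} outgoing edges (the cycle edge and the $M_i$-edge) of every odd-indexed vertex into $F_1$, and both outgoing edges of every even-indexed vertex into $F_2$. Then the new edges added to $F_1$ form vertex-disjoint length-$2$ paths $v_{j+1}\!-\!v_j\!-\!w_j$, each touching $V(G_i)$ at exactly one vertex $w_j$, so grafting them onto $F_i^{(1)}$ cannot close a cycle. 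Your Nash--Williams fallback would work in principle, but as written it uses the wrong edge count: $|E(G_i)|=2(2^i-2)=2^{i+1}-4$, not $2^i-2$ (every non-$\{a,b\}$ vertex has outdegree $2$; this also means your base case is missing the two edges from $C_1$ into $\{a,b\}$, so $|E(G_2)|=4$, not $2$). With the corrected count, $|E(G_i)|=2^{i+1}-4\le 2(2^i-1)$, so the density argument still goes through, but the counting in the inductive bound on $|E(U)|$ has to be redone.
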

\begin{proof}
By induction on $i$. We can easily decompose $G_2$ into two forests.
Assuming we can decompose $G_i$ into two forests $F_1$ and $F_2$, we decompose $G_{i+1}$ into two forests as follows.
We index the vertices on $C_i$ from $1$ to $2^i$,
and add to $F_1$ (respectively, $F_2$) the two outgoing edges of every vertex of odd (resp., even) index.
It is easy to verify that $F_1$ and $F_2$ are cycle-free.
%even edges of $C_i$ and by adding to $F_2$ all the odd edges of $C_i$.
\QED
\end{proof}

\begin{lemma}  \label{forapp3}
We can construct $G_i$ starting from an empty graph on $2^i$ vertices by inserting the edges one after another,
such that each edge is oriented from the vertex of lower outdegree to the vertex of higher outdegree at the time of its insertion.
\end{lemma}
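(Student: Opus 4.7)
The plan is to proceed by induction on $i$, exhibiting an explicit insertion order for each $G_i$ that is compatible with the ``lower-outdegree-to-higher-outdegree'' rule, breaking ties in our favor whenever both endpoints have equal outdegree at insertion time. For the base case $i=2$, writing $C_1=\{c,d\}$ and letting $a,b$ denote the other two vertices, I would insert the four edges of $G_2$ in the order $(c,a),(d,b),(c,d),(d,c)$. The first two are inserted with both endpoints at outdegree $0$, and the third with both endpoints at outdegree $1$, so each is a tie that we break in the desired direction; the fourth is inserted with $d$ at outdegree $1$ and $c$ at outdegree $2$, so the orientation $d\to c$ is actually forced by the rule. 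The resulting outdegrees are $c=d=2$ and $a=b=0$, matching the definition of $G_2$.

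For the inductive step, suppose $G_i$ admits such an insertion sequence, and let $u_1,\ldots,u_{2^i}$ denote the vertices of the newly added cycle $C_i$, with $w_j$ the unique partner of $u_j$ in $G_i$. I would build $G_{i+1}$ in three phases. In Phase~1, run the sequence from the inductive hypothesis among the vertices of the $G_i$-part, leaving the $u_j$'s isolated while the $G_i$-vertices attain outdegree $2$ (except $a,b$, which remain at $0$). In Phase~2, insert the $2^i$ matching edges $(u_j,w_j)$ in any order: at each such insertion $u_j$ has outdegree $0$ and $w_j$ has outdegree $0$ or $2$, so orienting from $u_j$ to $w_j$ respects the rule (strictly when $w_j\notin\{a,b\}$, by favorable tie-breaking otherwise), and each $u_j$ ends Phase~2 at outdegree $1$. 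In Phase~3, insert the cycle edges of $C_i$ in the cyclic order $(u_1,u_2),(u_2,u_3),\ldots,(u_{2^i-1},u_{2^i}),(u_{2^i},u_1)$: for the first $2^i-1$ edges both endpoints are still at outdegree $1$ at the moment of insertion (the head of each preceding cycle edge stays at outdegree $1$ until its own outgoing cycle edge is added), so the desired orientation is obtained by tie-breaking; for the closing edge $(u_{2^i},u_1)$, vertex $u_{2^i}$ is at outdegree $1$ while $u_1$ is already at outdegree $2$, so the orientation $u_{2^i}\to u_1$ is forced by the rule. Every $u_j$ thus ends with outdegree $2$, and together with Phase~1 this yields exactly the edge set and orientation of $G_{i+1}$.

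The only delicate point in the argument is the handling of ties: the ``lower-to-higher'' rule does not dictate an orientation when both endpoints have equal outdegree, and both the base case and Phase~3 of the inductive step rely essentially on resolving ties in our favor. Since Lemma~\ref{forapp3} is purely existential, the adversary constructing the lower bound is free to assume whichever tie-breaking convention makes the construction go through, so this is a feature rather than an obstacle. Beyond the tie-breaking convention, the proof is a routine phase-by-phase outdegree bookkeeping exercise, which I expect to be the easiest part of the argument.
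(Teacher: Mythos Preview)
Your proof is correct and follows essentially the same approach as the paper's: both proceed by induction on $i$, and in the inductive step both construct $G_{i+1}$ in the same three phases (first the edges of $G_i$ via the inductive hypothesis, then the matching edges from $C_i$ to $G_i$, and finally the cycle edges of $C_i$). The paper's proof is a two-line sketch that leaves the outdegree bookkeeping and tie-breaking implicit, whereas you spell these out carefully; but the underlying argument is identical.
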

%\subsection{Proof of Lemma \ref{forapp2}}   \label{app:2}
\begin{proof}
By induction on $i$. To construct $G_{i+1}$ we first add the edges of $G_i$, then the edges from $C_i$ to the vertices of
$G_i$ and last the edges between the vertices of $C_i$. It is easy to verify that the orientations are assigned properly if newly inserted edges
are oriented towards the higher outdegree endpoint.
%second heuristics.
\QED
\end{proof}

%\medskip

Assume for simplicity that $\Delta = 2$ and consider the reset cascade that starts when we add to some vertex $v$ of $G_i$ an outgoing edge such that
its outdegree  increases to $3$.
(This edge to be oriented out of $v$ should be incident to a vertex whose outdegree is not smaller than the outdegree of $v$ and is external to $G^I$.)
Flipping $v$ increases the outdegree of the vertex $v'$ following $v$ on $C_{i-1}$, as well as the outdegree of some vertex in $G_{i-1}$ connected to $v$. So the next flip may be on $v'$. We continue this way flipping all vertices of $C_{i-1}$
while increasing the outdegree of all vertices of $G_{i-1}$ from $2$ to $3$, except for vertices $a$ and $b$ of $G_2$ whose
outdegree increases from $0$ to $1$.
Next we flip the vertices of $C_{i-2}$ and so on. Right before flipping the vertices of $C_1$ they have outdegree $i$.
%When we finish flipping the vertices of $C_1$ then vertices $a$ and $b$ of
%$G_2$  have outdegree $i-1$ [[S: actually, before flipping the vertices of $C_1$, these guys have outdegree $i$; so there's no reason to flip them]].
The following lemma specifies the invariant being maintained during the cascade.
Its proof is straightforward by induction on the operations of the cascade.
\begin{lemma} \label{lem:invariant}
When we flip the vertices of $C_j$ for some $j \le i-1$, the outdegrees of   vertices are as follows:
%\begin{enumerate}
~~~(1)
Vertices of $C_\ell$ for $j < \ell \le i-1$ have outdegree $\le 3$.
~~~(2)
Vertices of $C_j$ that were already flipped have outdegree $\le 2$.
~~~(3)
A vertex of $C_\ell$ for $\ell < j$ that is incident to a vertex of $C_j$ that was already   flipped has
outdegree $1+i-\ell$ and a vertex of $C_\ell$ for $\ell < j$ that is incident to a vertex of $C_j$ that was not flipped already has
outdegree $i-\ell$.
%\end{enumerate}
\end{lemma}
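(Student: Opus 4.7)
The plan is to establish the invariant by induction on the number of flip operations performed so far in the cascade, as explicitly suggested in the lemma. The base case is the state just before the cascade's first flip of $v \in C_{i-1}$: at this moment every cycle vertex has its initial outdegree, no vertex has been flipped, and each of the three parts of the invariant either holds vacuously or can be verified directly.

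For the inductive step, consider the next flip, which by the cascade strategy targets some vertex $v \in C_j$ after all of $C_{i-1}, \ldots, C_{j+1}$ have been fully flipped and some (possibly empty) prefix of $C_j$ has been flipped. The flip changes the outdegree only of $v$ and its neighbors, which split naturally into three groups: the two cycle neighbors of $v$ within $C_j$; the unique ``descendants'' of $v$ in each lower cycle $C_\ell$ with $\ell < j$, reachable through the chain of nested $C_m \to G_m$ bijection edges; and the unique vertex in each higher cycle $C_m$ with $j < m \le i-1$ connected to $v$ via the corresponding bijection edge (all such vertices having already been flipped).

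The verification of each of the three parts after the flip is a case-by-case check that reduces to examining how a single edge flip affects outdegrees at its endpoints. For part (2), $v$'s outdegree drops to $0$ and can grow by at most $1$ from the subsequent flip of its forward cycle neighbor in $C_j$, so $v$ remains $\le 2$; previously flipped $C_j$ vertices are affected only if they are cycle neighbors of $v$, and in that case their outdegree adjusts by one while remaining in $[0,2]$. For part (1), a previously flipped $x \in C_\ell$ with $\ell > j$ may gain at most $1$ from $v$'s flip (in case $v$ is $x$'s bijection partner at level $\ell$), staying $\le 3$. For part (3), exactly one vertex in each $C_\ell$ with $\ell < j$ (the endpoint of the chain of bijections starting at $v$) transitions from the ``not incident'' side to the ``incident'' side of statement (3), with its outdegree rising by exactly $1$ so that the prescribed jump from $i - \ell$ to $1 + i - \ell$ is preserved; all other $C_\ell$ vertices are untouched by this flip and retain their invariant status.

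The main obstacle is to track every edge's orientation cleanly through the long sequence of flips. Fortunately, the direction of an edge $(u,x)$ at any time is determined purely by its initial orientation and the parity of the combined number of flips applied to $u$ and $x$ up to that moment; combined with the phased structure of the cascade (each $C_m$ is completed before $C_{m-1}$ begins, and within each $C_m$ the flips proceed along the cycle), each orientation question reduces to a short parity argument, so the inductive verification of all three parts is routine.
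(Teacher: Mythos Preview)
Your overall strategy---induction on the flip operations of the cascade---is exactly what the paper does (the paper in fact gives no further detail than ``straightforward by induction on the operations of the cascade''). However, your inductive analysis rests on a misunderstanding of the graph structure.

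You list the neighbors of a vertex $v\in C_j$ as: its two cycle neighbors on $C_j$; one ``descendant'' in \emph{each} lower cycle $C_\ell$ for $\ell<j$, reached via a chain of bijection edges; and one neighbor in each higher cycle $C_m$ for $m>j$. The last group is correct (since $v\in G_m$ for every $m>j$ and the bijection $C_m\to G_m$ gives $v$ exactly one partner there), but the second group is not. By construction, $v$ has \emph{exactly one} outgoing bijection edge into $G_j$, landing on a single vertex $w$ that sits in \emph{one particular} $C_\ell$ (or is $a$ or $b$); there is no edge from $v$ to any other lower cycle, and the ``chain'' vertices further down are not neighbors of $v$ at all. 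Consequently, flipping $v$ changes the outdegree of exactly one lower vertex, not one per lower cycle.

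This error invalidates your verification of part~(3). You assert that the flip of $v$ makes ``exactly one vertex in each $C_\ell$ with $\ell<j$'' transition between the two cases of~(3), with its outdegree rising by~$1$. In fact only the single bijection partner $w\in G_j$ is affected; vertices in the remaining lower cycles are untouched by this flip and must keep their status for a different reason (namely, their own $C_j$-partner has not yet been flipped). Once you correct the neighbor set of $v$ to ``two $C_j$ cycle neighbors, one vertex in $G_j$, and one vertex in each $C_m$ with $m>j$,'' the inductive step for all three parts goes through by the same kind of local bookkeeping you outline; the argument is then indeed routine, as the paper says.
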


By applying  the  Invariant   of Lemma \ref{lem:invariant} to the point when we finished flipping the vertices of
$C_2$,
 it follows that
%\begin{lemma}
during  a cascade on $G^i$ that starts by increasing the outdegree of a vertex of $C_{i-1}$, we get that the vertices of
$C_1$
have outdegree $i$ right before they are flipped.
%\end{lemma}
We derive the following corollary.
\begin{corollary}
The BF algorithm with the two adjustments above may blowup the outdegree of a vertex to $\log n $ during an insertion into
a graph with $O(n)$ vertices. (In fact $n+1$ vertices suffice.)
%[[S: to $\log n$ if we stop before resetting $C_1$]]
%during a reset cascade initiated by an edge insertion in a graph with arboricity 2 and $n+1$ vertices.
%[[S: Actually, we haven't said anything about the $n+1$th vertex $x$. Why would the newly inserted edge be oriented towards it?
%(We must have this property to start the reset cascade.)
%We can say the following thing: add an edge from $a$ to $b$,
%then add an edge from $x$ to $b$, finally add an edge from $x$ to $a$; this agrees with the second adjustment, and causes
%the outdegree of $x$ to be 2. Hence the newly inserted edge may indeed be oriented towards $x$.
%Let's ignore this minor issue for the soda submission...]]
\end{corollary}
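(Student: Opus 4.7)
The plan is to combine the construction of the graphs $G_i$ (Lemma \ref{forapp2}, Lemma \ref{forapp3}) with the invariant of Lemma \ref{lem:invariant} to exhibit a concrete scenario in which the BF algorithm, equipped with both the largest-outdegree-first rule and the low-to-high orientation rule for newly inserted edges, produces a vertex of outdegree $\Theta(\log n)$ during a single reset cascade.

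First I would invoke Lemma \ref{forapp3} to build $G_i$ from the empty graph by a sequence of edge insertions, each oriented from the lower-outdegree endpoint to the higher-outdegree endpoint. Since the resulting graph has every vertex at outdegree $2$ except for two vertices at outdegree $0$, and we have set $\Delta=2$, no reset cascade is ever triggered during this construction phase. By Lemma \ref{forapp2}, the arboricity of $G_i$ is $2$, so the arboricity bound of the statement is honored throughout. Next, I would add a single auxiliary vertex $w$ external to $G_i$, equipped with sufficiently many incident edges so that its outdegree is at least $3$; then I would insert an edge between $w$ and some designated vertex $v\in C_{i-1}$. By the low-to-high orientation rule, this edge is oriented from $v$ to $w$ (since $w$ has the higher outdegree at the moment of insertion), pushing $v$'s outdegree to $3$ and initiating a reset cascade.

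At this point I would appeal directly to Lemma \ref{lem:invariant}: under the largest-outdegree-first rule, the cascade first flips the vertices of $C_{i-1}$, then the vertices of $C_{i-2}$, and so on, walking inward through the nested cycles $C_{i-1},C_{i-2},\ldots,C_2$. Applying the third clause of the invariant with $j=2$ and $\ell=1$ at the moment the flipping of $C_2$ is complete, every vertex of $C_1$ has outdegree exactly $i$ right before it is flipped. Since $|V(G_i)|+1=2^i+1$, setting $n=2^i+1$ gives a graph on $n+1$ vertices in which the BF algorithm with the two adjustments produces a vertex of outdegree $i=\log_2(n-1)=\Theta(\log n)$, establishing the corollary.

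The main obstacle is justifying that the largest-outdegree-first rule is compatible with the inward-sweeping order of cascade flips described above, i.e.\ that at each step the vertex currently processed is indeed of maximum outdegree (or can be chosen as such under ties). This is precisely what the three clauses of Lemma \ref{lem:invariant} certify: vertices of cycles already processed have outdegree at most $2$; vertices on the cycle currently being processed have the highest outdegree (growing as we move outward in the sequence); and vertices on deeper cycles $C_\ell$ with $\ell<j$ have outdegree strictly smaller than those on $C_j$ at the relevant time. Hence the desired cascade order is consistent with the tie-breaking rule, and the corollary follows by direct inspection, with no further calculation required beyond the arithmetic $i=\log_2(n-1)$.
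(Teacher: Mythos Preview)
Your proposal is correct and follows essentially the same approach as the paper: build $G_i$ via Lemmas~\ref{forapp2} and~\ref{forapp3}, trigger a cascade by inserting one edge to an external vertex, and then read off from the invariant of Lemma~\ref{lem:invariant} that the vertices of $C_1$ reach outdegree $i$ just before they are flipped. The paper's own argument is exactly this, stated in a single sentence.

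Two small points to tidy up. First, your vertex count is slightly garbled: you set $n=2^i+1$ but then refer to ``a graph on $n+1$ vertices''; the cleaner bookkeeping is $|V(G_i)|=2^i=n$, one external vertex makes $n+1$, and the achieved outdegree is $i=\log_2 n$. Second, your justification that the largest-outdegree-first rule is compatible with the inward sweep is not quite right as phrased: it is \emph{not} true that vertices on deeper cycles $C_\ell$ with $\ell<j$ have strictly smaller outdegree than those on $C_j$ during the processing of $C_j$ --- in fact clause~(3) of the invariant gives them outdegree $i-\ell$ or $1+i-\ell$, which for small $\ell$ is larger. What actually happens is that at each step there are \emph{ties} for the maximum outdegree, and the described order is one admissible tie-breaking; this is why the paper writes ``the next flip \emph{may} be on $v'$''. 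Your parenthetical ``or can be chosen as such under ties'' is the correct caveat --- just drop the incorrect ``strictly smaller'' claim that precedes it.
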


If the threshold of the BF algorithm is some $\Delta > 2$ then we can adapt the example described above by adding to each vertex
$\Delta - 2$ ``private'' neighbors. This increases the number of vertices to $n'= n(\Delta-1)+1$.
The maximum outdegree reached during the reset cascade is $\log(n'/(\Delta-2)) + \Delta-3$,
hence this lower bound matches the upper bound of Lemma \ref{lem:heu1} up to a constant factor,
for graphs of constant arboricity.

%[[S: **I stopped here**. I don't think it makes sense to invest precious time in the generalization for higher arboricity.
%I made some changes to the text, and I also left some [[S: ...]] comments for you to read; they require our attention]]

\medskip
%\ignore{

%}
Next, we  generalize the construction to show that the BF algorithm with the two adjustments above may blowup the outdegree of
a vertex to $\Omega(\alpha \log (n/\alpha))$ during a reset cascade initiated by an edge insertion in a graph with arboricity $\alpha$ and $n$ vertices.
%This generalization is provided in .

%\subsection{Proof of Lemma \ref{lem:heu1}} \label{app:1}

%\subsection{Proof of Lemma \ref{forapp3}}   \label{app:3}
%\subsection{Generalization of the construction to $\alpha > 2$} \label{app:general}

We describe the construction in two stages. First we need to change the graph $G_i$ slightly for technical reasons,
and then we construct a graph $G^\alpha_i$ on which we demonstrate the reset cascade.

The technical change of $G_i$ is as follows.

\begin{enumerate}
\item We change $G_2$ to the graph in Figure \ref{fig2}.
%We denote by $a$ the vertex of outdegree $0$ in $G_2$.
\item When we construct $G_{i+1}$ from $G_i$ we make the cycle $C_{i}$ of length $|V_i|+1$ (rather than $|V_i|$), where $V_i$ is the vertex set of $G_i$.
One special vertex of $C_i$ is not connected to any vertex of $G_i$ in $G_{i+1}$. We denote this special vertex by $s_i$.
\end{enumerate}
\begin{figure}[h!]
\begin{center}
\hspace*{-3cm}
\includegraphics[scale=0.5]{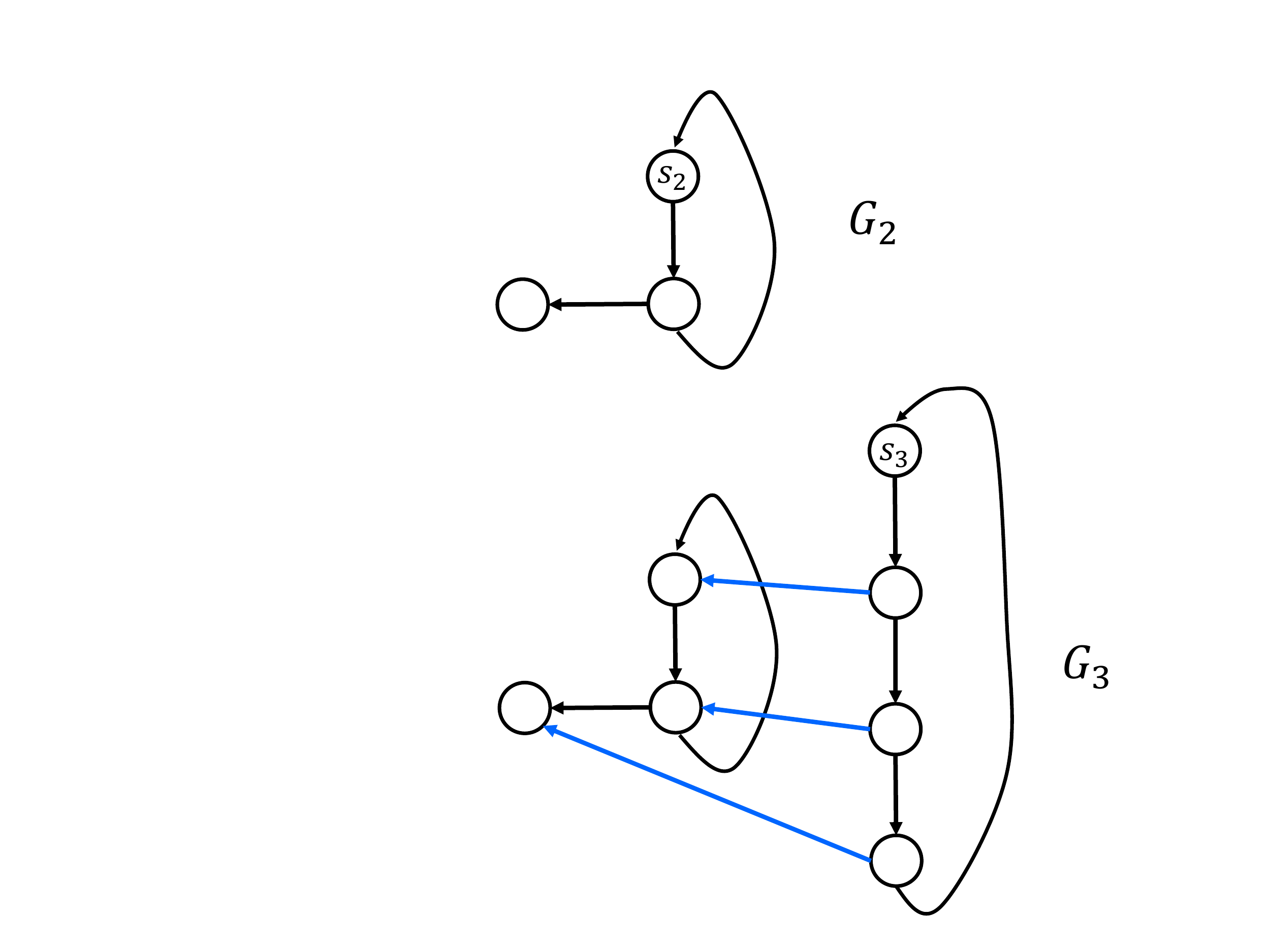}
\end{center}
\caption{The graphs $G_2$ and $G_3$ in the generalized construction, before we replace each vertex by $\alpha$ vertices \label{fig2}}
\end{figure}

The graph $G^\alpha_i$ is constructed from $G_i$ by
performing the following modification for every $2\le i\le \alpha$.

\begin{enumerate}
\item For each vertex vertex $u \in G_i$ we have  $\alpha$ vertices, $u^1,\ldots, u^\alpha$ in $G^\alpha_i$.
\item For each edge from a vertex $u$ of $C_i$ to a vertex $v$ of $G_i$, put a complete bipartite clique between the vertices
$u^1,\ldots, u^\alpha$ and $v^1,\ldots, v^\alpha$ in $G^\alpha_i$. Each edge $(u_i^j,v_i^\ell)$ is directed from $u_i^j$ to $v_i^\ell$.
\item For each edge from a vertex $u$ of $C_i$ to the next vertex $v$ of $C_i$, put a complete bipartite clique between the vertices
$u^1,\ldots, u^\alpha$ and $v^1,\ldots, v^\alpha$. Each edge $(u_i^j,v_i^\ell)$ is directed from $u_i^j$ to $v_i^\ell$.
\item Connect the vertices $s^1_i,\ldots, s^\alpha_i$ to another set of vertices $t^1_i,\ldots, t^\alpha_i$.
Make $s^1_i,\ldots, s^\alpha_i$ a clique and orient it such that an edge $(s_i^j, s_i^\ell)$ for $j<\ell$ is oriented from
$s_i^j$ to $s_i^\ell$. Make $t^1_i,\ldots, t^\alpha_i$ a clique and orient it analogously. Connect
$s_i^j$ to $t_j^\ell$ for $\ell \le j$. Notice that the number of edges that are directed from $s_i^j$ to one of
$s^1_i,\ldots, s^\alpha_i$ and $t^1_i,\ldots, t^\alpha_i$ is exactly $\alpha$. See Figure \ref{fig3}.
\end{enumerate}

The analysis of this generalization is analogous to the analysis of the construction for $\alpha=2$, and thus omitted from this extended abstract.

%\ignore{
\begin{figure}[h!]
\begin{center}
\hspace*{-0.7cm}
\includegraphics[scale=0.4]{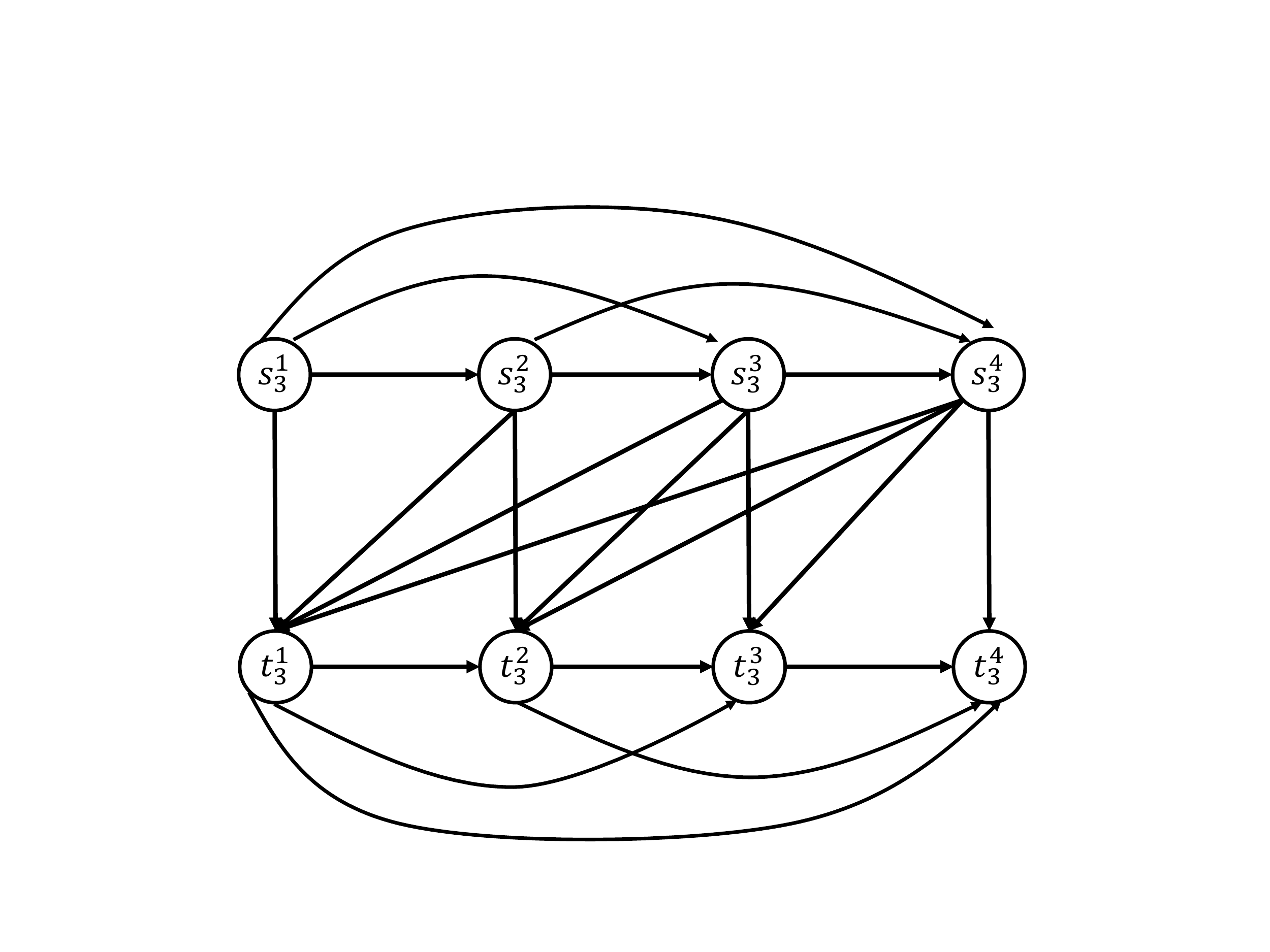}
\end{center}
\vspace{-0.75cm}
\caption{The graph which we use to replace $s_3$ for $\alpha = 4$.
\label{fig3}}
\end{figure}
%}

\ignore{
The following lemma hold for this modified construction.

\begin{lemma}
The number of vertices in $G^\alpha_i$ is $\alpha(2^i - 1) + (i-1)\alpha$.  Each vertex of $G_i$ has outdegree $2\alpha$ except of $2\alpha$ vertices  that have outdegree $0$.
\end{lemma}
\begin{proof}
By induction the number of vertices in the modified $G_i$ is $2^i - 1$.
To obtain $G^\alpha_i$ we replace each vertex of $G_i$ by $\alpha$ vertices and we add an additional set
of $\alpha$ vertices $t^1_j,\ldots, t^\alpha_j$ for $j=2,\ldots,i$.
\QED
\end{proof}

\begin{lemma}
The arboricity of $G_i$ is $2\alpha$.
\end{lemma}
\begin{proof}
TBD
\end{proof}

\begin{lemma}
We can construct $G_i$ starting from an empty graph on $\alpha(2^i - 1) + (i-1)\alpha$  vertices and inserting the edges one by one while directing them using the second heuristic from the vertex of
smaller degree to the vertex of larger degree.
\end{lemma}
\begin{proof}
TBD
\end{proof}

\medskip

Now we assume for simplicity that $\Delta = 2\alpha$ and consider the cascade that starts when we add
an edge outgoing from $s_i^1$ so the outdegree of $s_i^1$ increases to $2\alpha+1$.
Following the flip of $s_i^1$ the degree of $s_i^j$ for $2\le j\le \alpha$ increases to $2\alpha+1$,
as well as the degrees of $u_i^j$, $1\le j\le \alpha$, where $u$ is the successor of $s_i^1$ on $C_i$. We then flip $s_i^2, s_i^3,\ldots s_i^\alpha$
in this order. Vertex $s_i^j$ has outdegree $2\alpha + (j-1)$ when it is flipped --- the same as the
outdegrees of  $u_i^j$, $1\le j\le \alpha$,  at this point.
Following the flip of $s_i^\alpha$  the outdegrees of all vertices
 $u_i^j$, $1\le j\le \alpha$, is $3\alpha$.
Next we flip $u_i^j$, $1\le j\le \alpha$,
 in any order and we continue flipping vertices of degree
$3\alpha$ until we have flipped all vertices on $C_i$.
At this point the outdegree of each vertex in $G_{i}$ increased to $3\alpha$ and we continue analogously to flip vertices
of $G_{i-1}^\alpha$
starting from $s_{i-1}^1, s_{i-1}^2,\ldots s_{i-1}^\alpha$.

The following lemma specifies the invariant being maintained during the cascade.
Its proof is straightforward by induction on the operations of the cascade.
\begin{lemma}
TBD
%When we flip the vertices of $C_j$ for some $j \le i$, the outdegrees of the vertices are as follows
%\begin{enumerate} \label{lem:invariant}
%\item
%Vertices of $C_\ell$ for $j < \ell \le i$ have outdegree $\le 3$.
%\item
%Vertices of $C_j$ that were already flipped have outdegree $\le 2$.
%\item
%A vertex of $C_\ell$ for $\ell < j$ that is incident to a vertex of $C_j$ that was already flipped flipped has
%outdegree $3+i-\ell$ and a vertex of $C_\ell$ for $\ell < j$ that is incident to a vertex of $C_j$ that was not flipped already has
%outdegree $2+i-\ell$.
%\end{enumerate}
\end{lemma}

This gives the following lemma and its immediate corollaries.
\begin{lemma}
During  a cascade on $G_i^\alpha$ that starts by increasing the outdegree of $s_i^1$  the vertex $a$ in $G_2$ has outdegree $(i-1)\alpha$ just before we flip it. %\QED
\end{lemma}
\begin{corollary}
The BF algorithm with the two heuristics above can generate a vertex of outdegree $(\log n -1)\alpha$ while performing an insertion in a graph
with arboricity $\alpha$ and $\alpha(n - 1) + (\log n -1)\alpha$ vertices. %\QED
\end{corollary}
\begin{corollary}
The BF algorithm with the two heuristics above can generate a vertex of outdegree $\Omega(\alpha(\log(n/\alpha))$ while performing an insertion in a graph
with arboricity $\alpha$ and $n$ vertices. %\QED
\end{corollary}

}

\subsection{Efficient representations of sparse distributed networks, with applications} \label{sec22}
In this section we describe a natural representation of sparse distributed networks, along with some applications.

\subsubsection{Forest decomposition and adjacency queries~}
%\vspace{-5pt}
%\subsection{Efficient graph representation, with applications}
%\vspace{-5pt}
For a distributed network with arboricity $\alpha$,
Theorem \ref{bas} provides a distributed algorithm (in the $\congest$ model) for maintaining a low outdegree orientation with low local memory usage.
Such an orientation can be viewed as a representation of the network, and it finds two natural applications.
% that we consider is simply the low outdegree orientation guaranteed by Theorem \ref{bas}.
First, due to the equivalence between the edge orientation and the forest decomposition problems shown in \cite{PPS16},
we obtain a distributed algorithm for maintaining a decomposition into $O(\Delta)$ forests within an optimal (up to a constant) amortized message complexity,
and the same (or better) amortized update time, with $O(\Delta)$ local space, for any $\alpha \ge 1$ and $\Delta = \Omega(\alpha)$.

We can then use this forest decomposition to maintain efficient distributed \emph{adjacency labeling schemes}.
An adjacency labeling scheme assigns an (ideally short) label to each vertex, allowing one to infer if any two vertices $u$ and $v$ are neighbors directly from their labels.
For an adjacency representation
scheme to be useful, it should be capable of reflecting online the current up-to-date
picture in a dynamic setting. Moreover, the algorithm for generating and revising the labels must
be distributed.
%in contrast with sequential and centralized label assignment algorithms.
Given an $f$-forest-decomposition for $G$, the label of each vertex $v$ can be given by $\Label(v)=(\id(v), \id(w_1), \ldots, \id(w_f))$ where $w_i$ is the parent of $v$ in the $i$th forest.
We derive the following result.
\begin{theorem}
For any $\alpha \ge 1$ and any arboricity $\alpha$ preserving sequence of   updates,
there is a distributed algorithm (in the $\congest$ model) for maintaining an   adjacency labeling scheme with label size of $O(\alpha \cdot \log n)$ bits
with $O(\log n)$ amortized message complexity and update time, with $O(\alpha)$ local memory usage. % per edge insertion.
\end{theorem}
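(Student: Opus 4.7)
The plan is to build the labeling scheme on top of the distributed orientation algorithm of Theorem \ref{bas}, going through an intermediate forest decomposition. First, I would invoke Theorem \ref{bas} to maintain an $O(\alpha)$-orientation of the dynamic graph in the $\congest$ model, with $O(\log n)$ amortized update time and message complexity and $O(\alpha)$ local memory at every vertex. Next, I would invoke the centralized-to-distributed equivalence between low outdegree orientations and low forest decompositions established in \cite{PPS16}: an $\ell$-orientation can be converted into a decomposition into at most $2\ell$ rooted forests with only constant overhead per update, in the distributed setting as well. Thus we maintain an $f$-forest decomposition with $f = O(\alpha)$, where each vertex $v$ knows, for each forest index $i \in \{1,\ldots,f\}$, the identity of its parent $w_i(v)$ (or $\NULL$ if $v$ is a root in the $i$th forest).

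Given this maintained forest decomposition, I would define the label of every vertex $v$ to be $\Label(v) = (\id(v),\id(w_1(v)),\ldots,\id(w_f(v)))$, using a fixed entry of length $O(\log n)$ for each slot, so $|\Label(v)| = O(\alpha \cdot \log n)$ bits. Adjacency between $u$ and $v$ can be decided from the two labels alone: every edge of $G$ lies in exactly one forest of the decomposition and is therefore represented as a parent pointer on one of its endpoints, so $u$ and $v$ are adjacent iff $\id(v)$ appears among the parent slots of $\Label(u)$ or $\id(u)$ appears among the parent slots of $\Label(v)$.

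The main thing to check is that the labels can be kept up to date within the promised bounds. Whenever the underlying orientation changes, only the endpoints of flipped edges need to update their stored slots, and the conversion of \cite{PPS16} guarantees that a change of a single edge's orientation triggers only $O(1)$ local reassignments of forest indices. Each such reassignment is communicated by sending the relevant $\id$ along the affected edge, which is a single $O(\log n)$-bit $\congest$ message; crucially, we never need to transmit an entire label, only individual $\id$ fields, so the $\congest$ bandwidth constraint is respected. The amortized number of orientation changes per update is $O(\log n)$ by Theorem \ref{bas}, and each generates $O(1)$ messages and $O(1)$ local work, so the amortized message complexity and update time remain $O(\log n)$. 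Local memory at each vertex is dominated by storing the $O(\alpha)$ parent slots, i.e., $O(\alpha)$ words.

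The main obstacle I anticipate is the dynamic forest-index bookkeeping: when an edge is flipped or deleted, the forest index it occupied becomes free on one endpoint and a new index must be chosen by the other endpoint (to avoid two outgoing edges sharing an index). This has to be done consistently between the two endpoints with $O(1)$ messages and without blowing the label size past $O(\alpha)$ slots. This is precisely what the maintenance procedure of \cite{PPS16} accomplishes, and since Theorem \ref{bas} already guarantees that the outdegree is bounded by $O(\alpha)$ at \emph{all} times (not merely in amortized steady state), the pool of $O(\alpha)$ indices is always sufficient, and no vertex ever needs to allocate more than $O(\alpha)$ slots of its label. Combining these pieces yields the stated bounds.
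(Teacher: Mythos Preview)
Your proposal is correct and follows essentially the same approach as the paper: invoke Theorem \ref{bas} to maintain an $O(\alpha)$-orientation, apply the orientation-to-forest-decomposition equivalence of \cite{PPS16}, and define $\Label(v)=(\id(v),\id(w_1),\ldots,\id(w_f))$. The paper in fact gives no formal proof beyond this outline, so your additional discussion of the label-maintenance bookkeeping (and the observation that Theorem \ref{bas}'s \emph{at-all-times} outdegree bound keeps the slot count at $O(\alpha)$) fills in details the paper leaves implicit.
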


%obtaining efficient distributed algorithm for maintaining forest decomposition and
%adjacency labeling schemes with low local memory usage;  see Appendix \ref{applic} for details.
%^Despite these applications, we still haven't presented
\subsubsection{A complete representation}
A low outdegree orientation may not quality as a complete representation of the network, since a processor cannot access its incoming neighbors,
and in particular it cannot communicate with them. Next, we describe a complete representation of a distributed network.

 %thus this does not provide a complete representation of the network.
Consider a processor $v$ with $k$ incoming neighbors $v_1,\ldots,v_k$.
For each $i$, we will make sure that $v_i$ holds information on $v_{i-1}$ and $v_{i+1}$, with $v_0 = v_{k+1} = \NULL$,
and $v$ will hold information on an arbitrary processor among these, say $v_k$.
(This information that we hold per neighbor $v_i$ of $v$ should be enough for $v$ to communicate with $v_i$ directly.)
Since the network may change dynamically, we need to update the ``extra'' local information that we hold at processors efficiently.
We refer to the processors $v_1,\ldots,v_k$ as \emph{siblings}, and $v$ is referred to as their \emph{parent}.
For each $i, 1 \le i \le k$, $v_{i-1}$ and $v_{i+1}$ are referred to as the \emph{left sibling} and \emph{right sibling} of $v_i$, respectively.
(The left and right siblings of $v_1$ and $v_k$, respectively, are defined as $\NULL$.)
Note that each processor holds information on two of its siblings, per any parent.
Since the number of parents of any processor $v$ is given by its outdegree, the information regarding all siblings of $v$ over all of its parents is linear in its outdegree.
In addition, any processor $v$ holds information on a single incoming neighbor $v_k$, as described above.
Together with all its outgoing neighbors, the total information at a processor is linear in its outdegree.
Since the outdegree of the underlying edge orientation is (close to) linear in the arboricity of the network,
we can make sure that the local information at processors is (close to) linear in the arboricity, yielding the required bound on the local memory usage.

Following an insertion of edge $(u,v)$  that is oriented from $u$ to $v$, $u$ will hold information on $v$ (by the underlying edge orientation).
We also make sure that $v$ will hold information on $u$ by designating $u$ as $v_{k+1}$, i.e., $u = v_{k+1}$ takes the role of $v_k$.
Subsequently, $v$ sends a message with information on $v_{k+1}$ to $v_k$ and another message with information on $v_k$ to $v_{k+1}$,
so that $v_{k}$ (respectively, $v_{k+1}$) will hold information on $v_{k+1}$ (resp., $v_k$) as its new right (resp., left) sibling.
Following a deletion of edge $(u,v)$ that is oriented from $u$ to $v$, with $v$ being the parent of $u = v_i$ for some index $i$,
$v_i$ sends a message with information on both $v_{i-1}$ and $v_{i+1}$ to $v$. Subsequently, $v$ sends two messages (in parallel), one to $v_{i-1}$ and another to $v_{i+1}$,
informing $v_{i-1}$ (respectively, $v_{i+1}$) that its right (resp., left) sibling has changed from $v_i$ to $v_{i+1}$ (resp.,  $v_{i-1}$). Note that we send a message along the deleted edge $(v,v_i)$
in order to update the representation following an edge deletion, i.e., we support a \emph{graceful} edge deletion but not an \emph{abrupt} one.
(In the former, the deleted edge may be used for exchanging messages between its endpoints, and retires   only once the representation has been updated.
In the latter, while the endpoints of the deleted edge discover that the edge has retired, it cannot be used for any communication.)
A similar update is triggered by edge flips and vertex updates, where we only support a graceful deletion of vertices.
%In this way each processor will hold information on \emph{all} its out-neighbors, also on one of its incoming neighbors,
%and finally on some of its ``siblings'', i.e., for vertex $v_i$ as above
%in addition, it will hold information on
%few of its incoming neighbors and few of its ``siblings'', i.e., the
%in the way described above. More specifically, it will hold information on
\vspace{8pt}
\\
{\bf Some applications.~}
The drawback of such a representation is that a processor cannot communicate with its in-neighbors \emph{in parallel}.
For $v$ to be able to send a message to an in-neighbor $v_i$, it first needs to retrieve the information on $v_i$ required for communicating with it.
To this end, $v$ has to \emph{sequentially} scan and communicate with all its in-neighbors $v_k,v_{k-1},\ldots,v_{i+1}$, starting at $v_k$ (on which $v$ holds information) and finishing at $v_i$.
For some applications, however, such a sequential scan of the in-neighbors is not needed.
%We next provide a few such applications.  %: The first one is specific -- maintaining a maximal matching, and the second is of broad

For the sake of conciseness, in what follows we focus on edge updates and flips. Vertex updates can be handled in a similar way.

As a first application, consider the problem of maintaining a maximal matching in a distributed network that changes dynamically.
Instead of maintaining the information on the in-neighbors as described above,
we will maintain information only on the \emph{free} in-neighbors.
 More specifically, information on the free in-neighbors is being distributed among them in the manner described above.
Whenever a processor changes status from free to matched, or vice versa,
it notifies all its out-neighbors about that. (Recall that each processor has complete information on all its out-neighbors, and can communicate with all of them in parallel.
Interestingly, there is no need to exploit parallelism here.) Any processor that receives such information makes sure to update the relevant local information regarding its free in-neighbors, which is distributed into the relevant neighbors,
following along similar lines to the above.
The rest of the algorithm now proceeds as in the centralized setting \cite{NS13}.
Specifically, following an edge insertion, we match the two endpoints if they are free,
and otherwise there is nothing special to do (besides updating the underlying representation).
Following a deletion of an unmatched edge, there is again nothing special to do.
Finally, following a deletion of a matched edge $(u,v)$, $u$ and $v$ exchange messages with their out-neighbors,
attempting to find a free neighbor among them. Let us focus on $u$ ($v$ is handled in the same way).
%which return messages to them if they are free or not.
If none of $u$'s out-neighbors is free, $u$ needs to check whether it has a free in-neighbor.
Since we made sure to (distributively) maintain information on the free in-neighbors of each vertex, including $u$,
and as there is no need to perform a sequential scan over these neighbors of $u$ (the first one, if any, will do),
we conclude that the amortized message complexity of the algorithm, and thus the amortized update time, is dominated (up to constant factors) by the
maximum among the outdegree bound of the underlying orientation and the cost of maintaining that orientation.
\begin{theorem} \label{basinet}
For any $\alpha \ge 1$ and any arboricity $\alpha$ preserving sequence of edge and vertex updates starting from an empty graph,
there is a distributed algorithm (in the $\congest$ model) for maintaining a maximal matching with an amortized update time
and message complexities of $O(\alpha + \log n)$.
The local memory usage is $O(\alpha)$.
\end{theorem}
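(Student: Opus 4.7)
The plan is to stack three ingredients: the low-outdegree orientation from Theorem \ref{bas}, the distributed in-neighbor representation specialized to \emph{free} in-neighbors, and the centralized maximal-matching algorithm of \cite{NS13}. First I invoke Theorem \ref{bas} with $\Delta = \Theta(\alpha)$ to maintain a $\Delta$-orientation in the $\congest$ model with $O(\log n)$ amortized update time and message complexity and $O(\alpha)$ local memory. On top of this, each vertex $v$ maintains a doubly-linked list of its currently \emph{free} in-neighbors via the sibling-pointer scheme described above: each free in-neighbor of $v$ stores left/right sibling pointers with respect to $v$, and $v$ itself stores a single pointer to one free in-neighbor (or $\NULL$ if none). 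Summed over all parents of a vertex $u$, this overhead is $O(\mathrm{outdeg}(u)) = O(\alpha)$, preserving the local memory bound.

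Next I would describe how the representation is kept current. Whenever a vertex $v$ toggles its matched/free status, $v$ broadcasts this in parallel, in one round and $O(\alpha)$ messages, to all of its out-neighbors. Each out-neighbor $u$ then in $O(1)$ rounds and $O(1)$ messages splices $v$ into or out of its free-in-neighbor list by patching the pointers of $v$'s left and right free siblings with respect to $u$, and refreshing $u$'s designated pointer if needed. Edge insertions, deletions, and flips triggered by the underlying orientation are handled analogously, since a flip is equivalent to a simultaneous deletion and insertion at opposite orientations; each flip thus incurs only $O(1)$ additional messages beyond the orientation update. Graceful edge/vertex deletions are accommodated by allowing the retiring edges one final use to carry these pointer updates.

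The matching algorithm itself mirrors \cite{NS13}. On inserting $(u,v)$, if both endpoints are free we match them and propagate status changes through the representation. On deleting an unmatched edge, nothing matching-related is required. On deleting a matched edge $(u,v)$, each endpoint tries to re-match: it broadcasts to its $O(\alpha)$ out-neighbors in one round to find a free one; if none exists, it consults its single pointer to a free in-neighbor and, in $O(1)$ rounds, attempts to match with that neighbor. The crucial point is that only \emph{one} free in-neighbor is ever needed, so no sequential scan over the in-neighbor list is triggered, and the in-neighbor communication cost per re-match attempt is $O(1)$.

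For the cost analysis, each update contributes $O(\alpha)$ to query/notify out-neighbors in parallel, $O(1)$ per affected parent to maintain the free-in-neighbor list, plus the $O(\log n)$ amortized overhead inherited from Theorem \ref{bas}. The amortized analysis of \cite{NS13} shows that the cascade of status changes caused by re-matchings amounts to $O(\log n)$ propagated events per update. Summing yields the claimed $O(\alpha + \log n)$ amortized update time and message complexity, and the local memory bound of $O(\alpha)$ follows from the outdegree bound plus the $O(1)$ per-parent overhead of the representation. The main subtlety I expect is verifying that the pointer maintenance piggybacked on every edge flip of Theorem \ref{bas} adds only $O(1)$ messages per flip, so that it is absorbed into the amortized flip cost already paid and does not inflate the asymptotic bounds; once this is established the rest of the accounting is routine.
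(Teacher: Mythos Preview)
Your proposal follows essentially the same approach as the paper: invoke Theorem~\ref{bas} for the $O(\alpha)$-orientation, specialize the sibling-pointer representation to \emph{free} in-neighbors, and run the \cite{NS13} matching algorithm on top, exploiting that only a single free in-neighbor is ever needed so no sequential scan occurs.

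One inaccuracy worth fixing: there is no ``cascade of status changes'' in \cite{NS13}, and no amortized argument there bounding such a cascade by $O(\log n)$. A single edge update causes only $O(1)$ vertices to toggle matched/free status (at most the two endpoints and their new mates), each toggle costing $O(\alpha)$ messages to notify out-neighbors and $O(1)$ per parent to splice the free-in-neighbor lists. The $O(\log n)$ term in the final bound comes \emph{solely} from the amortized cost of maintaining the orientation via Theorem~\ref{bas} (including the piggybacked pointer maintenance on each flip, which, as you correctly note, is $O(1)$ per flip). So your accounting line invoking an $O(\log n)$ cascade from \cite{NS13} should simply be dropped; the remaining terms already give $O(\alpha + \log n)$.
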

%This will increase the local memory usage at processor $v_i$ by $O(\Delta)$, where $\Delta$ is the outdegree threshold.
%Note that each vertex $v_i$ may store information on multiple such vertices,

As a broader application, we revisit the \emph{bounded degree sparsifiers} introduced recently in \cite{Sol18}.
Informally, a bounded degree $(1+\eps)$-sparsifier for a graph $G = (V,E)$, a degree parameter $\Delta$ and a slack parameter $\eps > 0$
is a \emph{subgraph} $H$ of $G$ with maximum degree bounded by $\Delta$ that preserves certain quantitative properties of the original graph up to a (multiplicative) factor of $1+\eps$.
For the maximum matching problem, such a sparsifier $H$ should preserve the size of the maximum matching of $G$ up to a factor of $1+\eps$.
It was shown in \cite{Sol18} that one can \emph{locally} compute a $(1+\eps)$-maximum matching sparsifier of degree $O(\alpha / \eps)$,
for any network of arboricity bounded by $\alpha$. All the sparsifiers of \cite{Sol18} adhere to a rather strict notion of locality,
which makes them applicable to several settings. In particular, for distributed networks, all the sparsifiers of \cite{Sol18} can be computed in a single round of communication.
The definition of a sparsifier for the minimum vertex problem is more involved, and we omit it here for conciseness (refer to \cite{Sol18} for the formal definition),
but the bottom-line is the same: For any distributed network of arboricity bounded by $\alpha$,
one can compute a $(1+\eps)$-minimum vertex cover sparsifier of degree $O(\alpha / \eps)$ in a single round.

Similarly to the maintenance of a maximal matching, maintaining these bounded degree sparsifiers dynamically do not require a sequential scan of the in-neighbors of a processor.
Indeed, these sparsifiers have s degree bound of $\Delta = O(\alpha / \eps)$ by definition, hence each processor can hold complete information on all
its adjacent edges that belong to the sparsifier, or equivalently, on all its corresponding neighbors. Following a deletion of an edge from the graph,
we first update the underlying representation.
 If the edge does not belong to the sparsifier, there is nothing special to do.
 Otherwise, we remove it from the sparsifier and check if another edge needs to be added to the sparsifier instead. In any case
we update the endpoints of the affected edges accordingly. It is straightforward to implement this update efficiently using the underlying representation.
%Following a deletion of an edge out of the sparsifier, there is nothing special to do (besides updating the underlying representation).
Following an edge insertion, we may need to add it to the sparsifier, but this too involves a straightforward update.
%but this only involves informing a single in-neighbor currently not in the sparsifier per each processor.
%Similarly to the application of dynamic maximal matching discussed above, we just need an access to a single incoming neighbor o
In this way we can maintain bounded degree $(1+\eps)$-sparsifiers for maximum matching and minimum vertex cover using a local memory at processors that is (close to) linear in the network arboricity.

Subsequently, we can naively run static distributed algorithms for approximate maximum matching and minimum vertex cover on top of the bounded degree sparsifiers,
following every update step.
Due to the degree bound of the sparsifiers, in this way we adhere to the local memory constraints at processors.
To be able to run the distributed algorithm (following every update step), alas, we need to assume that all processors wake up prior to each such run,
%To be able to run a static distributed algorithm,
which does not apply to the local wakeup model.  % this strategy is applicable only under the global wakeup model.
Instead of running a static distributed algorithm from scratch on the sparsifiers following every update step,
we shall apply more efficient dynamic algorithms on top of the sparsifiers.
%Such an optimization is useful not just for the sake of improving the bounds of the resulting algorithm, but also for extending the results from the global wakeup model to the %local wakeup model. Nevertheless, this optimization lies outside the scope of the current paper.)

\cite{PS16} devised distributed algorithms for maintaining, in networks of degree bounded by $\Delta$,
$(1+\eps)$-approximate and $(3/2)$-approximate maximum matching with update time $O(1/\eps)$ and message complexities $(\Delta)^{O(1/\eps)}$ and $O(\Delta)$, respectively. (In fact, the bounds on the update time and message complexities hold in the worst-case.
%A variant of their algorithm maintains a  maximum matching with update time $O(1/\eps)$ and message complexity
Moreover, these algorithms extend to bounded arboricity graphs; refer to Corollary 3.1 in \cite{PS16}.)
%The bounds on the update time and message complexity hold even in the worst-case, but since the underlying representation that we
%(resp., $O(\alpha/\eps)$).
%\begin{corollary}
%For any distributed network of arboricity   bounded by $\alpha$ and  any $\eps > 0$,
%one can dynamically maintain
%\end{corollary}
%a $(1+\eps)$-approximate maximum matching in $\Delta^{O(1/\eps)} + O(\eps^{-2}) \cdot \log^* n$
%rounds, where $\Delta$ is the maximum degree in the network.
Running these dynamic algorithms on top of the bounded degree $(1+\eps$)-maximum matching sparsifier that we maintain dynamically, we obtain the following result.
\begin{theorem} \label{thm1}
For any $\alpha \ge 1$, any arboricity $\alpha$ preserving sequence of edge and vertex updates starting from an empty graph and any $\eps > 0$,
there are distributed algorithms for maintaining $(1+\eps)$-approximate and $(3/2 + \eps)$-approximate maximum matching with   amortized update time
  $O(1/\eps + \log n)$ and   amortized message complexities of $(\alpha/ \eps)^{O(1/\eps)} + O(\log n)$ and $O(\alpha /\eps + \log n)$, respectively.
The local memory usage is $O(\alpha/\eps)$.
\end{theorem}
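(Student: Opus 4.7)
The plan is to compose two dynamic building blocks already in hand: the dynamic representation of a bounded-degree $(1+\eps)$-maximum matching sparsifier $H$ developed just above (which uses the low-outdegree orientation of Theorem \ref{bas} as its backbone), and the dynamic distributed matching algorithms of \cite{PS16}. First, I would maintain $H$ on top of the complete representation. By the construction of \cite{Sol18}, $H$ has maximum degree $\Delta = O(\alpha/\eps)$, which immediately yields the claimed $O(\alpha/\eps)$ bound on local memory at every processor (each vertex keeps the full list of its at most $\Delta$ incident $H$-edges, plus the $O(\Delta)$ PS16 state). Each update in $G$ is absorbed by the representation in $O(\log n)$ amortized rounds and messages and triggers only $O(1)$ edge changes in $H$, since the local swap rules underlying the $(1+\eps)$-sparsifier modify $O(1)$ edges per update.

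Next, I would run one of the PS16 algorithms on $H$ as an online consumer of these $H$-updates. Plugging $\Delta = O(\alpha/\eps)$ into the worst-case bounds recalled above, each $H$-update is handled in $O(1/\eps)$ rounds and in either $\Delta^{O(1/\eps)} = (\alpha/\eps)^{O(1/\eps)}$ messages (the $(1+\eps)$-regime) or $O(\Delta) = O(\alpha/\eps)$ messages (the $(3/2)$-regime). Because only $O(1)$ $H$-updates are generated per update in $G$, summing the two layers gives amortized update time $O(1/\eps + \log n)$ and the stated amortized message complexities. The approximation guarantees compose in a standard way: the matching maintained on $H$ is either a $(1+\eps)$- or a $(3/2)$-approximation to the maximum matching of $H$, while by definition of the sparsifier the maximum matching of $H$ is within a $(1+\eps)$ factor of the maximum matching of $G$. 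Hence the composed ratios are $(1+\eps)^2 \le 1 + O(\eps)$ and $(3/2)(1+\eps) \le 3/2 + O(\eps)$, and rescaling $\eps$ by a constant yields exactly the claimed $(1+\eps)$ and $(3/2+\eps)$ ratios.

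The main obstacle is the interface between the two layers: the PS16 algorithm must receive $H$-updates in a form it can process correctly in the local wakeup model, and it must do so without ever exceeding the $O(\alpha/\eps)$ local memory budget. The graceful-deletion mechanism introduced above is the key tool here: whenever an $H$-edge is about to retire (either because the underlying $G$-edge is deleted or because it is swapped out by the sparsifier's local rule), the endpoints first execute the PS16 response while the edge is still alive, so that the state needed by PS16 is always accessible through a live channel. Similarly, an $H$-edge insertion is delivered to its endpoints via the orientation representation before PS16 is invoked. Once this plumbing is verified, the theorem follows directly by adding the per-update costs of the two layers. I would also need to check that vertex updates are handled identically to edge updates, which follows because both the orientation algorithm of Theorem \ref{bas} and the sparsifier rules of \cite{Sol18} handle vertex insertions/deletions within the same asymptotic bounds.
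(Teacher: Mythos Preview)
Your proposal is correct and follows essentially the same approach as the paper: maintain the bounded-degree $(1+\eps)$-matching sparsifier of \cite{Sol18} on top of the complete representation from Theorem~\ref{bas}, then run the dynamic matching algorithms of \cite{PS16} on the sparsifier with degree parameter $\Delta = O(\alpha/\eps)$. You have in fact spelled out more than the paper does, explicitly handling the composition of approximation ratios and the graceful-deletion interface between the two layers, both of which the paper leaves implicit.
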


There is a straightforward distributed algorithm for maintaining a maximal matching, in networks of degree bounded by $\Delta$,
with update time $O(1)$ and message complexity $O(\Delta)$. Such an algorithm can be used to maintain a 2-approximate minimum vertex cover within the same bounds.
%\cite{BCS16} devised a distributed algorithm for computing
%a $(2+\eps)$-approximate vertex cover in $O(\Delta)$ rounds.
%where $\Delta$ is the maximum degree in the network.
Running this dynamic algorithm
%on top of the bounded degree $(1+\eps$)-maximum matching sparsifier that we maintain dynamically, we obtain the following result.
%Running this algorithm repeatedly from scratch following every update operation
on top of the bounded degree $(1+\eps$)-minimum vertex cover sparsifier that we maintain dynamically, we obtain the following result.
\begin{theorem} \label{thm2}
For any $\alpha \ge 1$, any arboricity $\alpha$ preserving sequence of edge and vertex updates starting from an empty graph and any $\eps > 0$,
there is a distributed algorithm for maintaining a $(2+\eps$)-approximate minimum vertex cover with an amortized update time
of $O(\log n)$ and an amortized message complexity of $O(\alpha /\eps + \log n)$.  The local memory usage is $O(\alpha/\eps)$.
\end{theorem}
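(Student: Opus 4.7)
The plan is to compose three already-established components: the dynamic low-outdegree orientation together with the complete representation of Section~\ref{sec22}; the bounded-degree $(1+\eps)$-minimum vertex cover sparsifier of \cite{Sol18}; and a trivial dynamic maximal matching algorithm on a bounded-degree graph, which yields a $2$-approximate vertex cover of the sparsifier.

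First I would invoke Theorem~\ref{bas} with $\Delta = O(\alpha)$ to maintain a $\Delta$-orientation with $O(\log n)$ amortized update time and message complexity and $O(\alpha)$ local memory, and layer on top the in-neighbor distribution scheme described in Section~\ref{sec22}. This gives every processor full information on its out-neighbors, and access to its in-neighbors via a sequential scan when needed.

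Second, I would maintain the $(1+\eps)$-minimum vertex cover sparsifier $H$ of \cite{Sol18}, whose maximum degree is $\Delta_H = O(\alpha/\eps)$. Since $\Delta_H$ is an a-priori degree bound in $H$, each processor can store full information on all its $H$-neighbors within its local memory budget of $O(\alpha/\eps)$. Because the sparsifier is defined locally (a single round of communication suffices to compute it statically), each edge insertion, deletion, or orientation flip in the underlying representation affects membership in $H$ only at the updated edge and at $O(1)$ others incident to the same endpoints; these changes can be propagated with a constant number of messages per affected edge via the underlying representation, with no sequential in-neighbor scan. Third, on top of $H$ I would run the trivial distributed algorithm that maintains a maximal matching $M$ in a network of degree $\Delta_H$ with $O(1)$ worst-case rounds and $O(\Delta_H)$ messages per update: upon an insertion into $H$, match its endpoints if both are free; upon a deletion of a matched edge, each freed endpoint scans its $H$-neighbors in parallel for a free partner. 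The vertex set $V(M)$ is a $2$-approximate minimum vertex cover of $H$, and by the sparsifier guarantee it is then a $(2+O(\eps))$-approximate minimum vertex cover of the input graph; rescaling $\eps$ by a constant absorbs the $O$.

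Summing the three layers, each update costs $O(\log n)$ amortized rounds, $O(\alpha/\eps + \log n)$ amortized messages, and $O(\alpha/\eps)$ local memory, as claimed. The main subtlety is the second step: one must verify that the sparsifier of \cite{Sol18} admits dynamic maintenance using only the underlying representation (no sequential in-neighbor scan), and one must confirm from the (more intricate) definition of a minimum vertex cover sparsifier that a $2$-approximation on $H$ genuinely translates to a $(2+\eps)$-approximation of the minimum vertex cover of the original graph, rather than only of $H$.
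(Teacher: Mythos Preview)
Your proposal is correct and follows essentially the same approach as the paper: maintain the low-outdegree orientation and complete representation of Section~\ref{sec22}, layer on the dynamically maintained bounded-degree $(1+\eps)$-minimum vertex cover sparsifier of \cite{Sol18}, and run the trivial $O(1)$-round, $O(\Delta_H)$-message maximal matching algorithm on the sparsifier to obtain a $2$-approximate vertex cover there, which translates to $(2+\eps)$-approximate on $G$. The subtleties you flag (that the sparsifier can be maintained without sequential in-neighbor scans, and that the vertex cover sparsifier definition guarantees the approximation transfers to $G$) are exactly the points the paper addresses in the prose preceding the theorem.
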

\section{The Flipping Game}
%\vspace{-5pt}
%\subsection{Motivation and terminology}
This section is devoted to the flipping game and its applications.

We start by proposing a generic paradigm for this game (Section \ref{appgen}).
In Section \ref{sec:equiv} we show a reduction from the edge orientation problem to the flipping game,
and in Section \ref{reduc} we show a reduction in the other direction, thus obtaining an equivalence.
Some applications of the flipping game are given in Section \ref{appl}.

\subsection{A Generic Paradigm for the Flipping Game} \label{appgen}
%A meta-application for local dynamic algorithms.~}
The flipping game provides a \emph{local} solution for the following  generic problem.
We want to maintain a dynamic graph $G$ in which each vertex has a value.
There are two types of updates to the graph: (1) edge insertion and deletion, (2) a change of a value at a vertex.
(We may also consider scenarios where there is only one type of updates. In particular, the scenario where the graph topology is static and vertex values are dynamic
is already not trivial.)
A query specifies a vertex $v$ and to answer it we need to compute some fixed function of the values of $v$ and its neighbors.

We restrict ourselves to a natural family $\mathcal{F}$ of algorithms that maintain an edge orientation of $G$, where each vertex $v$ maintains the current values of all its \emph{in-neighbors (incoming neighbors)}.
When the value of a vertex $v$ changes, $v$ transmits its new value to all its out-neighbors.
When a vertex $v$ is queried, $v$ collects the values of its out-neighbors, computes the function and returns the result.
The algorithm has the freedom to change the edge orientation by flipping edges.
The cost of flipping an edge outgoing of $v$ is $0$ if we flip it during a query or update at $v$, and $1$ otherwise.
(Note that the algorithms of \cite{NS13,KKPS14,HTZ14} can also be viewed as belonging to $\mathcal{F}$, but
they all require that the outdegree of all vertices at all times will be bounded by some threshold $\Delta$.
In general, the algorithms of $\mathcal{F}$ may violate this requirement.)

The \emph{(communication) cost} of an algorithm $A$ in this family for serving a sequence of operations
$\sigma$ is $$c(A,\sigma) = t+f+\sum_{op\in \sigma ~\mid~ \mbox{\small op updates or queries v}} outdegree(v) ,$$ where $t$ is the number of edge insertions and deletions in $\sigma$, $f$ is the cost of edge flips that the algorithm performs during $\sigma$, and the sum is over all vertex updates and queries in $\sigma$
of the outdegree of the vertex $v$ to which the operation ($op$) applies.
We remark that this cost $c(A,\sigma)$ is equal to the total runtime of algorithm $A$ with respect to $\sigma$, up to a constant factor.
(To be accurate, the runtime should include the cost of extracting the relevant information on the incoming neighbors of the queried vertices.
%While this cost depends on the specific application,
If this cost is high, which depends on the application,   that application cannot be solved using our scheme.)
%[[H: Why is this remark true ? to evaluate f you need to look at the values of all neighbors, you have the values of the incoming but you
%have to look at them.]]

The \emph{flipping game} is a particular algorithm in $\mathcal{F}$ that resets a \emph{vertex} $v$ whenever we apply a query or update to $v$,
which means that all the outgoing edges of $v$ are flipped and become incoming to $v$.
The flipping game is simple and local. Furthermore, it is easy to verify that for any sequence of operations,
the cost of the flipping game is at most twice the cost of any other algorithm in $\mathcal{F}$. Hence:
\begin{observation}
Denote the flipping game algorithm by $R$.
For any sequence of operations $\sigma$ and   algorithm $A \in \mathcal{F}$, $c(R,\sigma)\le 2 \cdot c(A,\sigma)$.
The initial graph may be arbitrary (non-empty), but $R$ and $A$ should start from the same edge orientation.
\end{observation}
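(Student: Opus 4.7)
My plan is to establish the observation by an amortized analysis with the potential function
\[
\Phi \;=\; \bigl|\{\, e : R \mbox{ and } A \mbox{ orient } e \mbox{ in opposite directions}\,\}\bigr|,
\]
counting the edges on which the two algorithms currently disagree. Since $R$ and $A$ start from the same orientation, $\Phi$ starts at $0$, and trivially $\Phi \ge 0$ throughout. The goal is to show that for every operation in $\sigma$, the cost $\delta_R$ paid by $R$ and the change $\delta\Phi$ in the potential satisfy $\delta_R + \delta\Phi \le 2\,\delta_A$, where $\delta_A$ is the cost incurred by $A$ on that operation. Summing over $\sigma$ and using nonnegativity of the final potential then yields $c(R,\sigma) \le 2\,c(A,\sigma)$.

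The routine cases are immediate. An edge insertion or deletion costs $1$ to each algorithm and alters $\Phi$ by at most $1$; each non-free flip performed by $A$ costs $1$ to $A$, nothing to $R$, and changes $\Phi$ by at most $1$; in both cases the amortized inequality is trivial. The only nontrivial case is a query or update at a vertex $v$: here $R$ is forced to reset $v$ for free, so $\delta_R = |\mbox{out}_R(v)|$, while $A$ pays $\delta_A = |\mbox{out}_A(v)|$ and may freely flip any subset $F_A \subseteq \mbox{out}_A(v)$. I would partition the edges incident to $v$ into four classes $OO,OI,IO,II$ according to their orientations in $R$ and in $A$ just before the operation (for instance, $OI$ denotes edges outgoing from $v$ in $R$ but incoming to $v$ in $A$, so $F_A \subseteq OO \cup IO$). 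A direct count of which classes contribute to $\Phi$ after both algorithms apply their flips at $v$ yields
\[
\delta\Phi \;=\; |OO| - |OI| - |OO \cap F_A| - |IO \cap F_A| \;\le\; |OO| - |OI|.
\]
Combined with $\delta_R = |OO|+|OI|$ and $\delta_A = |OO|+|IO|$, this gives $\delta_R + \delta\Phi \le 2|OO| \le 2\,\delta_A$, completing the nontrivial case.

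The main (mild) subtlety is verifying that $A$'s discretion over which of its outgoing edges to flip for free cannot be exploited to inflate $\delta\Phi$: the expression above shows that every such free flip can only decrease $\delta\Phi$, so the worst case for the inequality is $F_A = \emptyset$, which is precisely the case covered by the bound. Once this point is pinned down, the remaining work is routine case-checking and telescoping the per-operation inequalities across $\sigma$.
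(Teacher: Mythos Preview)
Your potential-function argument is correct and complete: the case analysis on the four classes $OO,OI,IO,II$ is clean, the bound $\delta\Phi \le |OO|-|OI|$ is right, and the telescoping is sound.

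Your route differs from the paper's. The paper gives a direct per-edge charging argument: it fixes an edge $e=(u,v)$ and looks at successive moments when $R$ pays for $e$. Two consecutive such charges must occur at opposite endpoints (since $R$ resets the vertex each time), and the paper observes that between any two consecutive $R$-charges on $e$, algorithm $A$ must either have paid the outdegree cost for $e$ at one of those two operations or have flipped $e$ at unit cost in between; hence each unit of $A$'s cost on $e$ absorbs at most two units of $R$'s. This is an elementary local matching argument and avoids any global potential.

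Your approach instead tracks the global discrepancy set between the two orientations, which is exactly the ``bad edge'' potential the paper uses later in Section~\ref{reduc} (Lemma~\ref{boundedflip}) and in the analysis of the anti-reset algorithm. So what you have done is unify this observation with those later arguments under one technique, whereas the paper treats this observation with a lighter ad hoc charging. Either way the factor of $2$ falls out for the same underlying reason; your version is slightly more bookkeeping but arguably more systematic.
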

\begin{proof}
Since $R$ always flips edges at $0$ cost, the total cost of $R$ is
$$c(R,\sigma) = t+\sum_{op\in \sigma ~\mid~ \mbox{\small op updates or queries v}} outdegree(v) \ .$$

Consider an edge $e=(u,v)$ and
an operation at $v$ during which $e$ was outgoing of $v$ (and therefore $R$ was charged for the communication along $e$).
If this is the first operation
in which $R$ is charged for $e$ then either $A$ is charged for $e$ during this operation as well, or $A$ flipped $e$ before this operation.
If there was a previous operation in which $R$ was charged for $e$ then it must have been an operation at $u$.
So it must be the case that either $A$ flipped the edge between the operation at $u$ and the operation at $v$ or $A$ paid for $e$ in at least one of these operations.
\QED
\end{proof}

\subsection{A reduction from the edge orientation problem to the flipping game} \label{sec:equiv}
%In App.\ \ref{reduc}
%\label{reduction}
%The flipping game may perform different edge flips than BF on the same update sequence.
%In particular the flipping game never flips an edge which is incident to a vertex that we
%do not reset. [[H: seems nonsense as the flipping game does not flip unless we do resets..]
We can easily simulate the BF algorithm using the reset operations of the flipping game.
The following lemma shows that for an appropriate outdegree threshold the
amortized time per edge update of the simulation is essentially the same as the amortized time per operation (update or reset)
of the flipping game. Thus the amortized bound of the flipping game is essentially as large as that of the BF algorithm.

\begin{lemma} \label{cl:reduction}
Consider an arbitrary sequence of $t$ edge updates,
and suppose that the flipping game (either the basic game or the $\Delta$-flipping game)
on this update sequence with any $r$ resets performs at most $k(t + r)$ edge flips, for any parameter $r$.
Then for any $\Delta \ge k$, the BF algorithm
with outdegree threshold $\Delta$, performs at most $(k t)/(1 - k/(\Delta+1))$ edge flips.
\end{lemma}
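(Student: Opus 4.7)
The plan is to simulate BF using the flipping game and invoke the hypothesis on the simulator's flip count. First I would fix the operation sequence $\sigma$ of $t$ edge updates and run BF on it, letting $r$ denote the number of resets BF performs. Then construct a companion sequence $\sigma'$ for the flipping game that consists of exactly the $t$ edge updates of $\sigma$, interleaved with $r$ injected vertex updates placed precisely at the (vertex, time) coordinates where BF triggers its resets (a BF cascade gets unrolled into a contiguous block of injected vertex updates in the order BF processes them).

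Next I would establish, by induction on the prefix of operations, a coupling showing that BF and the flipping game on $\sigma'$ maintain identical edge orientations after every operation. Both start from the empty graph, edge updates act identically on both sides, and each BF reset flips all outgoing edges of its vertex, which is exactly what a flipping-game reset at the injected vertex update does. This works for the basic flipping game (which always resets at an update) and equally for the $\Delta$-flipping game, because BF fires a reset only at a vertex whose outdegree currently exceeds $\Delta$, so the $\Delta$-flipping game's reset condition is satisfied at every injected update. Let $f$ be the common number of edge flips on both sides.

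With the coupling in hand, the conclusion follows from two inequalities. On the one hand, each BF reset happens at a vertex of outdegree $\ge \Delta+1$, and a reset flips all these outgoing edges, so $f \ge (\Delta+1)\,r$. On the other hand, the hypothesis applied to the flipping game on $\sigma'$ (with $t$ edge updates and $r$ resets) gives $f \le k(t+r)$. Combining yields $(\Delta+1)\,r \le k(t+r)$, and since $\Delta \ge k$ guarantees $\Delta+1-k > 0$, solving gives $r \le kt/(\Delta+1-k)$. Plugging back, $f \le k(t+r) \le kt(\Delta+1)/(\Delta+1-k) = kt/(1 - k/(\Delta+1))$, as claimed.

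The main obstacle is the coupling step: one must verify carefully that the injected vertex updates reproduce BF's cascade faithfully, in particular that the outdegree threshold condition is preserved at every injected point so that the $\Delta$-flipping game actually resets where asked. Once this bookkeeping is clean, the rest is elementary algebra on the two bounds for $f$.
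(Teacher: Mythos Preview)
Your proposal is correct and follows essentially the same approach as the paper: simulate BF by injecting a reset into the flipping game at each vertex where BF's cascade resets, then combine the hypothesis $f \le k(t+r)$ with the observation that every BF reset flips at least $\Delta+1$ edges (so $r \le f/(\Delta+1)$) to solve for $f$. Your algebra routes through $r$ first whereas the paper substitutes directly into the bound for $f$, but the argument is the same.
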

\begin{proof}
We simulate the BF algorithm using the flipping game by resetting every vertex  whose outgoing edges are flipped by the reset cascade of the BF
algorithm. %[[H: we need to make sure that this is defined]]
Let $r$ be the total number of resets that the simulation performs and let $f$ be the total number of edge flips.
Since each reset of the simulation flips at least $\Delta+1$ edges, $r\le f/(\Delta+1)$.
By our assumption on the flipping game we have  $f\le k(t + r)$.
The lemma follows by substituting the upper bound on $r$ into this inequality and rearranging.
\QED
\end{proof}
%Lemma \ref{cl:reduction} shows that if the flipping game has an amortized update time of $k$,
%then the edge orientation problem of \cite{BF99} with outdegree threshold slightly larger than $k$ has  an amortized update time slightly larger than $k$.
For example, if we set $\Delta = 2k-1$, the amortized update time of the simulation (per edge update), and hence of the BF algorithm,
is at most $2k$. This shows that we only lose a factor of 2 when amortizing over the edge updates rather than over both the edge updates and the reset operations.

\subsection{A reduction from the flipping game to the   edge orientation problem} \label{reduc}
\begin{lemma}  \label{basicflip}
Suppose we can maintain a $\Delta$-orientation for some sequence of $t$ edge updates while doing $f$ edge flips, starting with the empty graph.
%and maintain an orientation of outdegree at most
%$\Delta$
Then the flipping game on this update sequence with any $r$ resets performs at most $t + f + 2\Delta r$ edge flips,
for any $r$.
\end{lemma}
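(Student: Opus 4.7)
The plan is to adapt the potential function argument of Brodal--Fagerberg (which is also used in Section \ref{sec211} in the analysis of the new centralized algorithm). Fix the reference $\Delta$-orientation and call an edge \emph{good} if its orientation in the flipping game coincides with its orientation in the reference, and \emph{bad} otherwise. Let $\Psi$ denote the number of bad edges present in the current graph. Since we start from the empty graph, $\Psi = 0$ initially, and $\Psi \ge 0$ at all times.

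Next I would track how each type of event can change $\Psi$. An edge insertion creates one edge, which is either good or bad, so $\Psi$ increases by at most $1$. Each edge flip performed by the reference $\Delta$-orientation reverses exactly one edge, so $\Psi$ changes by at most $+1$. Edge deletions can only remove bad edges, so they may only decrease $\Psi$. Hence the total $\Psi$-increase accumulated from the $t$ updates and the $f$ reference flips is at most $t+f$.

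The interesting case is a reset performed by the flipping game on a vertex $v$ whose current outdegree is $d$. The reset flips all $d$ outgoing edges of $v$, so good edges outgoing of $v$ become bad and bad edges outgoing of $v$ become good. In the reference orientation, at most $\Delta$ edges are outgoing of $v$, so at most $\Delta$ of $v$'s currently outgoing edges are good; consequently at least $d-\Delta$ of them are bad and the reset decreases $\Psi$ by at least $(d-\Delta) - \Delta = d - 2\Delta$. Summing over the $r$ resets with outdegrees $d_1,\dots,d_r$, the total $\Psi$-decrease attributable to resets is at least $\sum_{i=1}^{r}(d_i - 2\Delta) = F - 2\Delta r$, where $F = \sum_i d_i$ is the total number of edge flips performed by the flipping game.

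Combining these and using $\Psi_{\text{final}} \ge 0 = \Psi_{\text{initial}}$ yields
\[
0 \;\le\; \Psi_{\text{final}} \;\le\; (t+f) - (F - 2\Delta r),
\]
which rearranges to $F \le t + f + 2\Delta r$, as claimed. The only subtle point is that the per-reset lower bound on the $\Psi$-drop can be negative (when $d_i < 2\Delta$); this is fine because we only use it additively across all resets, not per reset, exactly as in the global accounting in Section \ref{sec211}. No further structural properties of the flipping game are needed beyond the fact that at most $\Delta$ outgoing edges of $v$ can be good at reset time, which is an immediate consequence of the definition of a $\Delta$-orientation.
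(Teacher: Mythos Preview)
Your proof is correct, but it takes a different route from the paper's. The paper proves Lemma~\ref{basicflip} via a per-edge token (charging) argument: for each reset on $v$ it places two tokens on every edge currently outgoing of $v$ in the reference $\Delta$-orientation, one token on each edge when inserted, and one token on each edge when the reference flips it; then it argues, by looking at maximal runs of flips of a fixed edge $e$ during which the reference orientation of $e$ is constant, that the tokens on $e$ cover all flipping-game flips of $e$. Your argument is instead the global good/bad-edge potential bound---precisely the device the paper reserves for the \emph{next} lemma (Lemma~\ref{boundedflip}) and for Section~\ref{sec211}.

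Both approaches yield the identical bound $t+f+2\Delta r$. Your potential argument is arguably cleaner here: it needs only the single observation that a reset of outdegree $d$ changes $\Psi$ by at most $2\Delta-d$, and then a one-line telescoping. The paper's token argument is more combinatorial and localizes the accounting to individual edges, which makes the $2\Delta r$ term visible as ``two tokens per reference-outgoing edge per reset''; this edge-local viewpoint is not needed for the lemma but can be conceptually useful. One small phrasing quibble: ``edge deletions can only remove bad edges'' is not literally true (they may remove good edges too); what you need, and what your argument actually uses, is merely that deletions never increase~$\Psi$.
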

\begin{proof}
We charge the edge flips performed by reset operations of the flipping game to edge flips performed to maintain the $\Delta$-orientation.
Following  a reset on $v$, we place two tokens on every  edge that is outgoing of $v$ in the
$\Delta$-orientation. When the $\Delta$-orientation flips an edge $e$  we place a  token on $e$.
When an edge $e$ is inserted to the graph we place a token on $e$.
The total number of tokens placed on edges is $t + f + 2\Delta r$.
We claim that the number of tokens
 placed on $e$ is no smaller than the number of times $e= (u,v)$ flips in the flipping game (so these tokens ``pay'' for these flips).
Consider
a maximal sequence $\sigma$ of flips of $e$ that occur while the orientation of $e$ in
the $\Delta$-orientation does not change. Assume without loss of generality that $e$ is oriented from
$u$ to $v$ by  the $\Delta$-orientation during  $\sigma$.
Let $x$ be the number of flips in  $\sigma$.
During the time span of  $\sigma$ both $u$ and $v$ were reset at least $\floor{x/2}$ times.
Each such reset of $u$ places $2$ tokens on $e$. The total number of these tokens is at least $x-1$.
The flip of $e$ performed by the $\Delta$-orientation or its insertion just before $\sigma$ starts contributes an additional token.
\QED
\end{proof}

The number of edge flips per edge update performed for maintaining the $\Delta$-orientation is $f/t$
whereas the number of edge flips per operation of the flipping game is
$(t + 2f + 2\Delta r)/(t + r) = O(\max\{\Delta, f/t\})$.
Thus the flipping game does not depend on $\Delta$ but its amortized time bound does depend on $\Delta$.

To remove the dependency of the amortized time of the flipping game on
 the outdegree threshold $\Delta$, we modify the game slightly and make  it aware of $\Delta$ as follows.
We define the \emph{$\Delta$-flipping game} in which when we reset a vertex we flip all its outgoing edges only if
there are more than  $\Delta$ such edges.
%The proof of the next lemma is deferred to App.\ \ref{app:4}.
Note that by setting $\Delta' = 3\Delta-1$,
we get that the total number of flips of the $\Delta'$-flipping game is at most $3(t+f)$.
This bound is the same, up to a constant, as for maintaining the $\Delta$-orientation, even though we also performed $r$ reset operations.
%
%Consider the statement of Claim \ref{basicflip}.
%While the amortized number of flips in the existential $\Delta$-orientation is $f/t$,
%the amortized number of flips in the flipping game is $(t + 2f + 2\Delta r)/(t + r) = O(\max\{\Delta, f/t\})$.
%In the worst-case, the amortized bound of the flipping game depends linearly on the two parameters of the existential $\Delta$-orientation:
%the outdegree $\Delta$ and the amortized bound $f/t$, in which case it is optimized when both parameters are equal.
%Thus, although the flipping game does not depend on the outdegree threshold $\Delta$, this outdegree threshold comes into play in its amortized bound.
%To remove the dependency on the outdegree threshold in the amortized bound, we will take it into account into the flipping game itself.
%More specifically, consider the generalization of the flipping game, hereafter the \emph{$\Delta$-flipping game},
%where only vertices whose outdegree exceed a certain threshold $\Delta$ are being reset.
%(Following a reset operation on a vertex of outdegree at most $\Delta$, no edges are flipped.)
\begin{lemma} \label{boundedflip}
%Suppose we can perform a sequence of $t$ edge updates and maintain an orientation of outdegree at most
%$\Delta$ ($\Delta$-orientation) while doing $f$ edge flips starting with the empty graph.
Suppose we can maintain a $\Delta$-orientation for some sequence of $t$ edge updates while doing $f$ edge flips, starting with the empty graph.
Then the $\Delta'$-flipping game on this update sequence with any $r$ resets performs at most $(t+f) (\Delta' + 1)/ (\Delta' + 1 - 2\Delta)$ edge flips,
for any parameters $r$ and $\Delta' \ge 2\Delta$.
\end{lemma}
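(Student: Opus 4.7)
The plan is to mimic the potential-function argument of Brodal--Fagerberg described earlier in the paper, but to globalize it so that the bound is proved jointly for all flipping resets instead of reset-by-reset. Following \cite{BF99}, call an edge \emph{good} if its current orientation in the $\Delta'$-flipping game matches its orientation in the reference $\Delta$-orientation, and \emph{bad} otherwise; let $\Psi$ denote the number of bad edges in the current graph. Initially $\Psi=0$, and by definition $\Psi \geq 0$ at all times.

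I would first account for the three easy sources of change to $\Psi$. Each edge insertion creates at most one new disagreement, so it increases $\Psi$ by at most $1$; each edge deletion can only decrease $\Psi$; and each of the $f$ edge flips performed by the reference $\Delta$-orientation changes the good/bad status of a single edge, so it increases $\Psi$ by at most $1$. Altogether, the total increase to $\Psi$ contributed by events outside the game's resets is at most $t+f$.

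The core step is to analyze a \emph{flipping reset}: a reset of a vertex $v$ whose current outdegree $d>\Delta'$ (so the game actually flips $d$ edges). Since the reference is a $\Delta$-orientation, $v$ has at most $\Delta$ outgoing edges in the reference at this moment, so at most $\Delta$ of the $d$ outgoing edges of $v$ in the game are good, and consequently at least $d-\Delta$ of them are bad. After the reset, these $d$ edges become incoming at $v$, and their good/bad status flips: at most $\Delta$ previously good edges become bad, while at least $d-\Delta$ previously bad edges become good. Thus a flipping reset decreases $\Psi$ by at least $(d-\Delta)-\Delta = d-2\Delta$.

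Let $F$ be the total number of edge flips performed by the $\Delta'$-flipping game and $R$ the number of flipping resets it executes (non-flipping resets neither change $\Psi$ nor contribute to $F$, so they can be ignored). Summing the per-reset potential decrease over all flipping resets and using $\Psi \geq 0$, I get
\[
F - 2\Delta \cdot R \;\leq\; t+f.
\]
Since every flipping reset flips at least $\Delta'+1$ edges, we also have $R \leq F/(\Delta'+1)$. Plugging this in yields $F \leq (t+f) + 2\Delta F/(\Delta'+1)$, and rearranging gives $F \leq (t+f)(\Delta'+1)/(\Delta'+1-2\Delta)$, as required; the assumption $\Delta' \geq 2\Delta$ keeps the denominator positive. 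The main conceptual obstacle is the one the authors already flag: the standard BF argument charges each reset \emph{individually} against the potential, which would fail here because resets on vertices of moderate outdegree need not decrease $\Psi$; the fix is the global double-counting via the lower bound $R \leq F/(\Delta'+1)$, which lets a single inequality on the aggregate replace the per-reset argument.
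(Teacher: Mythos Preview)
Your proof is correct and uses the same potential function $\Psi$ (number of bad edges) as the paper's proof; the core insight---that a flipping reset on a vertex with outdegree $d>\Delta'$ decreases $\Psi$ by at least $d-2\Delta$---is identical. The only difference is bookkeeping: the paper lower-bounds each reset's drop by the constant $\Delta'+1-2\Delta$, thereby bounding the number $R$ of flipping resets first, and then separately counts good$\to$bad and bad$\to$good transitions to bound the total flips; you instead sum the actual per-reset drops to get $F-2\Delta R\le t+f$ and eliminate $R$ via $R\le F/(\Delta'+1)$. Both routes yield the same final inequality, and yours is arguably a touch more direct. (One small remark: your closing paragraph about the ``per-reset'' argument failing is off---here each flipping reset \emph{does} individually decrease $\Psi$, so the paper's per-reset bound on $R$ is perfectly valid; the global-versus-local distinction you allude to matters for the anti-reset procedure of Section~\ref{sec211}, not for this lemma.)
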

%\section{Proof of Lemma \ref{boundedflip}} \label{app:4}
\begin{proof}
Our proof uses a potential function argument similar to the one used in Lemma 1 of \cite{BF99}.
We define an edge to be  \emph{good} if its orientation in the flipping game is the same as in
the $\Delta$-orientation and \emph{bad} otherwise.
We define the potential $\Psi$ to be
 the number of bad edges in the current graph.
Initially $\Psi = 0$.
 Each
insertion or a flip performed by the $\Delta$-orientation increases $\Psi$ by at most one, while edge deletions may only decrease $\Psi$.

Consider a reset of some vertex $v$ of outdegree greater than $\Delta'$.
By the definition of a $\Delta$-orientation at most $\Delta$ of $v$'s outgoing edges are good.
As a result of the flip these $\Delta$ edges may become bad, but
at least $\Delta' + 1 - \Delta$ edges were bad and become good.
It follows that as a result of the reset $\Psi$ decreases by at least $\Delta' + 1 - 2\Delta$.
This implies that
the total number of reset operations on vertices  with outdegree greater than $\Delta'$ is at most $(t + f) / (\Delta' + 1 - 2\Delta)$.
The total number of times a good edge becomes bad due to the resets is bounded by $\Delta (t + f) / (\Delta' + 1 - 2\Delta)$,
from which we conclude that the total number of times a bad edge becomes good due to the resets is bounded by $t + f + \Delta (t + f) / (\Delta' + 1 - 2\Delta)$.
Summarizing, the total number of flips made by the flipping game is bounded by
$$(t + f) (1 + 2\Delta/ (\Delta' + 1 - 2\Delta)) ~=~ (t+f) (\Delta' + 1)/ (\Delta' + 1 - 2\Delta).$$
\end{proof}
%
%\vspace{8pt}
%\\
%If delta = sqrt{log n}, then the amortized cost of Brodal is sqrt{log n}, which implies that the number of resets that we added in
%order to simulate Brodal alg is linear in the original number of updates. This means that the amortized cost is reduced to O(k).
%In general, the amortized cost of Brodal for outdegree delta is reduced from log n / delta to min{1,(log n / delta^2)} * k.
%It makes sense that we have a large gap at outdegree ~ c, where c is the amortized cost of the flipping game.
%That is, it makes sense that for outdegree larger than c (or more accurately, larger than some constant times c),
%the amortized time of Brodal will be constant (using the bootstrapping),
%whereas for outdegree smaller than c the amortized time will be much larger (matching the original bound of Brodal).
%To see why this makes sense, note that there is a big gap between outdegree 1 (for trees) and outdegree > 1.
%Brodal show a linear lower bound for outdegree 1, but for outdegree 2 no lower bound whatsoever exists!
\subsection{Applications} \label{appl}
As discussed in the introduction,
by using the flipping game instead of the BF algorithm, we obtain \emph{local} algorithms for several dynamic graph problems.
In this section we describe two such applications to the problems of
dynamic maximal matching and adjacency queries.
\vspace{2pt}
\\
{\bf Dynamic maximal matching.~}
The goal here is to maintain a maximal matching $M$ in a graph $G$ that undergoes edge insertions and deletions.
Following an edge insertion or a deletion of an edge  not in $M$, a maximal matching remains maximal.
The difficult operation is a deletion of an edge in $M$.
Following an edge deletion $(u,v)\in M$ both $u$ and $v$ become free, and if either $u$ or $v$ has a free neighbor then $M\setminus \{(u,v)\}$ is
not maximal anymore, and we must add edges from $u$ and $v$ to one of their free neighbors.

Neiman and Solomon \cite{NS13}
reduced this problem to the edge orientation problem as follows.
We maintain an edge orientation of $G$, and each vertex $v$ maintains its free incoming neighbors.
Following an edge deletion $(u,v)\in M$,   $u$ and $v$ perform the following operations. (We restrict attention to $u$ and describe what it does; $v$ performs the same operations.)
First $u$ notifies its out-neighbors that it is free. Then it checks whether its list of free in-neighbors is not empty.
If $u$ has a free in-neighbor $x$ then we add  the edge $(u,x)$ to $M$ and both $x$ and $u$ notify their out-neighbors that
they are now matched. Otherwise $u$ scans its out-neighbors for a free vertex. If $u$ finds a free out-neighbor $x$ then we add
$(x,u)$ to $M$ and both $x$ and $u$ notify their out-neighbors that
they are matched.

This reduction implies that from an algorithm that maintains a $\Delta$-orientation with an
update time of $T$ (either amortized or worst-case), we can get a dynamic algorithm for
maximal matching with an update time of $O(\Delta + T)$ (again, either amortized or worst-case).

The result of \cite{HTZ14}
shows that in a graph with arboricity bounded by $\alpha$ the BF algorithm
maintains an $O(\beta \alpha)$-orientation in amortized update time of
$T = O(\log(n/(\beta \alpha))/ \beta)$ for any parameter $\beta \ge 1$. (Refer to App.\ \ref{discuss} for more details.)
Using this tradeoff in the particular case of $\Delta \approx T$ (where $\Delta = \Theta(\beta \alpha)$),
we get a dynamic algorithm for maximal matching with $O(\alpha + \sqrt{\alpha \log n})$ amortized update time.
The drawback of the resulting algorithm is that it is not local. Indeed, this is because any algorithm for maintaining
$\Delta$-orientation is inherently non-local.

To get a local algorithm for dynamic maximal matching we use our (inherently local) flipping game.
As before, we maintain  an orientation and each vertex maintains its free in-neighbors.
But now, when a vertex $v$ scans its out-neighbors (either when $v$ changes its state from matched to unmatched or vice versa, or when $v$ looks for a free out-neighbor),
then we also \emph{reset} $v$, thereby flipping all its outgoing edges.
The  total running time of the resulting local algorithm for dynamic maximum matching
 is linear in the number of edge flips made by the underlying flipping game.

To bound the number of edge flips made by the flipping game,
note that we reset at most a constant number of vertices per edge update.
By Lemma \ref{basicflip}, combined with the result of \cite{HTZ14} for the case $\Delta \approx T$,
we conclude that the amortized number of flips made by the flipping game is  $O(\alpha + \sqrt{\alpha \log n})$.

The flipping game can be easily distributed. Resetting a vertex requires one communication round,
and the message complexity is asymptotically the same as the runtime in the centralized setting.
%Hence Theorem \
Summarizing, we have proved the following result. %The following theorem summarized this result.
\begin{theorem} \label{localmaxmatch}
For any arboricity $\alpha$ preserving sequence, there is a \emph{local} algorithm for maintaining a maximal matching
on the corresponding dynamic $n$-vertex graph $G$ with an amortized update time of $O(\alpha + \sqrt{\alpha \log n})$.
%where $\alpha$ is an upper bound on
%the arboricity of $G$ at all times.
The space usage of the algorithm is linear in the graph size.
Moreover, there is a distributed algorithm for maintaining a maximal matching with
an amortized message complexity of $O(\alpha + \sqrt{\alpha \log n})$ and a constant worst-case update time.
\end{theorem}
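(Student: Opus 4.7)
The plan is to combine three ingredients already available: the reduction of dynamic maximal matching to edge orientations of \cite{NS13} described in the preceding paragraph, the flipping game of Section \ref{appgen}, and the tradeoff for the BF algorithm due to \cite{HTZ14}, namely that for arboricity at most $\alpha$ and any $\beta \ge 1$, BF maintains an $O(\beta\alpha)$-orientation with amortized update time $T = O(\log(n/(\beta\alpha))/\beta)$.

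First I would spell out the algorithm. Maintain an orientation by playing the flipping game, and at each vertex $v$ keep the list of its free in-neighbors. An edge insertion orients the new edge and triggers at most one list update; if both endpoints are free, match them. The deletion of an unmatched edge only requires removing it from the orientation and from the relevant list. When a matched edge $(u,v)$ is deleted, each of $u,v$ first checks its own free in-neighbor list (a purely local lookup) and, if it is empty, scans its outgoing neighbors to look for a free mate, \emph{and simultaneously resets itself}, flipping all those outgoing edges at zero marginal cost. Whenever a vertex changes matching status, it notifies its out-neighbors so they update their free in-neighbor lists. Note that all activity per update is confined to a distance-$1$ ball around the two endpoints, so the algorithm is local in the sense of Section \ref{sec14}, and the space is $O(|E|)$ since each vertex only stores lists indexed by its adjacent edges.

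For the amortized bound I would invoke Lemma \ref{basicflip}. Each update triggers at most $O(1)$ resets, so $r = O(t)$. Instantiating the \cite{HTZ14} guarantee gives $\Delta = O(\beta\alpha)$ and $f = O(t \log(n/(\beta\alpha))/\beta)$. By Lemma \ref{basicflip}, the number of flipping-game edge flips is bounded by $t + f + 2\Delta r = O\!\left(t\bigl(1 + \log(n/(\beta\alpha))/\beta + \beta\alpha\bigr)\right)$. The centralized running time is linear in this quantity plus the cost of outgoing-neighbor scans, which are of size $O(\mathrm{outdegree})$ and are paid for by the reset performed at the same step. Choosing $\beta$ to balance $\beta\alpha$ against $\log(n/(\beta\alpha))/\beta$, namely $\beta = \Theta\!\bigl(\sqrt{(\log n)/\alpha}\bigr)$ whenever this is at least $1$ and $\beta=1$ otherwise, yields an amortized update time of $O(\alpha + \sqrt{\alpha \log n})$ per operation.

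For the distributed claim, I would observe that a reset is a single round of parallel messages along $v$'s outgoing edges; a status notification is likewise one round; a free in-neighbor check is local; and the out-neighbor scan for a free mate can be performed in one round of parallel messaging. Hence each update is handled in $O(1)$ \emph{worst-case} rounds, while the \emph{message} count per update coincides (up to constants) with the centralized flip count, giving the stated amortized message complexity of $O(\alpha + \sqrt{\alpha \log n})$. The main obstacle I anticipate is ensuring that maintaining the free in-neighbor lists under status changes does not leak extra non-local communication beyond the reset budget: I would handle this by piggybacking each status-change notification on the same round in which neighbors are being contacted anyway, so that the induced messages are charged either to the triggering update (constantly many per update) or absorbed into an out-neighbor scan that has already paid for a full reset.
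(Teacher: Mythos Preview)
Your proposal is correct and follows essentially the same approach as the paper: the Neiman--Solomon reduction with free in-neighbor lists, replacing orientation maintenance by the flipping game (reset on every out-neighbor scan or status change), bounding flips via Lemma~\ref{basicflip} with $r=O(t)$, plugging in the \cite{HTZ14} tradeoff, and balancing $\beta$ so that $\Delta\approx T$. Your distributed observations (one round per reset or notification, message count $=$ flip count) also match the paper's argument.
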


%{\bf Remark.} [[S: belongs in intro]][[H: did not touch these comments] The fastest local algorithm for maintaining a maximal matching on an $n$-vertex dynamic graph is due to %\cite{NS13},
%having an amortized update time of $O(\sqrt{m})$ even in dynamic forests!
%Corollary \ref{localmaxmatch} thus provides an exponential improvement over the state-of-the-art in this context.
\noindent
\vspace{2pt}
{\bf Adjacency queries.}
In this application we want to maintain a
deterministic linear space data structure that allows efficient adjacency queries in a dynamic graph.
(If we use dynamic perfect hash tables to represent adjacency lists then the data structure is of linear size but randomized.)
Although the problem of supporting adjacency queries is inherently local, the state-of-the-art deterministic solution 
(described next) relies on the inherently non-local task of maintaining a low outdegree orientation.

% for supporting adjacency queries.

The BF algorithm with outdegree threshold $O(\alpha)$ has an amortized update time of $O(\log n)$.
Such an orientation allows to support adjacency queries in $O(\alpha)$ worst-case time, since to decide if
the graph contains the edge $(u,v)$, it suffices to search $u$ among the out-neighbors of $v$, and $v$ among the out-neighbors of $u$.
Later Kowalik~\cite{Kowalik07} proved that for outdegree threshold $O(\alpha  \log n)$, the amortized update time of the BF algorithm is constant.
Kowalik noted that if the out-neighbors of each vertex are stored
in a  balanced search tree, then the amortized update time increases from $O(1)$ to $O(\log \alpha + \log\log n)$ (each edge flip requires an insertion to
and a deletion from a balanced search tree, and similarly for edge insertions) but the worst-case query time becomes $O(\log \alpha + \log\log n)$.
When the arboricity bound is polylogarithmic in $n$, these bounds are $O(\log \log n)$, and
using more sophisticated data structures, one can improve this bound to $O(\log\log \log n)$ under the RAM model.  % for details see~\cite{Kowalik07}.

%The problem of supporting adjacency queries is inherently local, yet the solution described above relies on the inherently non-local task of maintaining a low outdegree %orientation.
Next, we describe a local data structure for supporting adjacency queries.  % it is instructive to start with an overview of a non-local data structure.
To this end we use the $\Delta$-flipping game, for $\Delta = O(\alpha  \log n)$.
Specifically, to perform an adjacency query with $(u,v)$, we start by
 resetting  $u$ and $v$, thereby flipping the out-neighbors of $u$ (resp.\ $v$) if it has more than $\Delta$ out-neighbors.
Following these resets, $u$ and $v$ have at most $\Delta$ out-neighbors
and we answer the  query by scanning these lists of out-neighbors as before.
To speed up the query further
we keep the out-neighbors of each vertex $v$ with at most $\Delta$ out-neighbors in a balanced search tree as described above.
(More concretely, we start building the tree at $v$ when $v$'s outdegree drops below $2\Delta$ and once we have the tree ready we maintain it as
long as the outdegree of $v$ is below $2\Delta$. This guarantees that we always have a tree ready when the outdegree is at most $\Delta$,
while keeping the cost of constructing the trees in check.)  %within a constant factor of the cost of the edge flips.)

By Lemma \ref{boundedflip} combined with the result of \cite{Kowalik07},
 the amortized number of edge flips made by the $\Delta$-flipping game is constant.
Hence both adjacency queries and edge updates take $O(\log \alpha + \log\log n)$ amortized time.
So our $\Delta$-flipping game provides a local data structure for adjacency queries at the cost of
having only an amortized guarantee for the query time rather than a worst-case guarantee.
%This completes the proof of Theorem \ref{localadj}.
%\QED
%\end{proof}
%We prove the following result. whose proof is deferred to App. \ref{complete}.
Summarizing, we have proved the following result.
\begin{theorem}  \label{localadj}
For any arboricity $\alpha$ preserving sequence, there is a (deterministic) \emph{local} algorithm for  supporting adjacency queries
in the corresponding dynamic $n$-vertex graph $G$ with an amortized update time of $O(\log \alpha + \log\log n)$.
%where $\alpha$ is an upper bound on
%the arboricity of $G$ at all times.
The space usage of the algorithm is linear in the graph size.
\end{theorem}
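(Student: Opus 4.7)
The plan is to combine the $\Delta$-flipping game with the balanced-search-tree representation of out-neighbor lists used by Kowalik~\cite{Kowalik07}, setting $\Delta = c \cdot \alpha \log n$ for a sufficiently large constant $c$. First I would invoke Kowalik's result: with outdegree threshold $\Delta = O(\alpha \log n)$, the BF algorithm maintains a $\Delta$-orientation doing $f = O(t)$ flips for $t$ updates. Plugging this into Lemma~\ref{boundedflip} with $\Delta' = 3\Delta$ (so $\Delta' + 1 - 2\Delta \ge \Delta$) shows that the $\Delta'$-flipping game performs $O(t + r)$ edge flips in total across any sequence of $t$ edge updates and $r$ resets. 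Since each adjacency query triggers at most two resets, and each edge update at most two, the amortized number of edge flips per operation (query or update) is $O(1)$.

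Next I would specify the tree maintenance scheme precisely and pay its cost. Each vertex $v$ keeps its out-neighbor list; when $v$'s outdegree first drops below $2\Delta$ we begin constructing a balanced BST over its out-neighbors, and we destroy it once the outdegree rises to $2\Delta$ again. Between two such construction events $v$ must have undergone $\Omega(\Delta)$ outdegree changes (insertions of outgoing edges or flips making $v$ gain out-neighbors), so the $O(\Delta)$ cost of one build can be charged to those $\Omega(\Delta)$ events, contributing $O(1)$ amortized per event. While the tree exists, each outgoing edge insertion/deletion/flip at $v$ incurs one BST update of cost $O(\log \Delta) = O(\log \alpha + \log\log n)$. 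Since the $\Delta'$-flipping game produces $O(1)$ amortized outgoing changes per operation, the amortized BST-maintenance cost per operation is $O(\log \alpha + \log\log n)$. The invariant to verify along the way is that whenever a vertex has outdegree $\le \Delta$, a valid BST over its out-neighbors is already available for querying.

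For an adjacency query $(u,v)$ I would then argue: reset $u$ and $v$ under the $\Delta'$-flipping game, which by the previous step costs $O(\log \alpha + \log\log n)$ amortized. After the resets, both $u$ and $v$ have outdegree $\le \Delta$, hence both have their BSTs available; we answer the query by two BST searches each costing $O(\log \Delta) = O(\log \alpha + \log\log n)$ in the worst case. Locality is immediate: a reset at $w$ touches only $w$ and its (at most $\Delta$) out-neighbors, and the BST rebuild at $w$ is local to $w$. Linear space is likewise immediate, since we store each edge $O(1)$ times (once in an adjacency list and, when applicable, once in a balanced BST of the tail).

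The main obstacle I anticipate is the bookkeeping of the hysteresis between thresholds $\Delta$ and $2\Delta$: one must argue that the trigger ``outdegree drops below $2\Delta$'' cannot fire so often that the $O(\Delta)$ rebuild cost dominates. Concretely, I would show that between two consecutive rebuilds at $v$ there are $\Omega(\Delta)$ outdegree-changing events affecting $v$, each of which we can already charge $O(1)$ units to under the flipping-game analysis; these extra $O(1)$-per-event charges then cover the amortized rebuild cost. Combined with the $O(\log \alpha + \log\log n)$ per-event BST maintenance cost and the $O(\log \alpha + \log\log n)$ query-time charge from two BST searches, this yields the claimed $O(\log \alpha + \log\log n)$ amortized bound and establishes Theorem~\ref{localadj}.
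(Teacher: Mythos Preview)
Your proposal is correct and follows essentially the same approach as the paper: play the $\Delta$-flipping game with $\Delta = \Theta(\alpha \log n)$, invoke Kowalik's $O(1)$-amortized-flip bound via Lemma~\ref{boundedflip}, and maintain balanced search trees over out-neighbor lists using a $\Delta$-vs-$2\Delta$ hysteresis so that a tree is ready whenever it is needed for a query. Your treatment of the rebuild amortization is in fact more explicit than the paper's.

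One small threshold mismatch to clean up: you play the $\Delta'$-flipping game with $\Delta' = 3\Delta$, so a reset on $v$ only guarantees outdegree $\le \Delta'$, not $\le \Delta$. Hence your BST hysteresis thresholds and the invariant ``a BST is available whenever the outdegree is $\le \Delta$'' should be stated in terms of $\Delta'$ (e.g., build when the outdegree drops below $2\Delta'$, guarantee availability at outdegree $\le \Delta'$). The paper sidesteps this by absorbing the constant factor and using a single symbol $\Delta$ for the flipping-game threshold throughout. Also, Lemma~\ref{boundedflip} actually gives a bound of $O(t+f)=O(t)$ flips independent of $r$; your stated $O(t+r)$ is not wrong but is looser than what the lemma delivers.
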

%\begin{proof}

%{\bf Remark.} [[S: belongs in intro]]
%The fastest local (deterministic) algorithm for supporting adjacency queries in an $n$-vertex dynamic graph is obtained by the aforementioned
%dictionary data structures [[S: really?]], having an update time of $O(\log \log n)$ even in forests.
%Corollary \ref{localmaxmatch} thus provides an exponential improvement over the state-of-the-art in this context for graphs of polylogarithmic arboricity.
%\vspace{5pt}
%\\
%{\bf Coordinate queries}
%[[S: this application appears in my paper with Ely and the others, and also in the Chinese (HTZ14) paper.
%I feel that this application is very minor, so I'm not eager to add details here (due to time constraints).
%However, I can do that if we have time left...]]
%{\bf Acknowledgement.} The second author thanks Ofer Neiman and David Peleg for interesting discussions.

%\bibliographystyle{latex8}

%\bibliographystyle{latex8}
%\bibliography{Orient2,MCMbibfile}

%\bibliography{sample-bibliography}

\begin{thebibliography}{10}\setlength{\itemsep}{-1ex}\small

\bibitem{ARVX12}
N.~Alon, R.~Rubinfeld, S.~Vardi, and N.~Xie.
\newblock Space-efficient local computation algorithms.
\newblock In {\em Proc. 23rd SODA}, pages 1132--1139, 2012.

\bibitem{AMZ97}
S.~R. Arikati, A.~Maheshwari, and C.~D. Zaroliagis.
\newblock Efficient computation of implicit representations of sparse graphs.
\newblock {\em Discrete Applied Mathematics}, 78(1-3):1--16, 1997.

\bibitem{AOSS18}
S.~Assadi, K.~Onak, B.~Schieber, and S.~Solomon.
\newblock Fully dynamic maximal independent set with sublinear update time.
\newblock In {\em Proc. of 50th STOC}, 2018 (to appear).

\bibitem{Awerbuch85}
B.~Awerbuch.
\newblock Communication-time trade-offs in network synchronization.
\newblock In {\em Proc. of 4th PODC}, pages 272--276, 1985.

\bibitem{ABP90}
B.~Awerbuch, A.~Baratz, and D.~Peleg.
\newblock Cost-sensitive analysis of communication protocols.
\newblock In {\em Proc. of 9th PODC}, pages 177--187, 1990.

\bibitem{ABP91}
B.~Awerbuch, A.~Baratz, and D.~Peleg.
\newblock Efficient broadcast and light-weight spanners.
\newblock {\em Technical Report CS92-22, Weizmann Institute}, October, 1992.

\bibitem{BE10}
L.~Barenboim and M.~Elkin.
\newblock Sublogarithmic distributed {MIS} algorithm for sparse graphs using
  nash-williams decomposition.
\newblock {\em Distributed Computing}, 22(5-6):363--379, 2010.

\bibitem{BE13}
L.~Barenboim and M.~Elkin.
\newblock {\em Distributed Graph Coloring: Fundamentals and Recent
  Developments}.
\newblock Synthesis Lectures on Distributed Computing Theory. Morgan {\&}
  Claypool Publishers, 2013.

\bibitem{BB17}
E.~Berglin and G.~S. Brodal.
\newblock A simple greedy algorithm for dynamic graph orientation.
\newblock In {\em Proc. of 28th ISAAC}, pages 12:1--12:12, 2017.

\bibitem{BS15}
A.~Bernstein and C.~Stein.
\newblock Fully dynamic matching in bipartite graphs.
\newblock In {\em Proc. 42nd ICALP}, pages 167--179, 2015.

\bibitem{BS16}
A.~Bernstein and C.~Stein.
\newblock Faster fully dynamic matchings with small approximation ratios.
\newblock In {\em Proc. 27th SODA}, pages 692--711, 2016.

\bibitem{BF99}
G.~S. Brodal and R.~Fagerberg.
\newblock Dynamic representation of sparse graphs.
\newblock In {\em Proc. of 6th WADS}, pages 342--351, 1999.

\bibitem{CHK16}
K.~Censor{-}Hillel, E.~Haramaty, and Z.~S. Karnin.
\newblock Optimal dynamic distributed {MIS}.
\newblock In {\em Proc. of PODC}, pages 217--226, 2016.

\bibitem{EMR14}
G.~Even, M.~Medina, and D.~Ron.
\newblock Best of two local models: Local centralized and local distributed
  algorithms.
\newblock {\em CoRR}, abs/1402.3796, 2014.

\bibitem{EMR15}
G.~Even, M.~Medina, and D.~Ron.
\newblock Distributed maximum matching in bounded degree graphs.
\newblock In {\em Proc. 16th ICDCN}, page~18, 2015.

\bibitem{GS17}
M.~Ghaffari and H.~Su.
\newblock Distributed degree splitting, edge coloring, and orientations.
\newblock In {\em Proceedings of the Twenty-Eighth Annual {ACM-SIAM} Symposium
  on Discrete Algorithms, {SODA} 2017, Barcelona, Spain, Hotel Porta Fira,
  January 16-19}, pages 2505--2523, 2017.

\bibitem{HTZ14}
M.~He, G.~Tang, and N.~Zeh.
\newblock Orienting dynamic graphs, with applications to maximal matchings and
  adjacency queries.
\newblock In {\em Proc. 25th ISAAC}, pages 128--140, 2014.

\bibitem{KKPS14}
T.~Kopelowitz, R.~Krauthgamer, E.~Porat, and S.~Solomon.
\newblock Orienting fully dynamic graphs with worst-case time bounds.
\newblock In {\em Proc. of 41st ICALP}, pages 532--543, 2014.

\bibitem{Kowalik07}
L.~Kowalik.
\newblock Adjacency queries in dynamic sparse graphs.
\newblock {\em Inf. Process. Lett.}, 102(5):191--195, 2007.

\bibitem{KK03}
L.~Kowalik and M.~Kurowski.
\newblock Short path queries in planar graphs in constant time.
\newblock In {\em Proc. 35th STOC}, pages 143--148, 2003.

\bibitem{LPP08}
Z.~Lotker, B.~Patt{-}Shamir, and S.~Pettie.
\newblock Improved distributed approximate matching.
\newblock In {\em Proc. 20th SPAA}, pages 129--136, 2008.

\bibitem{MV13}
Y.~Mansour and S.~Vardi.
\newblock A local computation approximation scheme to maximum matching.
\newblock In {\em Proc. 16th APPROX}, pages 260--273, 2013.

\bibitem{NS13}
O.~Neiman and S.~Solomon.
\newblock Simple deterministic algorithms for fully dynamic maximal matching.
\newblock In {\em Proc. of 45th STOC}, pages 745--754, 2013.

\bibitem{PPS16}
M.~Parter, D.~Peleg, and S.~Solomon.
\newblock Local-on-average distributed tasks.
\newblock In {\em Proc. 27th SODA}, pages 220--239, 2016.

\bibitem{Peleg00}
D.~Peleg.
\newblock {\em Distributed computing: a locality-sensitive approach}.
\newblock SIAM, 2000.

\bibitem{PS16}
D.~Peleg and S.~Solomon.
\newblock Dynamic $(1 + \epsilon)$-approximate matchings: {A} density-sensitive
  approach.
\newblock In {\em Proc. of 27th SODA}, pages 712--729, 2016.

\bibitem{PU89}
D.~Peleg and J.~D. Ullman.
\newblock An optimal synchronizer for the hypercube.
\newblock {\em {SIAM} J. Comput.}, 18(4):740--747, 1989.

\bibitem{RTVX11}
R.~Rubinfeld, G.~Tamir, S.~Vardi, and N.~Xie.
\newblock Fast local computation algorithms.
\newblock In {\em Proc. 2nd ICS}, pages 223--238, 2011.

\bibitem{Sol18}
S.~Solomon.
\newblock Local algorithms for bounded degree sparsifiers in sparse graphs.
\newblock In {\em Proc. of 9th ITCS}, pages 52:1--52:19, 2018.

\bibitem{Suomela13}
J.~Suomela.
\newblock Survey of local algorithms.
\newblock {\em {ACM} Comput. Surv.}, 45(2):24, 2013.

\end{thebibliography}

%\clearpage
\vspace{16pt}
\pagenumbering{roman}
\appendix
\centerline{\LARGE\bf Appendix}

%\section{Figures} \label{app:figs}
%\ignore{

%}
\section{More on the Edge Orientation Problem in Centralized Networks} \label{discuss}
In light of the asymptotic optimality of the BF algorithm discussed in Section \ref{sec131}, any existential bound for the problem translates into an algorithmic result with the same asymptotic guarantees on the outdegree and the amortized update time.
\cite{Kowalik07} proved an existential bound of $O(\alpha \log n)$-orientation with $O(1)$ amortized update time.
\cite{HTZ14} proved a general existential tradeoff:  $O(\beta \alpha)$-orientation with $O(\log(n/(\beta \alpha))/ \beta)$ amortized update time, for any   $\beta \ge 1$;
note that the results of \cite{BF99} and \cite{Kowalik07} provide the two extreme points on the tradeoff curve of \cite{HTZ14}.
%(BF also provided a complete tradeoff
%between these two parameters, but the tradeoff of \cite{BF99} is inferior to that of \cite{HTZ14}, and is thus omitted here.)
The tradeoff of \cite{HTZ14} in the particular case of $\alpha = O(1)$ and $\beta = \sqrt{\log n}$ bounds both the outdegree and the amortized update time by $O(\sqrt{\log n})$; nevertheless, to maintain constant outdegree (when $\alpha = O(1)$), the state-of-the-art update time is still $O(\log n)$, due to BF.

%\ignore{
%The first progress on Question \ref{q2} is due to
The edge orientation problem with worst-case time bounds was first studied in \cite{KKPS14}, where it was shown that one can maintain a $\Delta$-orientation with $O(\beta \alpha \Delta)$ \emph{worst-case} update time, for $\Delta = \inf_{\beta > 1} \{\beta \alpha + \lceil \log_\beta n \rceil\}$.
(A similar result was obtained by \cite{HTZ14}.)
\cite{BB17} presented a tradeoff of $O(\frac{\alpha \cdot \log^2 n}{\beta})$-orientation with $O(\beta)$ worst-case update time, for any $\beta = O(\log n)$, along with additional refinements over the previous work \cite{KKPS14,HTZ14}. We remark that the worst-case guarantees of \cite{KKPS14,HTZ14,BB17} are inferior to the aforementioned amortized guarantees, and in the particular case of $\alpha= O(1)$, none of these results provides an outdegree lower than $O(\frac{\log n}{\log\log n})$, even for a polynomial worst-case update time.

\subsection{Some applications of the edge orientation problem}
In this section we provide a very short (and non-exhaustive) overview on some of the applications of the edge orientation problem in the context of dynamic graph (centralized) algorithms. For a more detailed account on these applications, we refer to \cite{NS13,KKPS14,HTZ14,PS16}.

\cite{NS13} showed a reduction from the problem of maintaining maximal matching to the edge orientation problem.
%The update time for maintaining the matching is linear in the update time $T$ for maintaining
Specifically, if a $\Delta$-orientation can be maintained within update time $T$ (either amortized or worst-case),
then a maximal matching can be maintained within update time $O(T + \Delta)$ (again, either amortized or worst-case).
\cite{NS13} plugged the tradeoff of BF into this reduction, and obtained an amortized update time of $O(\log n / \log\log n)$ for maintaining maximal matching
in graphs of  low arboricity.
By plugging their own improved tradeoff, \cite{HTZ14} reduced the amortized update time to $O(\sqrt{\log n})$.
A worst-case update time of $O(\log n)$ for this probelm was obtained by \cite{KKPS14}, using their result for the edge orientation problem.
%which maintains low outdegree orientations with low worst-case update time bounds.
%(Improving the state-of-the-art for the edge orientation problem (either amortized or worst-case) will yield similar improvements to the dynamic maximal matching problem.
Note also that a maximal matching naturally translates into a 2-approximate vertex cover, and this translation can be easily maintained dynamically.
The edge orientation problem of \cite{BF99} was shown to be useful also in other dynamic graph problems,
such as distance oracles, approximate matching, and coordinate queries; see \cite{KK03,KKPS14,HTZ14,BS15,BS16} for more  details.

%\section{Optimality of the flipping game} \label{app:twiceopt}

%\section{Proofs (and other omitted parts) from Section 2.3}

%\smallskip

%\section{Dynamic Adjacency Queries Made Local (Proof of Theorem \ref{localadj})} \label{complete}

\end{document}